\newcommand{\ignore}[1]{}
\newcommand{\remove}[1]{}
\newcommand{\comment}[1]{}
 \newtheorem{theorem}{Theorem}[section]
 \newtheorem{corollary}[theorem]{Corollary}
 \newtheorem{lemma}[theorem]{Lemma}
 \newtheorem{definition}{Definition}[section]
 \newcommand{\qed}{\nopagebreak \hfill $\Box$}
 \newenvironment{proof}{\par \noindent {\bf Proof}:}{\qed \par}
\begin{document}

%

\newcommand{\fig}[3] 

\newcommand{\aline}{\centerline{\rule{8.5in}{0.1mm}}}
\newcommand{\hl}{\hat{\lambda}}
\newcommand{\eps}{\varepsilon}
\newcommand{\vp}{\varphi}
\newcommand{\cm}[1]{}
\newcommand{\lset}{\left \{}
\newcommand{\rset}{\right \}}
\newcommand{\ceil}[1]{\lceil #1 \rceil }

\def \From{From }
\newcommand{\ca}{\mbox{$\cal A$}}
\newcommand{\Sa}{\mbox{$S_{\cal A}$}}

\newcommand{\ci}{\mbox{${\cal I}$}}
\newcommand{\co}{\mbox{$\cal O$}}
\newcommand{\lao}{\lambda_1}
\newcommand{\boa}{\bar{O}_A}
\newcommand{\oo}{{\cal O}_1}
\newcommand{\hoo}{\hat{{\cal O}_1}}
\newcommand{\lat}{\lambda_2}
\newcommand{\tlat}{\tilde{\lambda}_2}
\newcommand{\tw}{\tilde{W}}
\newcommand{\te}{\tilde{\eps}}
\newcommand{\ot}{{\cal O}_2}
\newcommand{\las}{\lambda^*}
\newcommand{\wos}{W^*_1}
\newcommand{\wo}{W_1}
\newcommand{\wts}{W^*_2}
\newcommand{\wt}{W_2}
\newcommand{\ha}{\hat{A}}
\newcommand{\hao}{\hat{A_1}}
\newcommand{\haso}{\hat{A^*_1}}
\newcommand{\ao}{A_1}
\newcommand{\at}{A_2}
\newcommand{\aos}{A^*_1}
\newcommand{\ats}{A^*_2}
\newcommand{\amha}{A_1 \setminus \hat{A_1}}
\newcommand{\asmha}{A^*_1 \setminus \hat{A^*_1}}
\newcommand{\azj}{A_{0,j}}
\newcommand{\aj}{A_j}
\newcommand{\oomh}{{\oo} \setminus {\hoo}}
\newcommand{\ko}{k_1}
\newcommand{\kos}{k^*_1}
\newcommand{\xos}{x^*_1}
\newcommand{\orpim}{{\cal O}_b}
\newcommand{\kt}{k_1}
\newcommand{\kts}{k^*_2}
\newcommand{\xts}{x^*_1}
\newcommand{\rpim}{R(\Pi,b)}
\newcommand{\prob}[1]{\Pr\left[\mbox{#1}\right]}
\newcommand{\argmax}{\mbox{argmax}}
\newcommand{\argmin}{\mbox{argmin}}

\newcommand{\mO} {\mathcal{O}}
\newcommand{\cA} {\mathcal{A}}
\newcommand{\cI} {\mathcal{I}}

\newcommand{\myparagraph}[1]{\par\smallskip\par\noindent{\bf{}#1:~}}
\newcommand{\negA}{\vspace{-0.07in}}  
\newcommand{\negB}{\vspace{-0.12in}}  
\newcommand{\negC}{\vspace{-0.18in}}  
\newcommand{\posA}{\vspace{0.08in}}
\newcommand{\mysection}[1]{\section{#1}}
\newcommand{\mysubsection}[1]{\subsection{#1}}
\newcommand{\mysubsubsection}[1]{\subsubsection{#1}}

\newenvironment{sketch}{\noindent {\bf Proof sketch: } }%

\newtheorem{thm}{Theorem}[section]
\newtheorem{cor}{Corollary}
\newtheorem{alg}{Algorithm}

\title{On Lagrangian Relaxation and Reoptimization Problems
\footnote{A preliminary version of this paper appeared in the Proceedings of the
6th Workshop on Approximation and Online Algorithms, Germany, September 2008.
Research supported
 by the Israel Science Foundation
(grant number 1574/10), and
by the Ministry of Trade and Industry MAGNET program through the NEGEV
 Consortium.}}

\author{
Ariel Kulik \thanks{Computer Science Department, Technion,
Haifa 32000, Israel. \mbox{E-mail: {\tt \{kulik,hadas,galtamir\}@cs.technion.ac.il}}}
\and Hadas Shachnai\thanks{Corresponding author.}
\and Gal Tamir}

\maketitle

\begin{abstract}

We prove a general result demonstrating the power of Lagrangian relaxation in solving constrained maximization problems with arbitrary objective functions.
This yields a unified approach for solving a wide class of {\em subset selection} problems with linear constraints.
Given a problem in this class and some small $\eps \in (0,1)$,
we show that if there exists an $r$-approximation algorithm for the Lagrangian relaxation of the problem, for some $r \in (0,1)$, then our technique achieves a ratio of $\frac{r}{r+1} -\! \eps$ to the optimal, and this ratio is tight.

The number of calls to the $r$-approximation algorithm, used by our algorithms, is {\em linear} in the input size and in $\log (1 / \eps)$ for inputs with cardinality constraint, and polynomial in the input size and in $\log (1 / \eps)$ for inputs with arbitrary linear constraint. Using the technique we obtain (re)approximation algorithms for natural (reoptimization) variants of classic subset selection problems, including real-time scheduling,
the {\em maximum generalized assignment problem (GAP)} and maximum weight
independent set.
\comment{
In addition using that technique we obtain a general $(1, \alpha)$-reoptimization scheme for subset selection problems.
In the Reoptimization version of a problem we are require to resolve the problem when a previous solution is presented with minimal changes (transition cost).
Each new solution is measured by tow objective functions, the original problem's objective profit function $p(x)$ and the transition cost function $\delta(x)$.
When approximating Reoptimization problem we compare our solution to the optimal solution $OPT$
(the optimal solution has the minimal transition cost among all the optimal solution of $\Pi$), $OPT=\{S|p(S)=p(\mO), \delta(S)=\argmin_{p(S)=p(\mO)}{\delta(S)}\}$
thus a solution $S$ is a $(r_1, r_2 )$-reapproximation if its objective functions approximated compared to the optimal solution $OPT$,
that is $\delta(S) \leq r_1\cdot \delta (OPT)$
and $p(S) \geq r_2 \cdot p(OPT)$. We use that scheme to obtain a reapproximation algorithm to the "Cloud Provider problem" and the "Video on Demand problem".
}
\end{abstract}




\section{Introduction}

Lagrangian relaxation is a fundamental technique in combinatorial optimization. It has been used extensively in the design of approximation algorithms for a variety of problems (see e.g.,\cite{GR96,G96,KR00,JV01,KPS06,BBGS08} and a comprehensive survey in \cite{M07}).
In this paper we prove a general result demonstrating the power of
Lagrangian relaxation
in solving constrained maximization problems of the following form.
Given a universe $U$, a weight function $w:U \rightarrow \mathbb{R^+}$,
a function $f:U \rightarrow \mathbb{N}$ and an integer $L \geq 1$, we
want to solve
\begin{eqnarray}
\label{prb:general_constraint}
\Pi:& {\max}_{s \in U} & f(s) \\
\nonumber
&\mbox{subject to:} & w(s) \leq L.
\end{eqnarray}
We solve $\Pi$ by finding an
efficient solution for the Lagrangian relaxation of $\Pi$, given by
\begin{equation}
\label{prb:general_lagrange}
\Pi (\lambda): ~\max_{s\in U}  {f(s) - \lambda\cdot w(s),}
\end{equation}
for some $\lambda \geq 0$.

A traditional approach for using Lagrangian relaxation in
approximation algorithms (see, e.g., \cite{G96,JV01,BBGS08})
is based on
initially finding two solutions, ${SOL}_1$, ${SOL}_2$, for
$\Pi (\lambda_1), \Pi (\lambda_2)$, respectively,
for some $\lambda_1, \lambda_2$, such that each of the solutions is an
approximation for the corresponding Lagrangian relaxation; while one of
these solutions is feasible for $\Pi$ (i.e., satisfies the weight constraint),
the other is not.
A main challenge is then to find a way to {\em combine}
${SOL}_1$ and ${SOL}_2$
to a feasible solution that yields a good approximation for $\Pi$.
We prove (in Theorem \ref{lemma:lagrangian_relaxation}) a general result,
which allows to obtain a solution for $\Pi$ based on {\em one} of the
solutions only. In particular, we show that, with appropriate selection of the
parameters $\lambda_1, \lambda_2$
in the Lagrangian relaxation, we can obtain solutions
${SOL}_1$, ${SOL}_2$
such that one of them yields an
efficient approximation for our original problem $\Pi$.
The resulting
technique leads to fast and simple approximation algorithms for
a wide class of {\em subset selection} problems with linear constraints.

\subsection{Subset Selection Problems}
Subset selection problems form a large class
encompassing such NP-hard problems as 
real-time scheduling,
the generalized assignment problem (GAP) and maximum weight independent set,
among others. In these problems,
a subset of elements satisfying certain
properties
needs to be selected out of a universe, so as to maximize some
objective function.\footnote{We give a formal definition in Section
\ref{sec:rounding}.}
We apply our general technique to obtain efficient approximate
solutions for the following natural variants of some classic subset selection
problems.

\myparagraph{Budgeted Real Time Scheduling (BRS)}
The input is a set
${\cal A} = \{ A_1, \ldots , A_m \}$
of \emph{activities},
where each activity consists of a set
of \emph{instances};
an instance $\cI \in A_i$ is defined
by a half open time interval $[s(\cI),e(\cI))$ in which the instance
can be scheduled ($s(\cI)$ is the start time, and $e(\cI)$ is
the end time), a cost
$c(\cI)\in \mathbb{N}$, and a profit $p(\cI)\in \mathbb{N}$.
A schedule is \emph{feasible} if it contains
at most one instance of each activity, and for any $t \geq 0$, at
most one instance is scheduled at time $t$.
The goal is to find a feasible schedule, in which
the total cost of all the scheduled instances
is bounded by a given budget $L \in \mathbb{N}$, and
the total profit of the scheduled instances is maximized.
{\em Budgeted continuous real-time scheduling (BCRS)}
is a variant of this problem where each instance is associated
with a {\em time window}
$\cI = [s(\cI),e(\cI))$
and length $\ell(\cI)$.
An instance $\cI$ can be scheduled at any time interval
$[\tau,\tau+\ell(\cI))$, such that
$s(\cI) \leq \tau \leq e(\cI) - \ell(\cI))$.
BRS and BCRS arise in many scenarios in which we need to schedule activities
subject to resource constraints, e.g.,
storage requirements for the outputs of the activities.

\myparagraph{Budgeted Generalized Assignment Problem (BGAP)}
The input is a set of bins
(of arbitrary capacities)
and a set of items, where
each item has a size, a value and a packing cost
for each bin. Also, we are given a budget $L \geq 0$.
The goal is to pack in the bins a feasible subset of items of maximum value,
such that the total packing cost is at most $L$.
BGAP arises in many real-life scenarios (e.g., inventory
planning with delivery costs).

\myparagraph{Budgeted Maximum Weight Independent Set (BWIS)}
Given a budget $L$ and a graph $G=(V,E)$, where each
vertex $v\in V$ has an associated  profit $p_v$ (or, \emph
{weight}) and associated  cost $c_v$, choose a subset $V' \subseteq V$
such that $V'$  is an \emph{independent set} (i.e., for any $e=(v,u)\in E$, $v \notin V'$ or $u \notin V'$),
the total cost of vertices in $V'$, given by
$\sum_{v\in V'} c_v$, is bounded by $L$,
and the total profit of $V'$, $\sum_{v\in V'} p_v$, is maximized. BWIS is a generalization of the classical {\em maximum independent
set (IS)} and {\em maximum weight independent set (WIS)} problems.

\subsection{Combinatorial Reoptimization}
Traditional combinatorial optimization problems require finding
solutions for a single instance. However, many of the real-life
scenarios motivating these problems involve systems that change
over time. Thus, throughout the continuous operation of
such a system, it is required to compute solutions for new problem
instances, derived from previous instances. Moreover, since there is
some cost associated with the transition from one solution to
another, the solution for the new instance must be {\em close} to the former solution
(under certain distance measure).

Solving the resulting {\em reoptimization} problem involves two
challenges:
$(i)$ {\em computing} an optimal (or close to the optimal) solution for
a new instance, and
$(ii)$ efficiently {\em converting} a given solution to a new one (we give below the formal definitions).
Indeed, due to the transition costs, we seek for the modified instance an efficient solution
which can be reached at low cost.
In that sense, the given initial solution plays a restrictive role, rather than serve as guidance to the algorithm.

For example, consider a cloud provider that runs virtual machines for its customers.
The provider has a set of servers (hypervisors); each server has a limited resource
available for use by the virtual machines that it hosts.
The demand for a particular virtual machine changes over time, with a corresponding change in its resource consumption.
This may require to migrate some virtual machines among the servers,
so as to keep the total demand
on each server bounded by its resource availability.
Thus, given an initial assignment of virtual machines to the servers,
the provider has to find a new assignment which maximizes the profit from serving customers,
while minimizing the total migration cost.

\comment{
	\myparagraph{\bf Applications} Reoptimization problems
	naturally 	arise
	in many real-life scenarios. Indeed, planned or unanticipated
	changes occur over time in almost any system. It is then required
	to respond to these changes quickly and efficiently. Ideally, the
	response should maintain high performance while affecting only a
	small portion of the system. In \cite{full} we give a
	detailed description of some of the applications for which our
	reoptimization model fits well. This includes storage systems for
	VoD services, communication services and other network problems,
	stock trading, production planning and vehicle routing.
}
\subsubsection{Definitions and Notation}

A reoptimization problem $R(\Pi)$ is comprised of three elements: a universe $U$ of
all feasible solutions, a profit $p: U \rightarrow \mathbb{R^+}$
and a transition cost $\delta : U \rightarrow \mathbb{R^+}$.

In solving the reoptimization version of a
maximization problem, the objective is to find a solution
that maximizes the profit, while minimizing the transition cost.
In the original optimization problem, we only need to maximize the profit (e.g. $\max\{p(s)|s\in U\}$). We denote this problem by $\Pi$ and refer to it as the {\em base problem}.
The transition cost for a particular instance of the
reoptimization problem represents the
cost of moving from an existing solution to a given new solution.
(for example, in the cloud provider scenario, this can be the number of virtual machines that
need to migrate to reach a new assignment).

\comment
{
In the following we formally define the model
for combinatorial
reoptimization. Given an optimization problem $\Pi$ , let $I_0$ be
an input for $\Pi$, and let ${\cal C_{I}}_0 = \{ C_{I_0}^1,
C_{I_0}^2, \ldots \}$ be the set of configurations corresponding to
the solution space of $\Pi$ for ${I_0}$.\footnote{A configuration
can be any representation of a (partial) solution for $\Pi$.} Each
configuration $C_{I_0}^j \in {\cal C_{I}}_0$ has some value
$val(C_{I_0}^j)$.

In the reoptimization problem, $R(\Pi)$,
an instance $I_R$ contains an initial configuration for the original problem $C_{I_0}^j \in {\cal C_{I}}_0$ of an initial instance ${I_0}$, and a
new instance $I$ derived from $I_0$ by
admissible operation, e.g, addition or removal of elements,
changes in element parameters etc.

For any element $i \in I$ the transition cost of this element
$\delta(i)$ is derived directly from the initial configuration $C_{I_0}^j$.
Some time we even omit the initial configuration $C_{I_0}^j$ from $I_R$ and leave it with the transition cost function $\delta(i)$.

This representation keeps the input description more compact.
The primary goal is to find an optimal solution
for the problem $\Pi$ on the instance $I$.
Among all configurations with an optimal $val(C_{I}^k)$
value, we seek a configuration $C_{I}^{*}$ for which the
total transition cost, given by $\sum_{i \in
I \cap C_{I}^{*} }\delta(i)$ is minimized.
}

In the following, we show how the above notation can be used to describe
a reoptimization version of the {\em maximum spanning tree (MAX-ST)} problem. Denote this problem by $\Pi$.
Let $G_0=(V_0,E_0)$ be a weighted graph, and let $T_0=(V_0,E_{T_0})$
be a MAX-ST for $G_0$. Let $G=(V,E)$ be a graph derived from $G_0$ by
adding or removing vertices and/or edges, and by (possibly) changing the weights of edges.
Let $T=(V,E_{T})$ be a MAX-ST for $G$.
For every edge $e \in E_{T} $, we are given the cost $\delta(e)$ of
adding $e$ to the new solution (an example of transition cost
$\delta(e)$ can be: $\delta(e) = 0$ if $e \in E_0$; otherwise,
$\delta(e) = 1$).
The goal in the reoptimization problem $R($MAX-ST$)$ is to find a MAX-ST of $G$ with minimal total transition cost.
The formal representation of $R(\Pi)$ is $U=\{\mbox{all spanning trees of $G$}\}$,
$p(T)= \sum_{e\in T} w(e)$, where $w$ is the weight on the graph edges, and $\delta(T)= \sum_{e\in T} \delta(e)$.
A polynomial time algorithm for $R($MAX-ST$)$ is given in \cite{STT12}.
\comment{
The input for the reoptimization problem, $I_R$, contains both the
new instance, $I$, and the transition costs $\delta$ (that may be
encoded in  different ways). Note that $I_R$ does not include the
initial configuration $I_0$ since, apart from determining the
transition costs, it has no effect on the reoptimization problem.
}

It is worth noting that the input for the reoptimization problem contains only the new instance of the problem, but not the existing state
($T_0$ in the MAX-ST example), as the current state is reflected by the
transition cost $\delta$.

\myparagraph{Approximate Reoptimization}
When the problem $\Pi$ is NP-hard, or when the reoptimization problem
$R(\Pi)$ is NP-hard,\footnote{As shown in \cite{STT12}, it may be that none, both, or only $R(\Pi)$ is NP-hard.}
we seek approximate solutions.
The goal is to find a good solution for the new instance, while keeping a low transition cost from the initial solution (or, {\em configuration}) to the new one.
Formally, denote by ${\cal O}$ an optimal solution for $\Pi$, that is, $p({\cal O})=\max_{s \in U} (p(s))$.
Denote by $OPT$ a solution for $R(\Pi)$ having the minimal transition cost among all the solutions that have a
 profit $p({\cal O})$. Formally, $\delta(OPT)=\argmin_{s\in U | p(s)= p({\cal O })} (\delta(s))$.
We now define the notion of {\em reapproximation} algorithm.\footnote{We refer the reader to \cite{STT12} for further details.}

\begin{definition}
 \label{def:weak_reapprox}
For $r_1\geq 1$ and $r_2 \in (0,1]$
a solution $s$ is an $(r_1, r_2 )$-reapproximation  for $R(\Pi)$ if it satisfies $(i)$ $\delta(s) \leq r_1\cdot \delta (OPT)$,
and $(ii)$ $p(s) \geq r_2 \cdot p(OPT)$.
\end{definition}

\comment{
Formally,
\begin{definition}
 \label{def:weak_reapprox}

An algorithm ${\cal A}$ yields an $(r_1, r_2)$-reapproximation for $R(\Pi)$, for $r_2,r_1 \geq 1$,
For any instance $I$ of a reoptimization problem $R(\Pi)$,
if its transition costs satisfies $delta (s) \leq r_1 \cdot \delta (OPT)$, ans its total profit is $p(s) \geq r_2 \cdot p(OPT)$.
 \end{definition}
}

In this paper we develop a framework that enables to obtain $(1,\alpha)$-reapproximation algorithms for a wide
class of subset selection problems, where $\alpha \in (0,1)$.
While some of the resulting approximation ratios may be improved, by applying problem-specific approximation
techniques (see Section \ref{sec:related_work}), our
framework is of interest due to its simplicity and generality.
We demonstrate the usefulness of our framework in solving the
following reoptimization problem.

\myparagraph{The Surgery Room Allocation Problem (SRAP)}
In a hospital, a surgery room is a vital resource.
Operations are scheduled by the severity of patient illness;
however, operation schedules tend to change due to sudden changes in patients' condition, the arrival
of new patients requiring urgent treatment, or the unexpected absence of senior staff members.
Schedule changes
involve some costs, e.g., due to the need to rearrange the equipment,
or to change the staff members taking care of the patients, as well as their individual schedules.

There is also a profit accrued from each operation. Indeed, some
operations are more profitable than others, e.g., due to the coverage received from insurance companies,
or due to higher charges in case the operation is scheduled after work hours.

Formally, suppose that the initial input, $I_0$, consists of $n_0$ patients. Each patient $j$ is associated with a set $A_{0,j}$  of possible time intervals in which
$j$ can be scheduled for operation. An interval ${\cal I} \in A_{0,j}$ is a half open time interval $[s_0({\cal I}), e_0({\cal I})$, where
$s_0({\cal I}) \leq e_0({\cal I})$. Each interval ${\ci} \in \azj$ is associated with a profit $p_0({\ci})$, for all $1 \leq j \leq n_0$.
Let $S_0$ be a given operation schedule for $I_0$. Consider the input $I$ derived from $I_0$ by adding or removing patients, by changing the possible time intervals for the patients,
or the profits associated with the operations. Suppose that $I$ consists of $n$ patients; each patient $j$ has a set of possible time intervals $\aj$. Each interval ${\ci} \in \aj$ has a profit $p({\ci}) \geq 0$
and a transition cost $\delta({\ci}) \in \mathbb{N}$.
This is either the cost of adding ${\ci}$ to the schedule, if ${\ci} \notin S_0$, or the cost of omitting ${\ci}$ from $S_0$.
 In any feasible schedule $S$ for $I$, at most one interval ${\ci} \in \aj$ is selected, for all $1 \leq j \leq n$,
and the surgery room is occupied by at most one patient at any time $t \geq 0$. The goal is to find a feasible schedule that maximizes
the total profit, while minimizing the aggregate transition cost. In particular, we want to obtain a $(1, \alpha)$-reapproximation
algorithm for the problem, for some $\alpha \in (0,1]$.

\comment{
suppose there are $n$ patients waiting for surgery.
For each patient $j$, there is  a set $A_j$ of possible time intervals in which patient $j$ can be scheduled for a surgery.
Each interval ${\cal I} \in A_j$ is defined as a half open time interval $[s(\cI),e(\cI))$, where $s(\cI) \leq e(\cI)$.
Each interval ${\cI} \in A_j$ is associated with a profit $p(\cI)\in \mathbb{N}$ and a migration cost $\delta(\cI)\in \mathbb{N}$.\footnote{We note
that the previous schedule of the patients, that is modified in our reoptimization problem is (implicitly) represented by the migration costs.} A feasible
solution contains at most one time interval ${\cI} \in A_j$ for each patient $1 \leq j \leq n$, and at most one patient in the surgery room at any time $t \geq 0$.
The goal is to minimize the total migration cost, while maximizing the total profit accrued from the schedule.
}
\comment{
$I_j\subset\{(j,[s,t))| s < t\}$ for scheduling a surgery.
Additionally, the profit from a schedule  is given by $p: \bigcup_{j=0}^{n} I_j \rightarrow \mathbb{N}$ and migration-cost function $\delta: \bigcup_{j=0}^{n} I_j \rightarrow \mathbb{N}$.
A feasible solution for the problem is collection of assignment $(j,[s,t))$ denoted by $S$ such that each patient is assigned to at most
one surgery and for any $t \geq 0 $, at most one patient is in the surgery room.
Our goal is to find a feasible schedule with maximal profit that minimize the migration-cost.
}

\comment{
 even though this framework doesn't always produce the best
reoptimization ratio (for in \cite{KUL11} they show how to get an aproximation ratio of $(1-\frac{1}{e})$ for {\em BGAP} instead of $\frac{1-\frac{1}{e}}{2-\frac{1}{e}}$, using Theorem \ref{thm:reopt} we can get better reapproximation ratio for $R(GAP)$)
it is an immediate tool to show that the problem is in the class of $(1,\alpha)$-reoptimization problems.
 }

\comment{
\myparagraph{The cloud provider reoptimization problem (CPRP)}
In the cloud provider reoptimization problem, there are $n$ hypervisors, each having computing power of $C_i$, $m$ virtual machines, each requires $r_j$ computing units and have profit of $p_j$, and migration cost of $\delta_{i,j}$ (for the reassignment of machine $j$ to server (hypervisors) $i$).

A feasible solution for the problem is collection of assignment $(i,j)$ denoted by $S$ such that  each virtual machine $j$ appears at most in one assignment in $S$, and for each hypervisor $i$ the total amount of computing units required for the assignment  do nit exceed computing power of the hypervisor, i.e.,
 $$\sum_{j| (i,j) \in S } r_j \leq C_i.$$
The universe $U$ of the problem is the collection of all feasible solutions, and the profit of solution $S$ is $$p(S)=\sum_{ (i,j)\in S }  p_j.$$
The transition cost of $S$ is $$\delta(S) = \sum_{ (i,j)\in S }  \delta_{i,j}$$

\myparagraph{The Video on Demand (VoD) reconfiguration problem}
Same technique holds for the Video on Demand problem that was presented in Shachnai Tamir and Tamir paper \cite{STT09}.
In the Video on Demand problem, movie popularities tend to change frequently.
In order to satisfy new client requests, the content of the storage system needs to be modified.
The new storage allocation needs to satisfy the current demand; also, due to the
cost of file migrations, this should be achieved by using a minimum
number of reassignments of file copies to servers.
Formally, there are $n$ servers each with load capacity of $L_j$, which is the number of data streams that can be read simultaneously from that server, and a storage capacity of $S_i$, $m$ movies each with popularity of $p_j$ and size of $s_j$, and migration cost of $\delta_{i,j}$ (for copying a movie $j$ to a server $i$).
}
\label{sec:related}
\subsection{Main Results}

We prove (in Theorem \ref{lemma:lagrangian_relaxation})
a general result demonstrating the power of Lagrangian relaxation
in solving constrained maximization problems with arbitrary objective functions.

We use this result to develop
a unified approach for solving
subset selection problems with linear constraints.
In particular, given a problem $\Pi$ in this class, and a fixed $\eps \in (0,1)$,
we show that if there exists
an $r$-approximation algorithm for the Lagrangian relaxation of $\Pi$,
for some $r \in (0,1)$, then our technique yields a ratio of
$(\frac{r}{r+1}  -\! \!\eps)$ to the optimal.
We further show (in Section \ref{sec:example}) that this bound is essentially tight, within additive of $\eps$.
Specifically,
there is
a subset selection problem $\Gamma$
such that, if
there exists an $r$-approximation algorithm for the Lagrangian relaxation
of $\Gamma$ for some $r \in (0,1)$, there is an input $I$
for which finding the
solutions ${SOL}_1$ and ${SOL}_2$ (for the Lagrangian relaxation)
and combining these solutions
yields at most a ratio of $\frac{r}{r+1}$ to the optimal.

The number of calls
to the $r$-approximation algorithm, used by our algorithms,
is {\em linear} in
the input size and in $\log (1 / \eps)$, for inputs with cardinality
constraint (i.e., where $w(s)=1$ for all $s \in U$), and polynomial in
the input size and in $\log (1 / \eps)$ for inputs with arbitrary linear
constraint (i.e., arbitrary weights $w(s) \geq 0$).

We apply the technique to obtain efficient approximations for
natural variants of some classic subset selection problems.
In particular, for the budgeted variants of
the real-time scheduling problem we obtain (in Section \ref{sec:bba})
a bound of $(1/3 - \eps)$ for BRS and $(1/4 -\eps)$ for
BCRS.
For budgeted GAP we give (in Section \ref{sec:bgap}) an approximation
ratio of $\frac{1-e^{-1}}{2-e^{-1}} -\eps$.

For BWIS we show (in Section \ref{sec:bwis})
how an approximation algorithm $\mathcal{A}$ for WIS can be used
to obtain an approximation algorithm for BWIS with
the same asymptotic approximation ratio.
More specifically,
let $\mathcal{A}$ be a polynomial time algorithm
that finds in a graph $G$ an independent set whose profit is at least
$f(n)$ of the optimal, where $(i)$ $f(n)=o(1)$ and
$(ii)$ $\log(f(n))$ is polynomial in the input size.\footnote{These two
requirements hold
for most approximation algorithm for the problem.}
Our technique yields an
approximation algorithm which runs in polynomial time and achieves
an approximation ratio of $g(n)=\Theta(f(n))$.
Moreover,
$\lim_{n\rightarrow \infty} \frac{g(n)}{f(n)} =1$.
Since BWIS generalizes WIS, this implies that the two problems are
essentially equivalent in terms of hardness of approximation.

Our technique can be applied iteratively to obtain an $(\frac{r}{1
+ d r} - \eps)$-approximation algorithm for subset selection
problems with $d > 1$ linear constraints, when there exists an
$r$-approximation algorithm for the non-constrained version of
the problem, for some $r \in (0,1)$.

It is important to note that the above results, which apply for maximization
problems with {\em linear} constraints, do not
exploit the result in Theorem \ref{lemma:lagrangian_relaxation} in
its full generality.
We believe that the theorem will find more uses,
e.g., in deriving approximation
algorithms for subset selection problems with {\em non-linear} constraints.

Finally, we show how our technique can be used to develop a general $(1,\alpha)$-reapproximation algorithm for any subset selection problem.
Specifically, given an instance of a reoptimization problem $R(\Pi)$, where $\Pi$ is a subset selection problem with
an $r$-approximation algorithm
$\ca$, we find a $(1,(\frac{r^2}{1+r}-\varepsilon))$-reapproximation algorithm for $R(\Pi)$.
We do so by considering a family of
{\em budgeted reoptimization problems} (see Section \ref{reopt:budgeted}) denoted by $R(\Pi, b)$.
The problem $R(\Pi, b)$  is a restricted version of $R(\Pi)$,
in which we add the constraint that the total transition cost is at most $b$, for some budget $b \geq 0$.
The optimal solution for $\rpim$ is denoted $\co_b$. Note that
$\co_b$ is the profit of the best solution that can be obtained from the initial solution with transition
cost at most $b$. Each of these budgeted sub-problems can be approximated
using the Lagrangian relaxation technique. We search for the lowest budget $b^*$
such that $R(\Pi, b^*)$ has a solution of profit that exceeds a certain threshold.
Using this algorithm, we derive a $(1,\frac{1}{6}-\varepsilon)$-reapproximation algorithm for SRAP (see Section \ref{sec:reopt}).

\subsection{Related Work}
\label{sec:related_work}
Most of the approximation techniques based on Lagrangian relaxation are
tailored to handle specific optimization
problems. In solving the {\em k-median} problem through a relation to {\em
facility location}, Jain and Vazirani
developed in \cite{JV01} a general framework for using Lagrangian relaxation to derive
approximation algorithms (see also \cite{G96}).
The framework, that is based on a primal-dual approach, finds initially
two approximate solutions
${SOL}_1$, ${SOL}_2$
for the Lagrangian relaxations
$\Pi(\lambda_1)$, $\Pi(\lambda_2)$ of a problem $\Pi$, for carefully selected values of
$\lambda_1, \lambda_2$; a
{\em convex combination} of these solutions yields a (fractional)
solution which uses the budget $L$.
This solution is then rounded to obtain an integral
solution that is a good approximation for the original problem.
Our approximation technique (in Section \ref{sec:technique}) differs
from
the technique of \cite{JV01} in
two ways. First,
it does not require
rounding a fractional solution: in fact, we do not attempt to
combine the
solutions
${SOL}_1$, ${SOL}_2$,
but rather, examine each separately and compare the
two feasible solutions which can be easily derived
from
${SOL}_1$, ${SOL}_2$,
using
an efficient transformation of the non-feasible solution,
${SOL}_2$,
to a
feasible one.
Secondly, the framework of \cite{JV01}
crucially depends on a primal-dual interpretation of the approximation
algorithm for the relaxed problem, which is not required here.

K\"{o}nemann et al. considered in \cite{KPS06} a technique for
solving general partial cover problems. The technique
builds on the framework of \cite{JV01}, namely, an instance of a problem
in this class is
solved by initially finding the two solutions
${SOL}_1$, ${SOL}_2$
and generating a
solution
${SOL}$,
which combines these two solutions.
A comprehensive survey of other work is given in \cite{M07}.\footnote{For
conditions under which Lagrangian relaxation can be used to
solve discrete/continuous optimization problems see, e.g., \cite{NW99}.}

There has been some earlier work on
using Lagrangian relaxation to solve subset selection problems.
The paper \cite{NST07} considered a subclass of the class
of the subset selection problems that we study here.\footnote{This subclass includes the {\em
real-time scheduling} problem.}
 Using the framework of \cite{JV01},
the paper claims
to obtain an approximation ratio of $r -\eps$ for any
problem in this subclass,
given a $r$-approximation algorithm for the
Lagrangian relaxation of the problem (satisfying certain properties).
Unfortunately, this approximation ratio was shown to be incorrect \cite{S07}. Berget et al. considered
in \cite{BBGS08} the budgeted matching problem and the budgeted matroid
intersection problem.
The paper gives the
first polynomial time approximation schemes for these problems.
The schemes, which are based on Lagrangian relaxation, merge the two obtained
solutions using some strong combinatorial properties of the problems.

The non-constrained variants of the subset selection problems that we study
here are well studied.
For known results on real-time scheduling
and related problems see, e.g., \cite{BB00,COR06,BGNS01,BBCR08}.
Surveys of known results for the generalized assignment problem
are given, e.g., in \cite{CK06,CKR06,FV06,FG06}.

Numerous approximation algorithms have been proposed and analyzed for the
maximum (weight) independent set problem.
Alon at al. \cite{AFWZ95} showed that IS
cannot be approximated within factor $n^{-\eps}$ in polynomial time,
where $n=|V|$ and $\eps > 0$ is some constant, unless $P=NP$.
The best known approximation ratio of $\Omega(\frac{\log^2 n}{n})$ for WIS
on general graphs is due to Halld{\'o}rsson \cite{Ha00}.
A survey of other known results for IS and WIS
can be found e.g., in \cite{H98,H00}.

To the best of our knowledge, approximation algorithms
for the budgeted variants of the above problems
are given here for the first time.
Our bound for BGAP (in Theorem \ref{thm:bgap}) was improved in \cite{ariel_thesis} to $1 - 1/e$.

There is a wide literature on scenarios leading to reoptimization problems, however, most of the earlier studies refer
to a model in which the goal is to find an optimal solution for a modified problem instance, with no transition costs incurred in the process
(see \cite{STT12} and the references therein). In this paper, we adopt the reoptimization model introduced in \cite{STT12}.
In this model, it is shown in \cite{STT12} that for any subset selection problem $\Pi$ that is polynomially solvable, there is
a polynomial time $(1,1)$-reoptimization algorithm for $R(\Pi)$.

\section{Lagrangian Relaxation Technique}
\label{sec:technique}

Given a universe $U$, let
$f:U \rightarrow \mathbb{N}$ be some objective function,
and let
$w:U \rightarrow \mathbb{R^+}$ be a non-negative weight function.
Consider the problem $\Pi$ of maximizing $f$ subject to
a budget constraint $L$ for $w$, as given in \eqref{prb:general_constraint},
and the Lagrangian relaxation of $\Pi$, as given in
\eqref{prb:general_lagrange}.
%

We assume that the value of an optimal solution $s^*$ for $\Pi$
satisfies $f(s^*) \geq 1$.
For some $\eps' >0$, suppose that
\begin{equation}
\label{eq:lambda_one_two}
\lambda_2 \leq \lambda_1 \leq \lambda_2 + \eps'.
\end{equation}
The heart of our approximation technique is the next result.
\begin{theorem}
\label{lemma:lagrangian_relaxation}
For any $\eps > 0$ and $\lambda_1, \lambda_2$ that satisfy
(\ref{eq:lambda_one_two}) with $\eps'=\eps/L$, let $s_1={SOL_1}$ and
$s_2={SOL_2}$ be
$r$-approximate solutions for
 $\Pi(\lambda_1),\Pi(\lambda_2)$,
such that
$w(s_1) \leq L \leq w(s_2)$. Then
for any $\alpha  \in [1- r, 1]$, at least one of the following holds:
 \begin{enumerate}
 \item
 $f(s_1) \geq \alpha r f(s^*)$
 \item
 \label{s_2_case}
 $f(s_2)
(1-\alpha -\eps)f(s^*) \frac{w(s_2)}{L} $.
\end{enumerate}
\end{theorem}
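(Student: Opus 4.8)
The plan is to route everything through the optimal value of the relaxation, $g(\lambda) := \max_{s \in U}\big(f(s) - \lambda w(s)\big)$, and then exploit a single dichotomy on the size of $\lambda_1$. First I would record what the hypotheses give. Since $s^*$ is feasible for $\Pi$ we have $w(s^*) \le L$, so weak duality yields, for every $\lambda \ge 0$,
\begin{equation*}
g(\lambda) \ge f(s^*) - \lambda w(s^*) \ge f(s^*) - \lambda L .
\end{equation*}
Because $s_1$ is $r$-approximate for $\Pi(\lambda_1)$ and $\lambda_1 w(s_1) \ge 0$, this gives
\begin{equation*}
f(s_1) \ge r\,g(\lambda_1) + \lambda_1 w(s_1) \ge r\big(f(s^*) - \lambda_1 L\big),
\end{equation*}
and similarly, since $s_2$ is $r$-approximate for $\Pi(\lambda_2)$,
\begin{equation*}
f(s_2) \ge r\,g(\lambda_2) + \lambda_2 w(s_2) \ge r f(s^*) + \lambda_2\big(w(s_2) - rL\big).
\end{equation*}
Note that $w(s_2) - rL \ge (1-r)L \ge 0$, so the last term is nondecreasing in $\lambda_2$; this monotonicity is exactly what will later let me substitute a lower bound on $\lambda_2$ without reversing the inequality.

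Next I would split on whether $\lambda_1 \le (1-\alpha) f(s^*)/L$. In this ``small $\lambda_1$'' regime the first displayed bound immediately gives $f(s_1) \ge r\big(f(s^*) - (1-\alpha)f(s^*)\big) = \alpha r\, f(s^*)$, which is precisely conclusion~1.

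It remains to treat $\lambda_1 > (1-\alpha)f(s^*)/L$, where I aim for conclusion~2. Here the coupling (\ref{eq:lambda_one_two}) with $\eps'=\eps/L$ enters: $\lambda_2 \ge \lambda_1 - \eps' > \big((1-\alpha)f(s^*) - \eps\big)/L$. Feeding this lower bound into the second displayed inequality (legitimate by the monotonicity noted above), expanding, and then using the normalization $f(s^*) \ge 1$ to replace the factor $(1-\alpha)f(s^*) - \eps$ by the smaller quantity $(1-\alpha-\eps)f(s^*)$, I expect to reach
\begin{equation*}
f(s_2) \ge \alpha r\, f(s^*) + r\eps + (1-\alpha-\eps)\,f(s^*)\frac{w(s_2)}{L} \ge (1-\alpha-\eps)\,f(s^*)\frac{w(s_2)}{L},
\end{equation*}
which is conclusion~2 after discarding the nonnegative leading terms $\alpha r f(s^*)$ and $r\eps$.

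The main obstacle is the bookkeeping of the $\eps$-slack in this final step. The relaxation only controls $g(\lambda_2)$, and the gap $\lambda_1 - \lambda_2$ may be as large as $\eps/L$, so I must check that the choice $\eps'=\eps/L$ is exactly what converts the additive loss in $\lambda_2$ into the claimed multiplicative $(1-\alpha-\eps)$ factor, with $f(s^*)\ge 1$ absorbing the leftover $L$ and $\eps$ (the key elementary inequality being $(1-\alpha)f(s^*)-\eps \ge (1-\alpha-\eps)f(s^*)$, valid iff $f(s^*)\ge 1$). Everything else is weak duality and a one-line case analysis; the normalization $f(s^*)\ge 1$ and the identity $\eps'=\eps/L$ are precisely the ingredients that make the two regimes join at the threshold $\lambda_1 = (1-\alpha)f(s^*)/L$.
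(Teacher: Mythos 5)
Your proof is correct and follows essentially the same route as the paper's: both combine the $r$-approximation guarantee with feasibility of $s^*$ (your ``weak duality'' step is exactly the paper's inequality $f(s_i) - r f(s^*) \ge \lambda_i\left(w(s_i) - r\, w(s^*)\right)$), both pivot on the threshold $\lambda_1 \gtrless (1-\alpha)f(s^*)/L$ (the paper phrases this contrapositively, assuming conclusion~1 fails), and both use $\eps' = \eps/L$ together with $f(s^*)\ge 1$ in the identical way to absorb the slack. The only difference is minor bookkeeping: by retaining $\lambda_2\left(w(s_2)-rL\right)$ rather than weakening $r\,w(s^*)$ to $w(s^*)$ as the paper does, your leftover term is $\alpha r f(s^*)$ instead of $\left(r-(1-\alpha)\right)f(s^*)$, so you only need $\alpha \ge 0$ where the paper invokes $\alpha \ge 1-r$.
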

\begin{proof}
Let $L_i=w(s_i)$, $i=1,2$, and $L^* = w (s^*)$.
From \eqref{prb:general_lagrange} we have that

\begin{equation}
\label{lagrange_prop1}
f(s_i)- r f(s^*) \geq \lambda_i (L_i-r L^*).
\end{equation}
Assume that, for some
$\alpha \in [1- r, 1]$,
it holds that
$f(s_1) < \alpha r f(s^*)$,
then
\[
(\alpha -1)r f(s^*) > f(s_1)-r f(s^*) \geq
\lambda_1(L_1- r L^*) \geq -r \lambda_1 L^* \geq
-r \lambda_1 L.
\]
The second inequality follows from \eqref{lagrange_prop1}, the third
inequality from the fact that $\lambda_1 L_1 \geq 0$, and the last inequality
holds due to the fact that $L^* \leq L$.
Using \eqref{eq:lambda_one_two}, we have
\begin{equation}
\label{eq:lambda_and_f}
\frac{ (1- \alpha ) f(s^*)} {L} <
   \lambda_1 < \lambda_2 +\eps'.
\end{equation}
Since $\eps'= \eps/L$, we get that
\begin{eqnarray*}
f(s_2)  &\geq& \lambda_2(L_2-L^*)+r f(s^*)
>
\left( \frac{(1-\alpha) f(s^*) }{L} - \eps' \right) (L_2-L) +r f(s^*)
\\
&\geq&
(1-\alpha)  f(s^*) \frac{L_2}{L} - \eps' L_2
\geq
(1- \alpha -\eps' L)\frac{L_2}{L}f(s^*)
= (1-\alpha -\eps)\frac{L_2}{L}f(s^*)
\end{eqnarray*}
The first inequality follows from \eqref{lagrange_prop1}, by taking $i=2$,
and the second inequality is due to \eqref{eq:lambda_and_f} and the fact that
$L^* \leq L$.
The third inequality holds since $r \geq 1- \alpha$, and the last inequality
follows from the fact that $f(s^*) \geq 1$.
\end{proof}

We summarize the above discussion in the next theorem,
which is the heart of our technique:

\begin{theorem}
\label{lemma:lagrangian_relaxation}
Let $\eps'=\frac{\eps}{L}$.If $S_1,S_2$ are
 $r$-approximation for
$\Pi(\lambda_1),\Pi(\lambda_2)$,
$\lambda_2 \leq \lambda_1 < \lambda_2 +\eps'$,
and $w(S_1) \leq L \leq w(S_2)$, then
for all $1- r \leq \alpha  \leq  1 $, one of the
following holds:
\begin{enumerate}
\item
$f(S_1) \geq \alpha r f(S^*)$
\item
\label{s_2_case}
$f(S_2) \geq (1-\alpha -\eps)f(S^*) \frac{L_2}{L} $,
\end{enumerate}
where $S^*$ is an optimal solution for $\Pi$.
\end{theorem}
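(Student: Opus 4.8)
The plan is to reproduce the chain of inequalities that proved the previous formulation of this theorem, since the present statement differs only in renaming $s_i$ as $S_i$ and writing $L_2$ for $w(S_2)$. Everything hinges on the defining property of an $r$-approximation for the relaxation \eqref{prb:general_lagrange}: because $S_i$ maximizes $f(s)-\lambda_i w(s)$ up to a factor $r$ and the optimum $S^*$ of $\Pi$ is itself a candidate solution for $\Pi(\lambda_i)$, the first step is to record
\[
f(S_i) - r f(S^*) \geq \lambda_i\bigl(w(S_i) - r\,w(S^*)\bigr),
\]
which is precisely \eqref{lagrange_prop1} with $L_i=w(S_i)$ and $L^*=w(S^*)$. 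I will use this inequality twice, at $i=1$ and $i=2$.

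To set up the dichotomy, fix $\alpha\in[1-r,1]$ and suppose the first alternative fails, i.e.\ $f(S_1)<\alpha r f(S^*)$. Substituting into \eqref{lagrange_prop1} at $i=1$, discarding the nonnegative term $\lambda_1 w(S_1)$, and using $L^*\le L$ gives $(\alpha-1)r f(S^*) > -r\lambda_1 L$, which rearranges (dividing by $r>0$) to the lower bound $\lambda_1 > (1-\alpha)f(S^*)/L$; combined with \eqref{eq:lambda_one_two} this is \eqref{eq:lambda_and_f}. The content of this step is that the failure of case~1 forces $\lambda_1$, and hence (up to $\eps'$) $\lambda_2$, to be large.

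It remains to convert this into the second alternative. Using \eqref{lagrange_prop1} at $i=2$ together with $r\le 1$ and $L^*\le L$ to weaken $w(S_2)-r L^*$ down to the nonnegative quantity $L_2-L$, and then feeding in $\lambda_2 > (1-\alpha)f(S^*)/L - \eps'$, I obtain
\[
f(S_2) \geq \lambda_2(L_2-L) + r f(S^*) > \Bigl(\tfrac{(1-\alpha)f(S^*)}{L}-\eps'\Bigr)(L_2-L) + r f(S^*).
\]
Expanding the right-hand side, the leftover terms $(r-1+\alpha)f(S^*)$ and $\eps' L$ are both nonnegative — the first because $r\ge 1-\alpha$ — so they may be dropped to leave $(1-\alpha)f(S^*)\tfrac{L_2}{L}-\eps' L_2=\tfrac{L_2}{L}\bigl((1-\alpha)f(S^*)-\eps' L\bigr)$; finally $f(S^*)\ge 1$ gives $-\eps' L\ge -\eps' L f(S^*)$, yielding $f(S_2)\ge(1-\alpha-\eps' L)\tfrac{L_2}{L}f(S^*)=(1-\alpha-\eps)\tfrac{L_2}{L}f(S^*)$ since $\eps'=\eps/L$. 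The only place demanding care is the order of substitutions in the displayed line: because the coefficient $(1-\alpha)f(S^*)/L-\eps'$ can be negative, I must first replace $L^*$ by $L$ inside $\lambda_2(L_2-L^*)$ (valid as $\lambda_2\ge 0$) and only afterwards apply the bound on $\lambda_2$ against the nonnegative factor $L_2-L$, so that no inequality is accidentally reversed.
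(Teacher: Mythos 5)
Your proof is correct and takes essentially the same route as the paper's own proof: the same key inequality $f(S_i)-rf(S^*)\geq\lambda_i\bigl(w(S_i)-r\,w(S^*)\bigr)$ obtained by using $S^*$ as a candidate for $\Pi(\lambda_i)$, the same contrapositive step giving $\lambda_1>(1-\alpha)f(S^*)/L$, and the same chain of estimates for $f(S_2)$ finishing with $r\geq 1-\alpha$ and $f(S^*)\geq 1$. Your explicit remark about replacing $L^*$ by $L$ \emph{before} invoking the (possibly negative) lower bound on $\lambda_2$ against the nonnegative factor $L_2-L$ is a point the paper leaves implicit, but it is the same argument.
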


Theorem \ref{lemma:lagrangian_relaxation} asserts
that at least one of the solutions $s_1,s_2$ is
{\em good} in solving our original problem, $\Pi$.
If $s_1$ is a good solution then
we have an $\alpha r$-approximation for $\Pi$,
otherwise we need to find a way to convert $s_2$ to a
solution $s'$ such that $w(s') \leq L$ and $f(s')$ is
a good approximation for $\Pi$.
Such conversions
are presented in Section \ref{sec:rounding} for
a class of {\em subset selection problems
with linear constraints}.
Next, we show how to find two solutions which satisfy
the conditions of Theorem \ref{lemma:lagrangian_relaxation}.

\subsection{Finding the Solutions $s_1, s_2$}
\label{sec:finding}
Suppose that we have an algorithm $\mathcal{A}$ which
finds a $r$-approximation for $\Pi(\lambda)$, for
any $\lambda \geq 0$. Given an input $I$ for $\Pi$,
denote the solution which
$\mathcal{A}$ returns for $\Pi(\lambda)$ by
$\mathcal{A}(\lambda)$,
and assume that
it is sufficient to consider $\Pi(\lambda)$ for
$\lambda \in (0, \lambda_{max})$,
where $\lambda_{max} = \lambda_{max}(I) $ and
$w(\mathcal{A}(\lambda_{max}))\leq L$.

Note that if $w(\mathcal{A}(0)) \leq L$ then
$\mathcal{A}(0)$ is a $r$-approximation
for $\Pi$; otherwise, there exist $\lambda_1,
\lambda_2 \in (0,\lambda_{max})$ such that
$\lambda_1,\lambda_2$, and $s_1=\mathcal{A}(\lambda_1),
s_2=\mathcal{A}(\lambda_2)$ satisfy \eqref{eq:lambda_one_two} and
the conditions of
Theorem \ref{lemma:lagrangian_relaxation}, and $\lambda_1, \lambda_2$
can be easily found using binary search.
Each iteration of the binary search requires a single
execution of $\mathcal{A}$ and reduces the size of the
search range by half. Therefore, after
$R = \ceil{ \log (\lambda_{max})+
\log(L) + \log(\eps ^{-1})}$
iterations, we have two solutions which satisfy
the conditions of the theorem.

Note that the values of $\lambda$ used during
the execution of the algorithm can be represented by
$O (\log(\lambda_{max})+\log(L)+\log(\eps^{-1}))$
bits; thus, we keep the
problem size polynomial in its original
size and in $\log(\eps^{-1})$.

\begin{theorem}
\label{lemma:finding}
Given an algorithm $\mathcal{A}$ which outputs
an $r$-approximation for $\Pi(\lambda)$, and
$\lambda_{max}$, such that
$w(\mathcal{A}(\lambda_{max})) \leq L$,
an $r$-approximate solution or two solutions $s_1, s_2$
which satisfy the conditions of Theorem
\ref{lemma:lagrangian_relaxation} can be found by using binary
search.
This requires $\ceil{ \log (\lambda_{max})+
\log(L) + \log(\eps ^{-1})}$
executions of $\mathcal{A}$.
\end{theorem}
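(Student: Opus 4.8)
The plan is to prove the statement by splitting into two cases, and in the nontrivial case to run a binary search whose invariant maintains one feasible and one infeasible endpoint throughout.

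First I would dispose of the trivial case. Since $\Pi(0)$ coincides with the unconstrained problem $\max_{s \in U} f(s)$, any $r$-approximation $\mathcal{A}(0)$ for $\Pi(0)$ satisfies $f(\mathcal{A}(0)) \geq r f(s^*)$. Hence if $w(\mathcal{A}(0)) \leq L$, then $\mathcal{A}(0)$ is feasible for $\Pi$ and is already an $r$-approximate solution, and we are done after a single evaluation. I would therefore assume for the remainder that $w(\mathcal{A}(0)) > L$.

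Next I would set up the binary search over $\lambda \in [0,\lambda_{max}]$. By the hypothesis $w(\mathcal{A}(\lambda_{max})) \leq L$, while $w(\mathcal{A}(0)) > L$ in the case at hand. I maintain two values $\ell < u$, initialized to $\ell = 0$ and $u = \lambda_{max}$, preserving the invariant
\[
w(\mathcal{A}(\ell)) > L \quad \text{and} \quad w(\mathcal{A}(u)) \leq L.
\]
In each iteration I evaluate $\mathcal{A}$ at the midpoint $\mu = (\ell+u)/2$: if $w(\mathcal{A}(\mu)) \leq L$ I set $u \leftarrow \mu$, and otherwise I set $\ell \leftarrow \mu$. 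Either choice preserves the invariant and halves the length of $[\ell,u]$. Once $u - \ell < \eps' = \eps/L$, I output $\lambda_2 = \ell$, $\lambda_1 = u$, together with $s_2 = \mathcal{A}(\lambda_2)$ and $s_1 = \mathcal{A}(\lambda_1)$. The invariant yields $w(s_1) \leq L \leq w(s_2)$, and by construction $\lambda_2 \leq \lambda_1 < \lambda_2 + \eps'$, so the hypotheses of Theorem \ref{lemma:lagrangian_relaxation} are met.

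For the iteration count I would note that after $k$ iterations the interval length is $\lambda_{max}/2^k$, so the stopping condition $\lambda_{max}/2^k < \eps/L$ is reached once $k > \log \lambda_{max} + \log L + \log(\eps^{-1})$, giving the claimed $\ceil{\log(\lambda_{max}) + \log(L) + \log(\eps^{-1})}$ executions of $\mathcal{A}$. The main point requiring care is that the map $\lambda \mapsto w(\mathcal{A}(\lambda))$ need not be monotone, since $\mathcal{A}$ is an arbitrary $r$-approximation algorithm; thus the search does not locate a genuine feasibility threshold. What makes the argument go through is precisely that the invariant demands only one feasible and one infeasible endpoint, and a single evaluation at the midpoint always lets me retain such a pair while discarding half of the search interval.
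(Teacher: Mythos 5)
Your proposal is correct and follows essentially the same route as the paper: dispose of the case $w(\mathcal{A}(0)) \leq L$ (where $\mathcal{A}(0)$ is already an $r$-approximation for $\Pi$, since $\Pi(0)$ is the unconstrained problem), then binary search on $[0,\lambda_{max}]$ maintaining one infeasible and one feasible endpoint until the interval width drops below $\eps/L$, which takes the claimed $\ceil{\log(\lambda_{max})+\log(L)+\log(\eps^{-1})}$ executions. Your explicit statement of the invariant and the observation that $\lambda \mapsto w(\mathcal{A}(\lambda))$ need not be monotone make precise the step the paper compresses into ``can be easily found using binary search,'' but the underlying argument is identical.
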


We note that when $\mathcal{A}$ is a randomized approximation
algorithm whose
{\em expected} performance ratio is $r$,
a simple binary search may not output solutions that
satisfy the conditions of Theorem \ref{lemma:lagrangian_relaxation}.
In this case,
we repeat the executions
of $\mathcal{A}$ for the same input
and select the solution of maximal value.
For some pre-selected values $\beta >0$ and $\delta >0$,
we can guarantee that the probability that any of the used solutions
is not a $(r-\beta)$-approximation is
bounded by $\delta$.  Thus, with appropriate selection of the values of
$\beta$ and $\delta$, we get a result similar to the result in
Theorem \ref{lemma:lagrangian_relaxation}.
We discuss this case in detail in the full version of the paper.

\section{Approximation Algorithms for Subset Selection Problems}
\label{sec:rounding}
In this section we develop an approximation technique for subset selection
problems. We start with some definitions and notation.
Given a universe $U$, let $X \subseteq 2^U$ be a domain, and
$f:X \rightarrow \mathbb{N}$ a set function. For a subset $S \subseteq U$, let
$w(S)= \sum_{s \in S} w_s$, where $w_s \geq 0$ is the \emph{weight} of the element $s \in
U$.

\begin{definition}
The problem
\[
\Gamma:~ \max_{S \in X} ~f(S)~~~
\]
\negA
\negA
\begin{equation}
\label{eq:subset_constraint}
\mbox{subject to:} ~ w(S) \leq L
\end{equation}
is a \emph{subset selection problem with a linear constraint} if $X$
is a lower ideal, namely, if $S \in X$ and
$S' \subseteq S$ then $S' \in X$, and $f$ is
a linear non-decreasing set function with $f(\emptyset)=0$.\footnote{For simplicity,
we assume throughout the discussion that $f(\cdot)$ is a {\em linear}
function; however, all of the results in this section hold also for
the more general case where $f: 2 ^S \rightarrow \mathbb{N}$ is a
non-decreasing {\em submodular}
set function, for any $S \in X$.}
\end{definition}

Note that subset selection problems with linear constraints
are in the form of \eqref{prb:general_constraint}, and
the Lagrangian relaxation of any problem $\Gamma$ in this class
is $\Gamma(\lambda)=\max_{S\in X} f(S)- \lambda w(S)$; therefore,
the results of Section \ref{sec:technique} hold.

Thus, for example, BGAP  can be formulated
as the following subset selection problem with linear constraint. The
universe $U$
consists of all pairs $(i,j)$ of item $1 \leq i \leq n$
and bin $ 1 \leq j \leq m$. The domain $X$ consists of
all the subsets $S$ of $U$, such that each item appears at
most once (i.e., for any item $1 \leq i \leq n$, $\left| \{ (i',j') \in S : i'=i \}\right| \leq 1$),
and the collection of items that appears with a bin $j$, i.e.,
$\{ i : (i,j) \in S \}$ defines
a feasible assignment of items to bin $j$. It is easy to see that $X$
is indeed a lower ideal.
The function $f$ is
$f(S)= \sum_{(i,j)\in S} f_{i,j}$, where $f_{i,j}$ is the profit
from the assignment of item $i$ to bin $j$, and $w(S) = \sum_{(i,j)\in S} w_{i,j}$
where $w_{i,j}$ is the size of item $i$ when assigned to bin $j$.

The Lagrangian relaxation of BGAP is then
$$\max_{S \in X} f(S) -\lambda w(S)=  \max_{S \in X} \sum_{(i,j)\in S}
(f_{i,j} - \lambda w_{i,j})\enspace{.}$$
The latter can be interpreted as the following
instance of GAP: if $f_{i,j}-\lambda w_{i,j} \geq 0$ then
set $f_{i,j}-\lambda w_{i,j}$ to be the profit from assigning item $i$ to bin $j$; otherwise,
make item $i$ infeasible for bin $j$ (set the size of item $i$
to be greater than the capacity of bin $j$).

We now show how the Lagrangian relaxation technique described in Section
\ref{sec:technique} can be applied to subset selection problems.
Given a problem $\Gamma$ in this class, suppose that $\ca$
is a $r$-approximation algorithm for $\Gamma(\lambda)$, for some $r \in
(0,1)$. To find $\lambda_1, \lambda_2$ and ${SOL}_1, {SOL}_2$,
the binary search of Section \ref{sec:finding}
can be applied over the range $[0,p_{max}]$, where
\begin{equation}
\label{eq:def_pmax}
p_{max} = \max_{s \in U} f(s)
\end{equation}
is the maximum profit of any element in the universe $U$.
To obtain the solutions $S_1, S_2$ which correspond to $\lambda_1,
\lambda_2$, the number of calls to ${\ca}$ in the binary search is
bounded by $O(\log(\frac{L \cdot p_{max}}{\eps}))$.

Given the solutions $S_1,S_2$
satisfying the conditions of
Theorem \ref{lemma:lagrangian_relaxation},
consider the case where, for some $\alpha \in [1- r, 1]$,
property \ref{s_2_case} (in the theorem)
holds. Denote the value of an optimal solution for $\Gamma$
by $\mO$. Given a solution $S_2$ such that
\begin{equation}
\label{eq:beta_appx}
f(S_2) \geq (1-\alpha - \eps) \frac{w(S_2)}{L} \cdot \mO,
\end{equation}
our goal is to find a solution $S'$ such that
$w(S') \leq L$ (i.e., $S'$ is valid for $\Gamma$),
and $f(S')$ is an approximation for $\mO$.
We show below how $S'$ can be obtained from $S_2$. We first consider
(in Section \ref{sec:rounding_card}) instances with unit weights. We
then describe (in Section \ref{sec:rounding_linear}) a scheme for
general weights. Finally, we give (in Section \ref{sec:enumeration})
a scheme which
yields improved approximation ratio for general instances, by
applying enumeration.

\subsection{Unit Weights}
\label{sec:rounding_card}
Consider first the special case where
$w_s = 1$ for any $s\in U$ (i.e., $w(S)= |S|$; we refer to
(\ref{eq:subset_constraint}) in this case
as \emph{cardinality constraint}).
\comment{
To obtain $S'$, select the $L$ elements in $S_2$ with
the highest profits.\footnote{When $f$ is a submodular
function,
iteratively select the element $s \in S_2$ which maximizes  $f(T\cup \{s\})$,
where $T$ is the subset of elements chosen in
the previous iterations.}
It follows from (\ref{eq:beta_appx}) that
$f(S') \geq (1-\alpha - \eps) \cdot \mO = (\frac{r}{\1+r} - \eps)\mO$.
}

Suppose that we have solutions $S_1,S_2$
which satisfy the conditions
of Theorem \ref{lemma:lagrangian_relaxation}, then by
taking $\alpha = \frac{1}{1+r}$ we get that
either $f(S_1) \geq (\frac{r}{1+r} -\eps) \mO$,
or $f(S_2) \geq ( \frac{r}{1+r} -\eps)
\frac{w(S_2)}{L} \mO$.
If the former holds
then we
have a ($\frac{r}{1+r}-\eps$)-approximation
for the optimum; otherwise, $f(S_2) \geq ( \frac{r}{1+r} -\eps)
\frac{w(S_2)}{L} \mO$.
To obtain $S'$, select the $L$ elements in $S_2$ with
the highest profits.\footnote{When $f$ is a submodular
function,
iteratively select the element $s \in S_2$ which maximizes  $f(T\cup \{s\})$,
where $T$ is the subset of elements chosen in
the previous iterations.}
It follows from (\ref{eq:beta_appx}) that
$f(S') \geq (1-\alpha - \eps) \cdot \mO = (\frac{r}{1+r} - \eps)\mO$.
Combining the above with the result of Theorem
\ref{lemma:finding}, we get the following.
\begin{theorem}
\label{lemma:cardinality}
Given a subset selection problem $\Gamma$ with unit weights,
an algorithm $\mathcal{A}$ which
yields a $r$-approximation for $\Gamma(\lambda)$ and
$\lambda_{max}$, such that
$w(\mathcal{A}(\lambda_{max})) \leq L$, a
$(\frac{r}{r+1} -\eps)$-approximation for
$\Gamma$ can be derived by using $\mathcal{A}$ and selecting among
$S_1, S'$ the set with highest profit.
The number of calls to $\mathcal{A}$ is
$O(\log(\frac{L \cdot p_{max}}{\eps}))$, where $p_{max}$ is given in (\ref{eq:def_pmax}).
\end{theorem}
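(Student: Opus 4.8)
The plan is to chain together the two results already in hand: first invoke Theorem~\ref{lemma:finding} to obtain the needed solutions by binary search, then apply the combination lemma (Theorem~\ref{lemma:lagrangian_relaxation}) with a single, carefully chosen value of $\alpha$, and finally repair the infeasible solution $S_2$ by a top-$L$ truncation. Running the binary search of Theorem~\ref{lemma:finding} over $[0,p_{max}]$ (so that $\lambda_{max}=p_{max}$) either returns a solution that is already an $r$-approximation for $\Gamma$, in which case we are done since $r \ge \frac{r}{r+1} > \frac{r}{r+1}-\eps$, or returns two solutions $S_1,S_2$ satisfying the hypotheses of Theorem~\ref{lemma:lagrangian_relaxation}, namely $w(S_1)\le L \le w(S_2)$ together with $\lambda_2 \le \lambda_1 < \lambda_2 + \eps/L$.

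The key idea is to fix $\alpha = \frac{1}{1+r}$. First I would verify this choice is admissible, i.e.\ $\alpha \in [1-r,1]$: the upper bound is immediate, and the lower bound $1-r \le \frac{1}{1+r}$ rearranges to $1-r^2 \le 1$. With this $\alpha$ we have $\alpha r = \frac{r}{1+r}$ and $1-\alpha = \frac{r}{1+r}$, so the two alternatives of Theorem~\ref{lemma:lagrangian_relaxation} become precisely $f(S_1)\ge \frac{r}{1+r}f(S^*)$ and $f(S_2) \ge (\frac{r}{1+r}-\eps)\frac{w(S_2)}{L}\mO$, the latter being exactly~\eqref{eq:beta_appx}. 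In the first case $S_1$ is itself feasible, since $w(S_1)\le L$, and already attains the claimed ratio.

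In the second case I would build $S'$ by taking the $L$ elements of $S_2$ of largest individual profit (note $w(S_2)=|S_2|\ge L$ under unit weights). Feasibility is automatic: $|S'|=L$ gives $w(S')\le L$, and since $X$ is a lower ideal and $S' \subseteq S_2 \in X$, we get $S'\in X$. The heart of the argument, and the only place where a genuine estimate is needed, is the averaging bound $f(S') \ge \frac{L}{|S_2|}f(S_2)$: because $f$ is linear, the $L$ highest-profit elements carry at least an $L/|S_2|$ fraction of $f(S_2)$, as the top-$L$ average dominates the overall average. Substituting $|S_2|=w(S_2)$ and~\eqref{eq:beta_appx} gives $f(S') \ge \frac{L}{w(S_2)}\cdot(\frac{r}{1+r}-\eps)\frac{w(S_2)}{L}\mO = (\frac{r}{1+r}-\eps)\mO$. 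Hence returning the better of $S_1$ and $S'$ yields a $(\frac{r}{r+1}-\eps)$-approximation, and the number of calls is inherited directly from Theorem~\ref{lemma:finding} with $\lambda_{max}=p_{max}$, namely $O(\log(L\,p_{max}/\eps))$.

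The only subtlety I would flag is the submodular extension promised in the footnote: there $S'$ must be grown greedily, and the bound $f(S')\ge \frac{L}{|S_2|}f(S_2)$ needs the standard monotone-submodular averaging argument rather than a one-line comparison of sorted profits. That is the step I expect to demand the most care, whereas for the linear $f$ stated in the theorem the truncation estimate is entirely elementary.
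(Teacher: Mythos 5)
Your proposal is correct and follows essentially the same route as the paper: binary search via Theorem~\ref{lemma:finding} with $\lambda_{max}=p_{max}$, the choice $\alpha=\frac{1}{1+r}$ in Theorem~\ref{lemma:lagrangian_relaxation}, and repair of $S_2$ by keeping its $L$ most profitable elements, whose feasibility and value bound follow from the lower-ideal property and linearity of $f$ exactly as you argue. Your explicit verification that $\alpha\in[1-r,1]$ and your remark on the greedy variant needed for submodular $f$ match the paper's (more terse) treatment, including its footnote.
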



\comment{
Combining the above with the result of Theorem
\ref{lemma:finding}, we get the following.
\begin{theorem}
\label{lemma:cardinality}
Given a subset selection problem $\Gamma$ with unit weights,
an algorithm $\mathcal{A}$ which
yields a $r$-approximation for $\Gamma(\lambda)$, and
$\lambda_{max}$, such
$w(\mathcal{A}(\lambda_{max})) \leq L$, a
$(\frac{r}{r+1} -\eps)$-approximation for
$\Gamma$ can be derived by using $\mathcal{A}$.
The number of calls for $\mathcal{A}$ is
linear in the input size and $(\log (1/\eps))$,
plus a linear time processing.
\end{theorem}
}

\subsection{Arbitrary Weights}
\label{sec:rounding_linear}

For general element weights, we may assume w.l.o.g. that,
for any $s \in U$, $w_s \leq L$.
We partition $S_2$ to a collection of
up to $\frac{2W(S_2)}{L}$
disjoint sets $T_1, T_2, \ldots$
such that $w(T_i) \leq  L$ for all $i \geq 1$. A simple
way to obtain such sets is by adding elements of $S_2$ in arbitrary
order to $T_i$ as long as  we do not exceed the budget $L$.
A slightly more efficient implementation has a running time that
is linear in the size of $S_2$ (details omitted).



\begin{lemma}
Suppose that $S_2$ satisfies \eqref{eq:beta_appx} for some $\alpha \in [1-
r,1]$, then there exists $i \geq 1$ such
that $f(T_i) \geq \frac{1-\alpha - \eps}{2}\cdot \mO$.
\end{lemma}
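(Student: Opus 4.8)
The plan is to prove this by a simple averaging (pigeonhole) argument over the parts $T_1, T_2, \ldots$ of the partition of $S_2$. First I would record the two facts that the greedy construction guarantees: the number of parts is at most $\frac{2w(S_2)}{L}$, and each part satisfies $w(T_i) \le L$. Since $f$ is linear (or, more generally, submodular), the values of the parts sum to at least the value of the whole, i.e.\ $\sum_{i} f(T_i) \ge f(S_2)$, with equality in the linear case, because the $T_i$ are disjoint and together form $S_2$.

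Next I would pass from this sum to the maximum term. Letting $m$ denote the number of parts, averaging yields
\[
\max_i f(T_i) \ \ge\ \frac{1}{m}\sum_i f(T_i)\ \ge\ \frac{f(S_2)}{m}\ \ge\ \frac{L}{2w(S_2)}\, f(S_2),
\]
where the final step uses $m \le \frac{2w(S_2)}{L}$. I would then substitute the hypothesis \eqref{eq:beta_appx}, namely $f(S_2) \ge (1-\alpha-\eps)\frac{w(S_2)}{L}\,\mO$, into this bound. The factors $w(S_2)$ and $L$ cancel cleanly, leaving
\[
\max_i f(T_i)\ \ge\ \frac{L}{2w(S_2)}\cdot (1-\alpha-\eps)\frac{w(S_2)}{L}\,\mO\ =\ \frac{1-\alpha-\eps}{2}\,\mO,
\]
so the index $i$ attaining the maximum is the one asserted by the lemma.

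Since the whole argument reduces to a one-line averaging step, there is no substantial obstacle; the only points that merit care are $(i)$ justifying $\sum_i f(T_i) \ge f(S_2)$ in the submodular case, where one invokes subadditivity rather than additivity, and $(ii)$ confirming the bound $m \le \frac{2w(S_2)}{L}$ on the number of parts. The latter follows from the greedy construction: any two consecutive parts have combined weight exceeding $L$, because the first element that opens a new part failed to fit in the previous one (and here the assumption $w_s \le L$ for all $s \in U$ ensures every element is placeable). Both facts are immediate from the setup preceding the statement, so they can be cited rather than re-derived.
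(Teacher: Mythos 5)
Your proof is correct and is essentially identical to the paper's own argument: both use the pigeonhole/averaging principle over the parts $T_1,\ldots,T_N$ of the greedy partition, the bound $N \leq \frac{2w(S_2)}{L}$, and a direct substitution of \eqref{eq:beta_appx} to get $\max_i f(T_i) \geq \frac{1-\alpha-\eps}{2}\cdot\mO$. The only difference is cosmetic: you additionally spell out why the partition has at most $\frac{2w(S_2)}{L}$ parts (consecutive parts have combined weight exceeding $L$), a fact the paper asserts in the construction preceding the lemma rather than inside the proof.
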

\begin{proof}
Clearly, $f(T_1)+...+f(T_N) = f(S_2)$,
where $N \leq \frac{2w(S_2)}{L} $ is the number of disjoint
sets. By the pigeon hole principle
there exists $1 \leq i \leq N$ such that
$f(T_i) \geq \frac{f(S_2)}{N} \geq
\frac{L \cdot f(S_2)}{2 w( S_2)} \geq
\frac{1 - \alpha -\eps}{2} \cdot \mO$.
\end{proof}

\comment{
Trivial for the linear case, hence omitted
\begin{proof}
As before, let $f(S)=f(c_2,S)$.
Then
$f(T_1)+...+f(T_N) \geq f(S_2)$, and by the
pigeon hole principle and Observation \ref{observation:rounding_comb}
there is $i$ such that
$f(c_2,T_i) \geq \frac{f(S_2)}{N} \geq
\frac{L \cdot f(S_2)}{2 w( S_2)} \geq
\frac{\beta}{2} \cdot \mO$.
\hspace*{\fill} $\Box$ \vskip \belowdisplayskip
\end{proof}
}

Assuming we have solutions $S_1, S_2$ which satisfy the conditions
of Theorem \ref{lemma:lagrangian_relaxation}, by
taking $\alpha = \frac{1}{1+2r}$ we get that
either $f(S_1) \geq (\frac{r}{1+2r} -\eps) \mO$,
or $f(S_2) \geq ( \frac{2r}{1+2r} -\eps)
\frac{w(S_2)}{L} \mO$ and can be converted to $S'$ (by setting
$S'=T_i$ for $T_i$ which maximizes $f(T_i)$),
such that $f(S') \geq (\frac{r}{1+2r} -\eps) \mO$, i.e.,
we get a $(\frac{r}{1+2r}- \eps)$-approximation for $\Gamma$.

Combining the above with the result of Theorem
\ref{lemma:finding}, we get the following.
\begin{theorem}
\label{lemma:general_case}
Given a subset selection problem  $\Gamma$ with a linear
constraint, an algorithm $\mathcal{A}$ that
yields an $r$-approximation for $\Gamma(\lambda)$, and
$\lambda_{max}$, such that
$w(\mathcal{A}(\lambda_{max})) \leq L$, an
$(\frac{r}{2r+1} -\eps)$-approximation for
$\Gamma$ can be obtained using
$O(\log(\frac{L \cdot p_{max}}{\eps}))$ calls to $\mathcal{A}$, where $p_{max}$ is given in (\ref{eq:def_pmax}).
\end{theorem}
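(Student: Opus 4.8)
The plan is to assemble the three facts already established: the binary search of Theorem~\ref{lemma:finding}, the case dichotomy of Theorem~\ref{lemma:lagrangian_relaxation}, and the partition lemma stated just above. First I would run the binary search of Theorem~\ref{lemma:finding} over the range $[0,p_{max}]$ (so that $\lambda_{max}\le p_{max}$), using the hypothesis that $w(\mathcal{A}(\lambda_{max}))\le L$. This either returns a solution that is already an $r$-approximation for $\Gamma$---in which case we are done, since $r\ge\frac{r}{2r+1}>\frac{r}{2r+1}-\eps$ and such a solution is feasible ($w\le L$)---or it returns two solutions $S_1,S_2$ meeting the hypotheses of Theorem~\ref{lemma:lagrangian_relaxation}: $\lambda_2\le\lambda_1<\lambda_2+\eps'$ with $\eps'=\eps/L$, and $w(S_1)\le L\le w(S_2)$. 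By Theorem~\ref{lemma:finding} this costs $\lceil\log(\lambda_{max})+\log(L)+\log(\eps^{-1})\rceil=O\!\left(\log\frac{L\cdot p_{max}}{\eps}\right)$ calls to $\mathcal{A}$.

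Next I would apply Theorem~\ref{lemma:lagrangian_relaxation} with the balancing choice $\alpha=\frac{1}{2r+1}$. This value is forced by equating the two guarantees: case~1 gives $f(S_1)\ge\alpha r\cdot\mO$, whereas case~2 yields, after the factor-$2$ loss of the partition lemma, a bound of $\frac{1-\alpha}{2}\cdot\mO$; setting $\alpha r=\frac{1-\alpha}{2}$ gives $\alpha=\frac{1}{2r+1}$ with common value $\frac{r}{2r+1}$. In case~1 we directly obtain $f(S_1)\ge\frac{r}{2r+1}\,\mO\ge\left(\frac{r}{2r+1}-\eps\right)\mO$, and $S_1$ is feasible since $w(S_1)\le L$. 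In case~2 the theorem gives $f(S_2)\ge(1-\alpha-\eps)\frac{w(S_2)}{L}\mO=\left(\frac{2r}{2r+1}-\eps\right)\frac{w(S_2)}{L}\mO$, which is exactly inequality~\eqref{eq:beta_appx} for this $\alpha$. I would then partition $S_2$ into the disjoint sets $T_1,T_2,\ldots$ with $w(T_i)\le L$ (so each $T_i\in X$ by the lower-ideal property, hence feasible) and invoke the partition lemma to obtain an index $i$ with $f(T_i)\ge\frac{1-\alpha-\eps}{2}\mO=\left(\frac{r}{2r+1}-\frac{\eps}{2}\right)\mO\ge\left(\frac{r}{2r+1}-\eps\right)\mO$; I set $S'=T_i$.

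Since the algorithm cannot know in advance which case occurs, the final step is to compute both $S_1$ and $S'$ and output whichever has larger value; by the two estimates above, the output is a feasible solution of value at least $\left(\frac{r}{2r+1}-\eps\right)\mO$, which is the claimed ratio, obtained within the stated number of calls to $\mathcal{A}$. The only genuinely delicate point---the step I expect to be the main obstacle---is verifying that $\alpha=\frac{1}{2r+1}$ lies in the admissible interval $[1-r,1]$ required by Theorem~\ref{lemma:lagrangian_relaxation}. The bound $\alpha\le1$ is immediate, while $\alpha\ge1-r$ is equivalent to $r(1-2r)\le0$; this holds cleanly for $r\ge\frac12$, and for smaller $r$ one must re-examine whether the nonnegative slack term $\eps'L+(r-1+\alpha)f(s^*)$ discarded in the proof of Theorem~\ref{lemma:lagrangian_relaxation} can still be absorbed. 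Everything else---feasibility of $S'$, the pigeonhole count in the partition lemma, and the logarithmic call bound---is routine given the earlier results.
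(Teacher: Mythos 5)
Your proposal is, step for step, the paper's own proof: the same binary search from Theorem~\ref{lemma:finding} over $[0,p_{max}]$, the same choice $\alpha=\frac{1}{1+2r}$ in Theorem~\ref{lemma:lagrangian_relaxation}, the same greedy partition of $S_2$ into feasible pieces $T_i$ with the pigeonhole bound, and the same best-of-two output. For $r\ge\frac12$ it is complete and correct.

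The point you flag at the end, however, is not merely delicate --- it is a genuine gap, and it is equally a gap in the paper's own proof, which invokes Theorem~\ref{lemma:lagrangian_relaxation} with $\alpha=\frac{1}{1+2r}$ without ever checking the hypothesis $\alpha\ge 1-r$; as you compute, that inequality is equivalent to $r(1-2r)\le 0$, i.e.\ it holds exactly when $r\ge\frac12$. Moreover, the question you leave open (can the slack still be absorbed when $r<\frac12$?) has a negative answer: for $\alpha<1-r$ the discarded term $(r-1+\alpha)f(s^*)$ is negative and scales with $f(s^*)$, which may be arbitrarily large, while the available slack $\eps' L=\eps$ is a constant (the proof only assumes $f(s^*)\ge 1$). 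Re-deriving the second case of the dichotomy for $\alpha<1-r$, one has $(1-\alpha)\frac{L_2}{L}+(r-1+\alpha)=r\frac{L_2}{L}+(1-\alpha-r)\left(\frac{L_2}{L}-1\right)$, which is tight at $L_2=L$ where it equals $r<1-\alpha$; so the best coefficient salvageable in front of $\frac{w(S_2)}{L}\mO$ is $r-\eps$, not $1-\alpha-\eps$, and after the factor-$2$ partition loss this yields only $\left(\frac{r}{2}-\eps\right)\mO$. Staying inside the admissible interval $[1-r,1]$ does no better: the balancing point $\frac{1}{1+2r}$ lies outside it, so the constrained optimum is at $\alpha=1-r$, again giving $\min\left\{(1-r)r,\frac{r-\eps}{2}\right\}=\frac{r-\eps}{2}$. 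Since $\frac{r}{2}<\frac{r}{1+2r}$ precisely when $r<\frac12$, both your argument and the paper's establish only a $\left(\frac{r}{2}-\eps\right)$-approximation in that regime, not the stated $\left(\frac{r}{2r+1}-\eps\right)$. The paper's applications are unaffected ($r=\frac12$ for interval scheduling, $r=1-e^{-1}>\frac12$ for GAP; for BCRS $r=\frac12-\eps$ and the discrepancy is swallowed by the additive $\eps$), but the theorem in full generality is proved only for $r\ge\frac12$ --- a restriction you were right to notice and the paper misses.
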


\subsection{Improving the Bounds via Enumeration}
\label{sec:enumeration}
In this section we present an algorithm that
uses enumeration to obtain a new problem, for which
we apply our Lagrangian relaxation technique. This enables
to improve the approximation ratio in
Section \ref{sec:rounding_linear}
to match the bound obtained for unit weight inputs (in Section
\ref{sec:rounding_card}).\footnote{The running time
when applying enumeration depends on the size of the universe (which may be
super-polynomial in the input size; we elaborate on that in
Section \ref{sec:bba}).}
\comment{
Though enumeration is applicable in
many cases, for
some cases the interpretation of a problem as a subset selection
problem yields an unbounded or super-polynomial number of elements,
as in the case of budgeted bandwidth allocation, when the
input is given with time windows (we elaborate on that in
Section \ref{sec:bba}). In such cases enumeration,
of even a small number of elements, cannot be implemented in
polynomial time.
}

For some $k \geq 1$, our algorithm initially `guesses' a subset $T$
of (at most) $k$ elements with the highest profits in some optimal solution.
Then, an approximate solution is obtained by adding
elements in $U$, whose values are bounded by $f(T)/|T|$.
Given a subset $T \subseteq U$, we define
$\Gamma_T$, which can be viewed as the sub-problem that `remains' from $\Gamma$
once we select $T$ to be the initial solution.
Thus, we refer to $\Gamma_T$ below as the {\em residual problem with
respect to T}.
Let
\begin{equation}
\label{eq:largest_density}
X_T = \left\{ S \left | ~ S \cap T =\emptyset, S \cup T \in X,
\mbox{ and } \forall s \in S: f(\{s\}) \leq \frac{f(T)}{|T| }\right. \right\}
\end{equation}
Consider the residual problem  $\Gamma_T$ and its Lagrangian relaxation
$\Gamma_T(\lambda)$:

\[
\begin{array}{lcr}
\begin{array}{ll}
\Gamma_T & \mbox{maximize~} f(S)
~~~~~~~
\\
 \mbox{subject~to}:&   S \in X_T\\
 & w(S) \leq L-w(T)
\end{array}     &
 ~~
 &
 \begin{array}{ll}
 \Gamma_T(\lambda)&\mbox{maximize~} f (S) -\lambda w(S)\\
 \mbox{subject~to}:& S \in X_T
 \end{array}
\end{array}
\]

In all of our examples, the residual problem $\Gamma_T$ is a smaller instance of
the problem $\Gamma$, and therefore, its Lagrangian relaxation is an instance
of the Lagrangian relaxation of the original problem. Assume that we have an approximation algorithm
$\mathcal{A}$ that, given $\lambda$ and a pre-selected set $T \subseteq U$ of
at most
$k$ elements, for some constant $k >1$, returns an $r$-approximation for
$\Gamma_T(\lambda)$ in polynomial time (if there is
a feasible solution for $\Gamma_T$). Consider the following
algorithm, in which we take $k=2$:
\begin{center}
\fbox{
\begin{minipage}{0.8\textwidth}
\begin{alg} General approximation algorithm
\label{alg:general}
\begin{enumerate}
 \item
 \label{enum:main_loop}
 For any $T \subseteq U$ such that $|T| \leq k$,
 find solutions $S_1,S_2$ (for $\Gamma_T(\lambda_1),\Gamma_T(\lambda_2)$ respectively)
  satisfying the conditions of
 Theorem \ref{lemma:lagrangian_relaxation} with respect to the problem $\Gamma_T$. Evaluate
 the following solutions:

 \begin{enumerate}
 \item
 $T \cup S_1$
 \item
 \label{enum:s2_bags}
 Let $S'= \emptyset$, add elements to $S'$ in the
 following manner:

 Find an element $x \in S_2 \backslash S'$ which maximizes
 the ratio $\frac{f(\{x\})}{w_x}$.
 If $w(S' \cup \{x\}) \leq L-w(T)$ then add $x$ to $S'$
 and repeat the process, otherwise return $S' \cup T$ as a
 solution.
 \end{enumerate}
 \item
 Return the best of the solutions found in Step 1.
 \\
 \end{enumerate}
 \end{alg}
 \end{minipage}
 }
 \end{center}
 Let $\mO= f(S^*)$ be an optimal solution for
 $\Gamma$, where $S^*= \{x_1, \ldots , x_h \}$.
Order the elements in $S^*$ such that
$f(\{x_1\}) \geq f(\{x_2\})\geq \ldots \geq f(\{x_h\})$.
\comment{
REMOVED DUE TO SUBMODULARITY
$f(\{x_1\}) \geq f_{\{x_1 \}} (\{x_2\}) \geq \cdot \! \cdot \! \cdot
f_{\{ x_1, x_2, \ldots , x_{i-1} \}} (\{ x_i \}) \geq
\cdot \! \cdot \! \cdot$,
 namely,
for any $1 \leq i \leq h$, $x_i$ is the element for which
$f_{\{ x_1, x_2, \ldots , x_{i-1} \}} (\{ x_i \})$ is maximized.
}
\begin{lemma}
Let $T_i = \{x_1, \ldots, x_i \}$, for some $1 < i \leq h$,
then for any $j > i$, $f(\{x_j \}) \leq \frac{f(T_i)}{i}$.
\end{lemma}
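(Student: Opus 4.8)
The plan is to exploit the linearity of $f$ together with the fact that the elements of the optimal solution $S^*$ have been ordered so that $f(\{x_1\}) \geq f(\{x_2\}) \geq \cdots \geq f(\{x_h\})$. The statement is essentially an averaging argument: the average singleton profit over the $i$ largest elements dominates the profit of any smaller element.

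Concretely, I would first invoke linearity of $f$ and $f(\emptyset)=0$ to express the profit of the prefix set exactly as a sum of singleton profits,
\[
f(T_i) = \sum_{l=1}^{i} f(\{x_l\}).
\]
Next I would use the decreasing ordering: for every index $l$ with $1 \leq l \leq i$ we have $f(\{x_l\}) \geq f(\{x_i\})$, so each of the $i$ summands is at least $f(\{x_i\})$, giving $f(T_i) \geq i \cdot f(\{x_i\})$, i.e. $f(\{x_i\}) \leq f(T_i)/i$. Finally, for any $j > i$ the ordering yields $f(\{x_j\}) \leq f(\{x_i\})$, and chaining the two inequalities produces the claim $f(\{x_j\}) \leq f(\{x_i\}) \leq f(T_i)/i$.

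There is essentially no obstacle in the linear case; it is a one-line averaging (pigeonhole-flavored) estimate once the sum decomposition is written down. The only point requiring genuine care is the submodular generalization promised in the footnote, where $f(T_i)$ can no longer be split as a sum of singleton values. In that setting I would instead order the elements by their marginal contributions $f_{\{x_1,\dots,x_{l-1}\}}(\{x_l\})$, use the telescoping identity $f(T_i)=\sum_{l=1}^{i} f_{\{x_1,\dots,x_{l-1}\}}(\{x_l\})$, and appeal to submodularity to argue that these marginals are non-increasing and dominate the later singleton gains $f(\{x_j\})$ for $j>i$. Thus the submodular version is the place where the argument becomes delicate, whereas for the linear $f$ assumed throughout this section the bound follows immediately.
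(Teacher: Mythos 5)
Your proof is correct and is precisely the one-line averaging argument the paper intends: the paper states this lemma without any proof, treating it as immediate from linearity ($f(T_i)=\sum_{l=1}^{i}f(\{x_l\})$) and the ordering $f(\{x_1\})\geq\cdots\geq f(\{x_h\})$, which is exactly what you wrote. Your closing remark about the submodular generalization (ordering by marginal gains and telescoping) also matches how the authors handle that case in the footnoted extension, so there is nothing missing.
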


In analyzing our algorithm, we consider the iteration in which $T=T_k$.
Then $S^* \setminus T_k$ is an optimal solution for $\Gamma_T$
(since $S^*
\setminus T_k \in X_{T_k}$ as in \eqref{eq:largest_density}); thus, the optimal value
for $\Gamma_{T_k}$ is at least
$f( S^* \backslash T_k) = f(S^*) - f(T_k)$.

\comment{
 by decreasing
 profits $x_1,x_2,...,x_n$  - $x_1$ is the element
 for which $f(x)$ is maximal, $x_2$ is the element
 for which $f_{ \{x_1\}} (x_2)$ is maximal and so
 forth. Note that for any $i$, let
 $T_i= \{x_1, ... ,x_i\}$ and every $j> i$  then
 $f_{T_i} (x_j) \leq \frac{f(T_i)}{i}$. Which mean
 that the optimal value for $\Gamma_{T_k}$ is
 at least $f_{T_k}( S^* \backslash T_k) = f(S^*) - f(T)$.
}

\begin{lemma}
\label{lemma:enum_s'}
Let $S'$ be the set generated from $S_2$ by the
process in Step 1(b) of the algorithm.
Then
$f(S') \geq f(S_2) \frac{L-w(T)}{w(S_2)}
             - \frac{f(T)}{|T|}$
\end{lemma}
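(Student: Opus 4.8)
The plan is to recognize Step~1(b) as the classical greedy algorithm for \emph{fractional} knapsack restricted to the elements of $S_2$, and to charge the rounding loss to a single ``overflow'' element. Write $B = L - w(T)$ for the residual budget, and for $x \in S_2$ let $d_x = f(\{x\})/w_x$ be its density. Since $f$ is linear, $f(S_2) = \sum_{x \in S_2} f(\{x\})$ and $w(S_2) = \sum_{x \in S_2} w_x$. The loop adds elements of $S_2$ in order of nonincreasing density, maintaining $w(S') \le B$, and halts at the first element $x^{+} \in S_2 \setminus S'$ (the densest remaining one) for which $w(S') + w_{x^{+}} > B$.

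The main step is to compare $f(S')$ with the value of the optimal fractional knapsack on $S_2$ under capacity $B$, which I denote $\mathrm{frac}(B)$. Because greedy-by-density is optimal for fractional knapsack, the fractional optimum takes the prefix $S'$ in full and tops it off with a fraction of $x^{+}$, i.e. $\mathrm{frac}(B) = f(S') + (B - w(S'))\, d_{x^{+}}$. Two observations then finish the estimate. First, since the conditions of Theorem~\ref{lemma:lagrangian_relaxation} applied to $\Gamma_T$ force $w(S_2) \ge L - w(T) = B$, the uniform fractional selection taking each element of $S_2$ with the common factor $B/w(S_2) \le 1$ is feasible, so $\mathrm{frac}(B) \ge \frac{B}{w(S_2)} f(S_2)$. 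Second, the top-off term obeys $(B - w(S'))\, d_{x^{+}} \le w_{x^{+}} d_{x^{+}} = f(\{x^{+}\})$, because $x^{+}$ failing to fit means $B - w(S') < w_{x^{+}}$. Combining the fractional identity with these two bounds gives $f(S') \ge \frac{B}{w(S_2)} f(S_2) - f(\{x^{+}\})$.

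It remains to bound the single discarded element's profit, and this is exactly where the enumeration and the definition of the residual domain $X_T$ in \eqref{eq:largest_density} enter: since $S_2$ is a feasible solution for $\Gamma_T(\lambda_2)$ we have $S_2 \in X_T$, so every $s \in S_2$ — in particular $x^{+}$ — satisfies $f(\{s\}) \le f(T)/|T|$. Substituting $f(\{x^{+}\}) \le f(T)/|T|$ and $B = L - w(T)$ yields the claimed inequality $f(S') \ge f(S_2)\frac{L - w(T)}{w(S_2)} - \frac{f(T)}{|T|}$. I expect the only real delicacy to be the bookkeeping at the two boundary cases: when the loop exhausts $S_2$ (forcing $w(S_2) = B$, so $S' = S_2$ and there is no overflow element), and when even the densest element overflows (so $S' = \emptyset$). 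In both, the fractional identity persists with the convention that the top-off fraction is respectively $0$ or the full $B/w_{x^{+}}$, and the inequality degenerates gracefully, so no separate argument is needed. The key conceptual point — rather than any calculation — is the reduction of the greedy loss to a single element and the use of $S_2 \in X_T$ to cap that element by $f(T)/|T|$.
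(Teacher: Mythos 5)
Your proof is correct, and it rests on the same skeleton as the paper's: isolate the single overflow element, show that $S'$ together with that element captures at least a $\frac{L-w(T)}{w(S_2)}$ fraction of $f(S_2)$, and cap the overflow element's profit by $f(T)/|T|$ using $S_2\in X_T$. The difference is in how the middle inequality is established. The paper does it by a self-contained density-averaging computation: every $y\in S_2\setminus(S'\cup\{x\})$ has density at most that of $x$, hence at most the average density of $S'\cup\{x\}$, giving $f(S_2)\le f(S'\cup\{x\})\frac{w(S_2)}{w(S'\cup\{x\})}$, and then $f(S')+f(\{x\})=f(S'\cup\{x\})\ge f(S_2)\frac{L-w(T)}{w(S_2)}$ since $w(S'\cup\{x\})> L-w(T)$. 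You instead route through the fractional knapsack optimum $\mathrm{frac}(B)$ on $S_2$ with capacity $B=L-w(T)$: the classical greedy-optimality theorem gives the identity $\mathrm{frac}(B)=f(S')+(B-w(S'))\,d_{x^{+}}$, uniform scaling gives $\mathrm{frac}(B)\ge\frac{B}{w(S_2)}f(S_2)$ (feasible because $w(S_2)\ge B$, which indeed follows from the theorem's conditions applied to $\Gamma_T$), and the top-off term is at most $f(\{x^{+}\})$ since $B-w(S')<w_{x^{+}}$. What your version buys is modularity and transparency: the uniform-scaling step makes the $\frac{B}{w(S_2)}f(S_2)$ term appear for free, and the argument is reusable wherever a greedy prefix is compared to a fractional relaxation; the cost is that you invoke the fractional-greedy theorem as a black box, and its proof is essentially the paper's density comparison, so the paper's version is the more elementary and self-contained of the two. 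Your treatment of the boundary cases is also sound: if the loop exhausts $S_2$ then $w(S_2)=B$ and the bound is trivial (the paper instead rules this case out by noting $w(S_2)>L-w(T)$), and if $S'=\emptyset$ the identity and bounds degenerate correctly.
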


\begin{proof}
Note that the process cannot terminate when $S'=S_2$
since $w(S_2) > L - w(T)$. Consider the first element
$x$ that maximized the ratio
$\frac{f(\{x\})}{w_x}$,
but was not added to $S'$, since
$w(S' \cup \{x\}) > L- w(T)$.
By the linearity of $f$,
it is clear that
\begin{enumerate}
\item[(i)]
$\frac{f(S' \cup \{x\} ) } {w(S' \cup \{x\})}
\geq \frac{f( \{x\})}{w_x}$, and
 \item[(ii)]
For any $y \in S_2 \backslash (S' \cup \{x\})$,
$\frac{f(\{y\})} {w_y}  \leq \frac{f(\{x\})}{w_x}$.
\comment{
\begin{enumerate}
\item[(i)]
$\frac{f(S' \cup \{x\} ) } {w(S' \cup \{x\})}
\geq \frac{f( \{x\})}{w_x}$
\item[(ii)]
For any $y \in S_2 \backslash (S' \cup \{x\})~~$
$\frac{f(\{y\} )}
      {w_y}
  \leq \frac{f(\{x\})}{w_x}$
\end{enumerate}
}
\end{enumerate}
Thus, we get that for any
$y \in S_2 \backslash (S' \cup \{x\})$,
$
 \frac{f(\{y\})}
   {w_y}
  \leq
  \frac{f(S' \cup \{x\} ) } {w(S' \cup \{x\})},
$
and
$$
f(S_2) = f(S'\cup \{x\}) +\sum_{y\in S_2 \setminus (S' \cup \{x\})} f(\{y\})
\leq f(S' \cup \{x\}) \frac{ w(S_2)}{w(S' \cup \{x\})}.
$$
%
By the linearity of $f$, we get that
$f(S') + f(\{x\})=f(S' \cup \{x\})  \geq f(S_2) \frac{L - w(T)}{w(S_2)}$.
Since $x\in S_2 \in X_T$, we get $f(\{x\}) \leq \frac{f(T)}{|T|}$.
Hence $f(S') \geq f(S_2) \frac{L - w(T)}{w(S_2)} - \frac{f(T)}{|T|}$.

\end{proof}

Consider the iteration of Step \ref{enum:main_loop}.
in the above algorithm,
in which $T = T_2$ (assuming there are at least two elements
in the optimal solution; else $T= T_1$),
and the values of the solutions found in this iteration.
By Theorem \ref{lemma:lagrangian_relaxation}, taking
$\alpha= \frac{1}{1+r}$, one of
 the following holds:
 \begin{enumerate}
 \item
 \label{enum:s1_is_good}
 $f(S_1) \geq \frac{r}{1+r} [f(S^*)- f(T)]$
 \item
 \label{enum:s2_is_good}
 $f(S_2) \geq (1-r -\eps) [f(S^*)-f(T)]
           \frac{w(S_2)} {L-w(T)}$.
 \end{enumerate}

If \ref{enum:s1_is_good}. holds then we get
$f(S_1 \cup T) \geq f(T)+
    ( \frac{r}{1+r} - \eps )[ f(S^*) -f(T)]
 \geq (\frac{r}{1+r} - \eps)  f(S^*)$,
else we have that
 $f(S_2) \geq (\frac{r}{1+r} -\eps) [f(S^*)-f(T)]
           \frac{w(S_2)}{L-w(T)}$,
and by Lemma \ref{lemma:enum_s'},
$$
f(S') \geq f(S_2)\frac{L-w(T)}{w(S_2)} -
                \frac{f(T)}{|T|}\\
\geq (\frac{r}{1+r}-\eps)[f(S^*)-f(T) ] -
                \frac{f(T)}{|T|}.
$$
Hence, we have
\begin{eqnarray*}
f(S' \cup T)  &=& f(S') + f(T) \geq
  f(T)+ (\frac{r}{1+r}-\eps)[f(S^*)-f(T) ] -
                       \frac{f(T)}{|T|} \\
&=& (1- \frac{1}{k})f(T)+ (\frac{r}{1+r}-\eps)[f(S^*)-f(T) ]
\geq (\frac{r}{1+r}-\eps)f(S^*).
\end{eqnarray*}

The last inequality follows from choosing $k=2$, and
the fact that $\frac{1}{2} \geq \frac{r}{1+r}- \eps$.

\begin{theorem}
\label{thm:enumeration_ratio}
Algorithm \ref{alg:general} outputs an
$(\frac{r}{1+r} -\eps)$-approximation for $\Gamma$.
The number of calls to algorithm $\mathcal{A}$ is
$O( ( \log (p_{max})+ \log(L) + \log(\eps ^{-1}) )n^2)$,
where $n= \left| U \right|$ is the size of the universe of elements
for the problem $\Gamma$.
\end{theorem}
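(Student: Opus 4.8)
The plan is to prove the approximation guarantee by focusing on a single ``correct'' iteration of the enumeration loop, and to bound the number of calls by a counting argument over the guessed sets $T$. Since Algorithm~\ref{alg:general} returns the best solution found across all guesses $T$ with $|T|\le k=2$, it suffices to exhibit one iteration in which the produced solution already has value at least $(\frac{r}{1+r}-\eps)f(S^*)$. Let $S^*=\{x_1,\dots,x_h\}$ be an optimal solution with $f(\{x_1\})\ge\cdots\ge f(\{x_h\})$, and consider the iteration $T=T_2=\{x_1,x_2\}$ (or $T=T_1$ if $h=1$). First I would use the preceding ordering lemma to argue that $S^*\setminus T_2$ lies in $X_{T_2}$: every remaining element $x_j$ satisfies $f(\{x_j\})\le f(T_2)/2$, and $S^*\setminus T_2$ is feasible with residual budget $L-w(T_2)$. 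Hence the optimum of the residual problem $\Gamma_{T_2}$ is at least $f(S^*)-f(T_2)$, which is the quantity that drives the whole estimate.

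Next I would invoke Theorem~\ref{lemma:lagrangian_relaxation} applied to $\Gamma_{T_2}$ with $\alpha=\frac{1}{1+r}$, so that $1-\alpha=\frac{r}{1+r}$ and $\alpha r=\frac{r}{1+r}$ coincide. This yields two cases. If $f(S_1)\ge\frac{r}{1+r}[f(S^*)-f(T_2)]$, then combining with $T_2$ gives $f(S_1\cup T_2)\ge f(T_2)+(\frac{r}{1+r}-\eps)[f(S^*)-f(T_2)]\ge(\frac{r}{1+r}-\eps)f(S^*)$, using that the coefficient $1\ge\frac{r}{1+r}-\eps$ on the surviving $f(T_2)$ term is nonnegative. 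Otherwise $f(S_2)\ge(\frac{r}{1+r}-\eps)[f(S^*)-f(T_2)]\frac{w(S_2)}{L-w(T_2)}$, and I would feed this into Lemma~\ref{lemma:enum_s'} to get $f(S')\ge(\frac{r}{1+r}-\eps)[f(S^*)-f(T_2)]-\frac{f(T_2)}{2}$. Adding $f(T_2)$ leaves a surplus $(1-\frac1k)f(T_2)=\frac12 f(T_2)$; since $\frac12\ge\frac{r}{1+r}-\eps$, this surplus covers the missing $(\frac{r}{1+r}-\eps)f(T_2)$ and yields $f(S'\cup T_2)\ge(\frac{r}{1+r}-\eps)f(S^*)$. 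Either way the iteration $T=T_2$ produces a solution meeting the bound, so the algorithm does.

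For the running time I would argue as follows. The number of sets $T\subseteq U$ with $|T|\le 2$ is $O(n^2)$, where $n=|U|$. For each such $T$, the search of Section~\ref{sec:finding} applied to the residual problem $\Gamma_T$ (whose Lagrangian relaxation is an instance of $\Gamma(\lambda)$) finds $S_1,S_2$ satisfying the hypotheses of Theorem~\ref{lemma:lagrangian_relaxation} using $O(\log(L\cdot p_{max}/\eps))=O(\log(p_{max})+\log(L)+\log(\eps^{-1}))$ calls to $\mathcal{A}$, by Theorem~\ref{lemma:finding} over the range $[0,p_{max}]$. Multiplying the two counts gives the claimed $O((\log(p_{max})+\log(L)+\log(\eps^{-1}))\,n^2)$ calls.

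The step I expect to need the most care is the boundary handling rather than the main inequality chain. Specifically, I would need to check that for the guess $T=T_2$ a straddling pair $S_1,S_2$ with $w(S_1)\le L-w(T_2)\le w(S_2)$ actually exists: if $\mathcal{A}(0)$ is already feasible for the residual budget then it is directly an $r$-approximation for $\Gamma_{T_2}$ and combines with $T_2$ to beat the bound, so the interesting regime is exactly when the binary search returns two solutions on opposite sides of the budget. I would also verify that the density constraint $f(\{s\})\le f(T)/|T|$ built into $X_T$ is precisely what guarantees that the overflow element $x$ appearing in Lemma~\ref{lemma:enum_s'} satisfies $f(\{x\})\le f(T)/|T|$, which is exactly what makes the surplus term $(1-\frac1k)f(T)$ suffice with $k=2$.
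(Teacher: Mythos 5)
Your proposal is correct and takes essentially the same route as the paper's own proof: the same guess $T=T_2$ consisting of the two most profitable elements of $S^*$, the same observation that $S^*\setminus T_2\in X_{T_2}$ makes the residual optimum at least $f(S^*)-f(T_2)$, the same application of Theorem \ref{lemma:lagrangian_relaxation} with $\alpha=\frac{1}{1+r}$, the same use of Lemma \ref{lemma:enum_s'} with the surplus $(1-\frac{1}{k})f(T)$ absorbed via $\frac{1}{2}\geq\frac{r}{1+r}-\eps$, and the same $O(n^2)$-guesses-times-$O(\log(Lp_{max}/\eps))$-calls count. Your extra boundary check (that either a straddling pair $S_1,S_2$ exists or $\mathcal{A}(0)$ is already feasible for the residual budget) is a point the paper delegates to Theorem \ref{lemma:finding} rather than re-proving here.
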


We summarize the above discussion in the next result.
\begin{corollary}
\label{cor:lagrang}
Given a subset selection problem $\Gamma$ with a linear
constraint, an algorithm $\mathcal{A}$ that
yields an $r$-approximation for $\Gamma(\lambda)$, and
$\lambda_{max}$, such $w(\mathcal{A}(\lambda_{max})) \leq L$,
there is an  $(\frac{r}{1+r} -\eps)$-approximation
algorithm, such that the number of calls of the algorithm to $\mathcal{A}$
is polynomial in $\eps$, the size of the universe, $|U|$,
and the input size.
\end{corollary}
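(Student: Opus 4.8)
The plan is to invoke Algorithm \ref{alg:general} with $k=2$ directly, so the bulk of the work is to verify that its hypotheses are met and then assemble the ingredients already proven. First I would close the only real gap between the hypothesis of the corollary (an $r$-approximation $\mathcal{A}$ for $\Gamma(\lambda)$) and the hypothesis under which Theorem \ref{thm:enumeration_ratio} is stated (an $r$-approximation for the residual relaxation $\Gamma_T(\lambda)$), namely the passage to the residual problem. I would observe that for any $T$ with $|T| \le 2$, the residual domain $X_T$ of \eqref{eq:largest_density} is again a lower ideal: if $S \in X_T$ and $S' \subseteq S$, then $S' \cup T \subseteq S \cup T \in X$ forces $S' \cup T \in X$ since $X$ is a lower ideal, while $S' \cap T = \emptyset$ and the density cap $f(\{s\}) \le f(T)/|T|$ are inherited by every element of $S'$. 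Hence $\Gamma_T$ is itself a subset selection problem with a linear constraint over the sub-universe $\{s \in U : f(\{s\}) \le f(T)/|T|,\ \{s\}\cup T \in X\}$, and its Lagrangian relaxation $\Gamma_T(\lambda)=\max_{S \in X_T} f(S)-\lambda w(S)$ is exactly the instance of $\Gamma(\lambda)$ obtained by deleting from $U$ the elements that are infeasible with $T$ or violate the density cap. Therefore $\mathcal{A}$, applied to this restricted universe, yields an $r$-approximation for $\Gamma_T(\lambda)$, as required.

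Second, I would supply the two solutions $S_1,S_2$ needed in Step \ref{enum:main_loop} of the algorithm. For each fixed $T$ I apply Theorem \ref{lemma:finding} to the residual problem $\Gamma_T$, running the binary search of Section \ref{sec:finding} over $[0,p_{max}]$; note that $p_{max}$ for $\Gamma_T$ is bounded by $p_{max}$ of $\Gamma$, and a suitable upper endpoint is inherited from the given $\lambda_{max}$, since the residual instance has the smaller budget $L-w(T)$ over a sub-universe. This produces either an $r$-approximate feasible solution for $\Gamma_T$ or a pair $S_1,S_2$ with $w(S_1)\le L-w(T)\le w(S_2)$ satisfying the conditions of Theorem \ref{lemma:lagrangian_relaxation}. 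The approximation guarantee is then precisely the content of the analysis preceding Theorem \ref{thm:enumeration_ratio}: taking $\alpha=\frac{1}{1+r}$ and $k=2$, and using Lemma \ref{lemma:enum_s'} to bound the greedy-density set $S'$ built in Step 1(b), the better of $T\cup S_1$ and $T\cup S'$ in the iteration $T=T_2$ has value at least $(\frac{r}{1+r}-\eps)f(S^*)$.

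Finally I would bound the number of calls to $\mathcal{A}$. Step \ref{enum:main_loop} iterates over all $T\subseteq U$ with $|T|\le 2$, of which there are $O(|U|^2)$, and for each such $T$ the binary search of Theorem \ref{lemma:finding} performs $O(\log(p_{max})+\log L+\log(\eps^{-1}))$ executions of $\mathcal{A}$, the construction of $S'$ contributing only low-order post-processing. Multiplying yields the bound $O((\log(p_{max})+\log L+\log(\eps^{-1}))\,|U|^2)$ of Theorem \ref{thm:enumeration_ratio}, which is polynomial in $|U|$, in the input size (through $\log(p_{max})$ and $\log L$), and in $\log(1/\eps)$, establishing the claimed complexity. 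I expect the single point requiring genuine care to be the first step — certifying that $X_T$ is a lower ideal and that $\Gamma_T(\lambda)$ is a bona fide instance of $\Gamma(\lambda)$ solvable by $\mathcal{A}$ — since everything downstream is a direct citation of Theorems \ref{lemma:finding} and \ref{thm:enumeration_ratio}.
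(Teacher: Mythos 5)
Your second and third steps (running the binary search of Theorem \ref{lemma:finding} once per residual instance, the $k=2$, $\alpha=\frac{1}{1+r}$ analysis via Lemma \ref{lemma:enum_s'}, and the count of $O\bigl(\left(\log (p_{max})+\log(L)+\log(\eps^{-1})\right)|U|^2\bigr)$ calls) coincide with the paper's argument, which indeed proves the corollary by citing Algorithm \ref{alg:general} and Theorem \ref{thm:enumeration_ratio}. The gap is precisely at the point you yourself flagged as the one needing care, and your resolution of it is wrong. You claim that $\Gamma_T(\lambda)=\max_{S\in X_T} f(S)-\lambda w(S)$ ``is exactly the instance of $\Gamma(\lambda)$ obtained by deleting from $U$ the elements that are infeasible with $T$ or violate the density cap.'' Letting $U_T$ denote that reduced universe, deletion of elements yields the domain $\{S \in X : S \subseteq U_T\}$, whereas \eqref{eq:largest_density} gives $X_T=\{S \subseteq U_T : S\cup T \in X\}$, which imposes the \emph{joint} constraint that $S$ together with $T$ be feasible. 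By the lower-ideal property $X_T \subseteq \{S \in X : S\subseteq U_T\}$, but the inclusion is strict whenever pairwise compatibility with $T$ does not imply joint compatibility with $T$. The paper's own application BGAP is a counterexample: let $T=\{(1,1)\}$, let items $1,2,3$ have unit size and bin $1$ have capacity $2$, with profits of items $2,3$ below the density cap. Then $S=\{(2,1),(3,1)\}$ lies in $X$ and in the deleted-universe domain, yet $S \cup T \notin X$. Consequently, running $\mathcal{A}$ on the element-deleted instance can return exactly such a set $S_1$, in which case the output $T\cup S_1$ of Algorithm \ref{alg:general} is \emph{infeasible}; the same failure hits Step 1(b), since $S'\cup T$ need not lie in $X$. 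So your bridge from the corollary's literal hypothesis to the hypothesis of Theorem \ref{thm:enumeration_ratio} collapses.

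The fact you do prove correctly --- that $X_T$ is again a lower ideal, hence $\Gamma_T$ is again a subset selection problem with a linear constraint --- does not help, because the abstract class carries no algorithm: $\mathcal{A}$ is assumed to solve only $\Gamma(\lambda)$. The paper closes this gap by assumption, not derivation: just before Algorithm \ref{alg:general} it posits that $\mathcal{A}$, given $\lambda$ and any $T$ with $|T|\leq k$, returns an $r$-approximation for $\Gamma_T(\lambda)$, remarking that ``in all of our examples'' the residual problem is a smaller instance of the \emph{same} problem --- e.g., for BGAP it is a GAP instance in which bin $1$'s capacity is \emph{reduced} by the size of item $1$ (a capacity modification, not an element deletion), and for BRS it is again an interval-scheduling instance. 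Corollary \ref{cor:lagrang} must therefore be read as carrying that hypothesis on $\mathcal{A}$ (and similarly a $\lambda_{max}$ valid for each residual instance), and verifying it is problem-specific work done in Sections \ref{sec:bba}--\ref{sec:bwis}; it cannot be deduced from the corollary's stated hypothesis in the way you attempt.
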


In the Appendix we show how our technique for solving subset selection
problems with a single linear constraint can be extended to
solve such problems with multiple linear constraints, by
repetitive usage of our technique.

\comment{

\myparagraph{Submodular objective functions} In the more general
case,
where $f$ is a submodular function, we need to redefine the objective
function for $\Gamma_T$ to be $f_T(S')=f(S'\cup T) -f(T)$,
and the condition $f(\{s\}) \leq \frac{f(T)}{|T|}$ should
be modified to $f_T(\{s\}) \leq \frac{f(T)}{|T|}$. In Step 1(b)
of the algorithm, the
element $x$ to be chosen in each stage
is $x\in S_2\setminus S'$ which maximizes
the ratio $\frac{f_T(S'\cup \{x\}) - f_T(S')}{w_x}$.

\subsection{Extension to Multiple Linear Constraints}
\label{sec:multi_budgeted}
In the Appendix we show how our technique for solving subset selection
problems with a single linear constraint can be extended to
solve such problems with multiple linear constraints, by
repetitive usage of our technique.

\begin{theorem}
\label{thm:multi_budgeted}
Given a subset selection problem $\Gamma$ with $d > 1$ linear
constraints, an algorithm $\mathcal{A}$ which
yields a $r$-approximation for $\Gamma(\lambda)$, and
$\lambda_{max}$, such
$w(\mathcal{A}(\lambda_{max})) \leq L$,
there is a  $(\frac{r}{1+dr} -\eps)$-approximation
algorithm, such that number of call of the
algorithm to $\mathcal{A}$ is polynomial
in $\eps$, the size of of the universe $|U|$, and the input size.
\end{theorem}

The previously showed results in this paper shows how a problem with
a single linear constraint can be solved using an algorithm
that solve the Lagrangian relaxation of the problem. However, in many
real life scenario more than one linear constraint appears in the problem.
More formally, consider the problem:
\begin{eqnarray}
\label{prb:multi}
 \max_{S \in X} & f(S)~~~ \mbox{subject to:}  \\
\forall_{1\leq i \leq d}: &  w_i(S) \leq L_i \nonumber
\end{eqnarray}
Where $X$ is a lower ideal, and the function $f$ and $w_i$ for every $1 \leq i \leq d$ are
non-decreasing linear set function, such that $f(\emptyset)=w_i(\emptyset)=0$.
This problem can be interpreted as a subset selection problem with
linear constraint as follows; let
$X' = \left\{S \in X | \forall_{1\leq i \leq d-1}:~ w_i(S) \leq L_i \right\}$, and
the linear constraint would be $w_d(S) \leq L_d$, the function $f$ would remain as
it is. The Lagrangian relaxation of \eqref{prb:multi} would be of the same form (after
removing elements with negative profit in the relaxation),
but with $d-1$ linear constraints instead of $d$.
This mean, that by repetitive usage of the technique presented in the
above subsection, an approximation algorithm for \eqref{prb:multi}
can be obtained from an approximation algorithm for the
problem without linear constraint (which is $\max_{S\in X} f'(X)$, where
$f'$ is some linear function). Hence, given a  $r$-approximation algorithm for
the problem after "relaxing" $d$ constraints, an $\frac{r}{1+dr}$
approximation algorithm for \eqref{prb:multi} is obtained.

Note that there is a simple reduction\footnote{
Assume w.l.o.g that $L_i=1$ for every $1\leq i\leq d$, and
set the weight of an element $e$ to be $w_e = \max_{1\leq i \leq d} w_i(\{e\})$}
  from problem \eqref{prb:multi}
to the same problem with $d=1$, which obtains
a $\frac{r}{d}$-approximation for \eqref{prb:multi} given a $r$-approximation
algorithm $\mathcal{A}$
for the problem with single constrain.
 By repetitive usage
of Lagrangian relaxation a $\frac{r}{1+(d-1)r}$-approximation
algorithm can be obtained using $\mathcal{A}$, which is always
a better approximation ratio than the ratio obtained by the simple
reduction.
}

\subsection{Lagrangian Relaxation: Example}
\label{sec:example}
We now show the tightness of the bound in Theorem \ref{thm:enumeration_ratio}. Consider the following problem.
We are given a base set of elements $A$, where each element $a \in A$
has a profit  $p(a) \in \mathbb{N}$. Also, we have three subsets of elements $A_1,A_2,A_3 \subseteq A$,
and a bound $k >1$. We need to
select a subset $S\subseteq A$ of size at most $k$, such that
$S \subseteq A_1$, or
$S \subseteq A_2$, or $S \subseteq A_3$, and the
total profit  from elements in $S$
is, i.e., $\sum_{a\in S} p(a)$, is maximized.
The problem can be easily interpreted as a subset selection problem,
by taking the universe
to be $U= A$, the domain $X$ consists of all the subsets $S$ of
$U$, such that
$S \subseteq A_1$ or $S \subseteq A_2$, or $S \subseteq A_3$. The
weight function is $w(S) = |S|$,
with the weight bound
$L=k$, and the profit of a subset $S$ is given by $f(S)= \sum_{a\in S} p(a)$.

The Lagrangian relaxation of the problem with parameter $\lambda$ is
$\max_{S\in X} f(S)-\lambda w(S)$.
Assume that we have an algorithm $\mathcal{A}$ that
returns an $r$-approximation for the Lagrangian relaxation of the problem.

For any $\frac{1}{2}>\delta >0$ and an integer $k>\frac{1}{r}+4$,
consider the following input:
\begin{itemize}
\item
$A_1 = \{a_1,\ldots,a_{k-1},b\}$, where $p(a_i) = \frac{1}{r}$ for
$1\leq i \leq k-1$, and $p(b) = k-1$.
\item
$A_2 =\{c\}$ where $p(c) = k +\delta$.
\item
$A_3 = \{d_1,\ldots,d_\ell \}$, where
$\ell = \ceil{\frac{(1+r)(k-1)}{\delta r}}$, and $p(d_i) = 1 +\delta$
for $1 \leq  i \leq \ell$.
\item
$U= A= A_1 \cup A_2 \cup A_3$, and the set $S$ to be chosen is of size at most $k$.
\end{itemize}

Denote the profit from a subset $S \subseteq U$ by $p(S)$, and the
profit in the Lagrangian
relaxation with parameter $\lambda$ by $p_\lambda(S)$.
Clearly, the subset $S = A_1$ is an optimal solution for the problem,
of profit $p(A_1)=(k -1)\frac{1+r}{r}$.
Consider the possible solutions algorithm $\mathcal{A}$ returns
for different values of $\lambda$:
\begin{itemize}
\item
For $\lambda <1$: the profit from any subset of $A_1$ is bounded by the
original profit of
$A_1$, given by $p(A_1)=(k -1)\frac{1+r}{r}$;
the profit from the set $S=A_3$ is equal to
$p(A_3)=(1+\delta- \lambda)\ell \geq
\delta \ell \geq (k-1)\frac{(1+r)}{r}$, i.e.,
$A_3$ has higher profit than $A_1$.

\item
For $1 \leq  \lambda \leq \frac{1}{r}$: the profit from any
subset of $A_1$ is bounded by the total profit of $A_1$ (all
elements are of non-negative profit in the relaxation).
Taking the difference, we have
\begin{eqnarray*}
r  \cdot p_\lambda(A_1) - p_\lambda(A_2)  &=&
r \left( k-1-\lambda +(k-1)(\frac{1}{r} - \lambda) \right) - (k -\lambda) \\
&=& r k -r - r \lambda +k - - r \lambda k +r k -k + \lambda \\
&= & (1-\lambda)(r k -1) -r \leq 0
\end{eqnarray*}
This implies that in case the optimal set is $A_1$ (or a subset of $A_1$),
 the algorithm $\mathcal{A}$ may choose the set $A_2$.
\item
For $\lambda > \frac{1}{r}$: the maximal profit from any subset of $A_1$
is bounded in this case by $\max\{k-1-\lambda,0\}$, whereas the profit
from $A_2$ is $\max\{k-\lambda,0\}$.
\end{itemize}

From the above, we get that $\mathcal{A}$ may return a subset of $A_2$
or $A_3$ for any value of $\lambda$.
However, no combination of elements of $A_2$ and $A_3$ yields a
solution for the original problem
of profit greater than $k(1+\delta)$.
This means that, by combining the solutions returned by
the Lagrangian relaxation, one cannot achieve approximation ratio better than $\frac{k(1+\delta)}{(k-1)(1+\frac{1}{r})}=
\frac{r}{1+r}\cdot \frac{k(1+\delta)}{k-1}$.
Since $\frac{r}{1+r} \cdot \frac{k(1+\delta)}{k-1} \rightarrow \frac{r}{1+r}$
 for $(k,\delta) \rightarrow (\infty,0)$,
one cannot achieve approximation ratio better than $\frac{r}{1+r}$.

\comment{
\section{Proof of Lemma \ref{lemma:card_greedy}}
We use in the proof the next result, due to Nemhauser et al. \cite{NWF78}.
\begin{lemma}
\label{lemma:greedy_increments}
A set function $f$ over a ground set $X$ is non-decreasing submodular iff
for all $S,T \subseteq X$ $f(T) \leq f(S) + \sum_{x \in T \setminus S}
\Delta_x f$,
where  $\Delta_xf= f(S \cup \{x \})- f(S)$.
\end{lemma}

{\noindent \bf Proof of Lemma \ref{lemma:card_greedy}:}
It is easy to show (e.g., by using induction) that for every element
$s \in S_2 \backslash S'$,
\begin{equation}
\label{eq:card_relative_density}
f(S'\cup \{x\})- f(S') < \frac{f(S')}{L},
\end{equation}
since $f$ is a
non-decreasing submodular set function.

Using \eqref{eq:card_relative_density} and the fact that $f$ is submodular,
we get that
\comment{
$$
f(S_2) \leq
f(S') + \sum_{s \in S_2\backslash S'} {f(S' \cup \{s\}) - f(S')}
\leq
f(S')+ (|S_2| - L ) \frac{f(S')}{L}
= f(S') \frac{|S_2|}{L}
$$
}
\begin{eqnarray*}
f(S_2) &\leq&
f(S') + \sum_{s \in S_2\backslash S'} {f(S' \cup \{s\}) - f(S')} \\
&\leq&
f(S')+ (|S_2| - L ) \frac{f(S')}{L}
= f(S') \frac{|S_2|}{L}
\end{eqnarray*}
Thus, using \eqref{eq:beta_appx}, we have that
\[
f(S') \geq f(S_2) \frac{L}{w(S_2)} \geq \beta \cdot OPT.
\]
\hspace*{\fill} $\Box$ \vskip \belowdisplayskip
}

\section{Applications to Budgeted Subset Selection}

In this section we show how
the technique of Section \ref{sec:rounding}
can be applied to obtain approximation algorithms
for several classic subset selection problems with a linear constraint.

\subsection{Budgeted Real Time Scheduling}
\label{sec:bba}
The budgeted real-time scheduling problem
can be interpreted as the following subset selection problem
with linear constraint. The universe $U$
consists of all instances associated with the activities
$\{ A_1, \ldots , A_m\}$.
The domain $X$ is the set of all feasible schedules;
for any $S \in X$, $f(S)$ is the profit from the instances
in $S$, and $w(S)$ is the total cost of the instances in $S$ (note
that each instance is associated with specific
time interval). The Lagrangian relaxation of this
problem is the classic \emph{interval scheduling}  problem discussed
in \cite{BB00}: the paper gives a $\frac{1}{2}$-approximation algorithm,
whose running time is $O(n \log n)$, where $n$ is the total number
of instances in the input.
Clearly,
$p_{\max}$ (as defined in \eqref{eq:def_pmax}) can be used as $\lambda_{\max}$.
By Theorem \ref{lemma:finding}, we can find two
solutions $S_1,S_2$ which satisfy the conditions of
Theorem \ref{lemma:lagrangian_relaxation}
in $O(n\log (n)\log(Lp_{\max}/ \eps))$ steps.
Then, a straightforward implementation of the
technique of Section \ref{sec:rounding_card}
yields a $\left(\frac{1}{3}-\eps \right)$-approximation algorithm
whose running time is
$O(n\log (n)\log(Lp_{\max}/\eps))$ for inputs where all instances have {\em
unit} cost.
The same approximation ratio can be obtained in
$O(n^3 \cdot \log (n) \log(Lp_{\max} / \eps))$
steps when the instances may have {\em arbitrary} costs,
using Theorem \ref{thm:enumeration_ratio} (Note that the
Lagrangian relaxation of the residual problem with respect
to a subset of elements $T$ is also an instance of the
interval scheduling problem).

Consider now the continuous case, where each instance within some activity
$A_i$, $1 \leq i \leq m$, is given by a time window.
One way to interpret BCRS as a subset
selection problem is by setting the universe
to be all the pairs of an instance and
a time interval in which it can be scheduled.
The size of the resulting universe is
unbounded: a more careful consideration of all possible start times of any
instance yields a universe of exponential size.
The Lagrangian
relaxation of this problem is known
as \emph{single machine scheduling with release times and deadlines},
for which a $( \frac{1}{2} -\eps)$-approximation algorithm is given
in \cite{BB00}.
Thus, we can apply our technique for
finding two solutions $S_1,S_2$ for which
Theorem \ref{lemma:lagrangian_relaxation} holds.
However, the running time of the algorithm in
Theorem \ref{thm:enumeration_ratio}
may be exponential in the input size (since the number of the
enumeration steps depends on the size of the universe,
which may be exponentially large).
Thus,
we derive an approximation algorithm using the technique of Section
\ref{sec:rounding_linear}.
We summarize in the next result.

\begin{theorem}
\label{thm:bba}
There is a polynomial time algorithm that yields an approximation ratio of
$( \frac{1}{3} -\eps)$ for BRS, and the ratio $\left(\frac{1}{4} - \eps
\right)$ for BCRS.
\end{theorem}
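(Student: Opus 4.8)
The plan is to instantiate the general subset-selection machinery of Section~\ref{sec:rounding} with the known approximation algorithms for the Lagrangian relaxations of the two problems, and then read off the stated ratios by substituting $r=\frac12$.

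For BRS I would first recall, as established in Section~\ref{sec:bba}, that the Lagrangian relaxation is exactly the classic interval scheduling problem, for which \cite{BB00} gives a $\frac12$-approximation running in $O(n\log n)$ time; hence I may take $r=\frac12$ and use $p_{\max}$ as $\lambda_{\max}$. For unit-cost instances I would apply Theorem~\ref{lemma:cardinality} directly: with $r=\frac12$ it yields ratio $\frac{r}{r+1}-\eps=\frac13-\eps$, using $O(\log(Lp_{\max}/\eps))$ calls to the $\frac12$-approximation, for total time $O(n\log n\,\log(Lp_{\max}/\eps))$. For arbitrary costs I would instead invoke the enumeration scheme of Theorem~\ref{thm:enumeration_ratio}. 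The one point that needs checking is that the residual problem $\Gamma_T$, obtained after fixing the set $T$ of at most $k=2$ highest-profit instances and capping admissible profits as in \eqref{eq:largest_density}, is again an interval scheduling instance; this holds because deleting the activities met by $T$ and discarding over-profit instances leaves a sub-instance of the same combinatorial type, so the same $\frac12$-approximation solves its relaxation. With $r=\frac12$, Theorem~\ref{thm:enumeration_ratio} then gives $\frac{r}{1+r}-\eps=\frac13-\eps$ at the cost of an extra $O(n^2)$ factor in the number of calls, keeping the total time polynomial.

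For BCRS the situation differs, and this is where the main obstacle lies. Interpreting BCRS as a subset-selection problem (pairs of an instance and an admissible start time) produces a universe of exponential size, so the enumeration in Theorem~\ref{thm:enumeration_ratio}---whose running time is polynomial in $|U|$---would be exponential and is therefore unavailable. I would instead fall back on the arbitrary-weight rounding of Section~\ref{sec:rounding_linear} (Theorem~\ref{lemma:general_case}), which performs no enumeration and merely splits $S_2$ into budget-feasible blocks. The relevant Lagrangian relaxation here is single-machine scheduling with release times and deadlines, for which \cite{BB00} gives a $(\frac12-\eps)$-approximation; taking $r=\frac12$ in the bound $\frac{r}{2r+1}-\eps$ of Theorem~\ref{lemma:general_case} yields $\frac14-\eps$.

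The remaining technical care is bookkeeping of the several $\eps$-losses. The relaxation itself loses an $\eps$ (so strictly $r=\frac12-\eps'$), the Lagrangian binary search of Theorem~\ref{lemma:finding} loses another additive $\eps$, and Theorem~\ref{lemma:general_case} contributes its own $\eps$. Since $\frac{r}{2r+1}$ is continuous in $r$ near $r=\frac12$, all of these combine into a single additive $O(\eps)$ term that can be absorbed by rescaling $\eps$, leaving the clean ratio $\frac14-\eps$; the analogous absorption gives $\frac13-\eps$ for BRS. Polynomial running time follows in both cases because the binary search makes $O(\log(Lp_{\max}/\eps))$ calls---times an $O(n^2)$ enumeration factor for arbitrary-cost BRS---to a polynomial-time relaxation solver.
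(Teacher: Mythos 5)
Your proposal is correct and follows essentially the same route as the paper: interpreting BRS/BCRS as subset selection problems, using the $\tfrac12$-approximations of \cite{BB00} for the interval-scheduling (resp.\ release-times-and-deadlines) relaxations, invoking Theorem~\ref{lemma:cardinality} for unit costs and Theorem~\ref{thm:enumeration_ratio} for arbitrary-cost BRS (with the same observation that the residual problem stays an interval-scheduling instance), and falling back to Theorem~\ref{lemma:general_case} for BCRS because the exponential universe rules out enumeration. Your explicit bookkeeping of the stacked $\eps$-losses is a small refinement the paper leaves implicit, but it does not change the argument.
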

Our results also hold for other budgeted variants
of problems that appear in \cite{BB00}.

\subsection{The Budgeted Generalized Assignment Problem}
\label{sec:bgap}

Consider the interpretation of GBAP as a subset selection problem, as given
in Section \ref{sec:rounding}.
The Lagrangian
relaxation of BGAP (and also of the
deduced residual problems) is an instance of GAP, for which
the paper \cite{FG06} gives a $(1-e^{-1}-\eps)$-approximation
algorithm.
We can take in
Theorem \ref{lemma:finding} $\lambda_{\max}= p_{\max}$,
where $p_{\max}$ is defined by \eqref{eq:def_pmax}, and
the two solutions $S_1,S_2$
that satisfy the condition of Theorem
\ref{lemma:lagrangian_relaxation} can be found in polynomial time.
Applying the techniques of Sections \ref{sec:rounding_card} and
\ref{sec:enumeration}, we get the next result.

\begin{theorem}
\label{thm:bgap}
There is a polynomial time
algorithm that yields an approximation ratio of
$\frac{1- e^{-1}}{2- e^{-1}} -\eps \approx 0.387 - \eps$
for BGAP.
\end{theorem}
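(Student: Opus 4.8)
The plan is to instantiate the enumeration-based framework of Theorem \ref{thm:enumeration_ratio} on the subset selection formulation of BGAP, using the $(1-e^{-1}-\eps)$-approximation algorithm for GAP of \cite{FG06} in the role of the relaxation-solver $\mathcal{A}$. Concretely, I would recall from Section \ref{sec:rounding} that BGAP is a subset selection problem with a linear constraint: the universe $U$ is the set of item-bin pairs, the domain $X$ of feasible partial assignments is a lower ideal, $f$ is linear, and $w$ is the packing-cost weight. As shown there, the Lagrangian relaxation $\Gamma(\lambda)=\max_{S\in X} f(S)-\lambda w(S)$ is itself an instance of GAP once one discards every pair whose relaxed profit $f_{i,j}-\lambda w_{i,j}$ is negative. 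Hence $\mathcal{A}$ returns an $r$-approximation for $\Gamma(\lambda)$ with $r=1-e^{-1}-\eps'$.

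The first nontrivial point to verify is that the hypotheses of Theorem \ref{thm:enumeration_ratio} hold, in particular that $\mathcal{A}$ also $r$-approximates the \emph{residual} relaxations $\Gamma_T(\lambda)$ for every guessed set $T$ of at most $k=2$ item-bin assignments. Fixing such a $T$ reduces the relevant bin capacities by the sizes of the items it assigns, removes those items from the instance, and forbids every item-bin pair whose profit exceeds the density cap $f(T)/|T|$ from \eqref{eq:largest_density}. Each of these operations only restricts the GAP instance, so $\Gamma_T(\lambda)$ is again an instance of GAP and the algorithm of \cite{FG06} gives an $r$-approximation for it in polynomial time. I would also take $\lambda_{\max}=p_{\max}$ as in \eqref{eq:def_pmax}, observing that for $\lambda\geq p_{\max}$ every relaxed profit is non-positive, so $\mathcal{A}(\lambda_{\max})$ returns the empty assignment and $w(\mathcal{A}(\lambda_{\max}))=0\leq L$.

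With these hypotheses in place, Theorem \ref{thm:enumeration_ratio} yields a $(\frac{r}{1+r}-\eps)$-approximation for BGAP. Substituting $r=1-e^{-1}$ gives $\frac{r}{1+r}=\frac{1-e^{-1}}{2-e^{-1}}\approx 0.387$, as claimed. The running time is polynomial: by Theorem \ref{thm:enumeration_ratio} the number of calls to $\mathcal{A}$ is $O((\log(p_{\max})+\log(L)+\log(\eps^{-1}))\, n^2)$, each call runs in polynomial time, and the guessing loop ranges over $O(n^2)$ candidate sets $T$.

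I expect the main obstacle to be the bookkeeping of error compounding: $\mathcal{A}$ delivers only $r=1-e^{-1}-\eps'$ rather than exactly $1-e^{-1}$, so I must argue that this loss propagates benignly. Since the map $r\mapsto \frac{r}{1+r}$ is continuous and strictly increasing on $(0,1)$, choosing $\eps'$ small enough as a function of the target $\eps$ keeps $\frac{r}{1+r}$ within $\eps/2$ of $\frac{1-e^{-1}}{2-e^{-1}}$; absorbing both this slack and the additive $\eps$ coming from Theorem \ref{thm:enumeration_ratio} into a single $\eps$ then gives exactly the stated ratio $\frac{1-e^{-1}}{2-e^{-1}}-\eps$.
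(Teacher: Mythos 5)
Your proposal is correct and follows essentially the same route as the paper: interpret BGAP as a subset selection problem, observe that the Lagrangian relaxations (including those of the residual problems $\Gamma_T(\lambda)$) are GAP instances solvable by the algorithm of \cite{FG06} with $r = 1-e^{-1}-\eps'$, take $\lambda_{\max}=p_{\max}$, and invoke Theorem \ref{thm:enumeration_ratio} to obtain the ratio $\frac{r}{1+r}-\eps$. Your additional care about the residual instances remaining valid GAP instances and about absorbing the $\eps'$ loss into the final $\eps$ only makes explicit what the paper's terse proof leaves implicit.
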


A slightly better approximation ratio can be obtained by using
an algorithm of \cite{FV06}.
More generally, our result holds also
for any constrained variant of the {\em separable assignment problem (SAP)}
that can be solved using a technique of \cite{FG06}.

\subsection{Budgeted Maximum Weight Independent Set}
\label{sec:bwis}

BWIS can be interpreted as the following
subset selection problem with linear constraint. The universe $U$ is
the set of all vertices in the graph, i.e., $U=V$, the domain $X$ consists of
all subsets $V'$ of $V$, such
that $V'$ is an independent set in the given graph $G$. The objective
function $f$ is $f(V')= \sum_{v\in V'} p_v$,
the weight function is $w(V')= \sum_{v \in V'} c_v$, and
the weight bound
is $L$. The Lagrangian relaxation of BWIS is an instance
of the classic WIS problem (vertices with negative profits
in the relaxation are deleted, along with their edges).
Let $|V|=n$, then by Theorem \ref{thm:enumeration_ratio},
given an approximation algorithm $\mathcal{A}$ for WIS with approximation
ratio $f(n)$, the technique of Section \ref{sec:enumeration}
yields an approximation algorithm $\mathcal{A}_{I}$ for $BWIS$, whose
approximation ratio is
$\frac{f(n)}{1+f(n)}- \eps$. The running time of $\mathcal{A}_{I}$ is
polynomial in the input size
and in $\log(1/\eps)$. If $\log(1/f(n))$ is polynomial, take
$\eps = \frac{f(n)}{n}$; the value $\log(1/\eps) = \log(1/f(n))+ \log(n)$
is polynomial in the input
size; thus, the algorithm remains polynomial. For this selection
of $\eps$, we have the following result.
\comment{
the approximation ratio of the algorithm is $g(n)=\frac{f(n)}{1+f(n)} -\frac{f(n)}{n} =\Theta(f(n))$.
this means that
the approximation ratios of $\mathcal{A}$ and $\mathcal{B}$ are asymptotically identical.
For example, this mean that using the algorithm of \cite{Ha00}, our technique
archives a $\Omega(\frac{\log^2 n}{n})$ approximation for BWIS.
}

\begin{theorem}
\label{thm:bwis}
Given an $f(n)$-approximation algorithm for WIS, where
$f(n)=o(n)$, for any $L \geq 1$ there exists a polynomial time algorithm that
outputs a $g(n)$-approximation ratio for any instance of BWIS with the budget
$L$, where $g(n)=\Theta(f(n))$, and $\lim_{n\rightarrow \infty}
\frac{g(n)}{f(n)} =1$.
\end{theorem}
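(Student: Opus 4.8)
The plan is to reduce the statement entirely to the enumeration-based result of Theorem~\ref{thm:enumeration_ratio} and then tune the error parameter $\eps$. First I would recall, as established just before the statement, that BWIS is a subset selection problem with a single linear (budget) constraint whose Lagrangian relaxation $\Gamma(\lambda)$ is an instance of ordinary WIS (obtained by deleting the vertices of non-positive relaxed profit together with their incident edges), and that the same is true of every residual problem $\Gamma_T$ arising inside the enumeration. Hence the hypothesized $f(n)$-approximation algorithm for WIS plays exactly the role of the algorithm $\mathcal{A}$ required by Theorem~\ref{thm:enumeration_ratio}, with $r = f(n)$; taking $\lambda_{max} = p_{max}$ (for large $\lambda$ all relaxed profits vanish, so the returned set is empty and trivially of weight $\le L$) meets the theorem's premise. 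Applying the theorem directly yields an algorithm $\mathcal{A}_I$ for BWIS of approximation ratio $\frac{f(n)}{1+f(n)} - \eps$ whose number of calls to $\mathcal{A}$, and hence running time, is polynomial in $n = |U|$, in the input size, and in $\log(1/\eps)$.

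The heart of the argument is the choice of $\eps$, which must balance two opposing pressures: a large $\eps$ pushes the ratio away from $f(n)$, while an extremely small $\eps$ threatens the polynomial running time through the $\log(1/\eps)$ factor. I would set $\eps = f(n)/n$. To check that the algorithm remains polynomial, note that $\log(1/\eps) = \log n + \log(1/f(n))$; the first term is clearly polynomial in the input size, and the second is polynomial by the standing hypothesis that $\log(1/f(n))$ is polynomially bounded (requirement (ii) on $f$). Thus $\log(1/\eps)$ is polynomially bounded, and $\mathcal{A}_I$ runs in polynomial time for this $\eps$.

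It then remains to analyze the resulting ratio $g(n) = \frac{f(n)}{1+f(n)} - \frac{f(n)}{n} = f(n)\left(\frac{1}{1+f(n)} - \frac{1}{n}\right)$. Writing $\frac{g(n)}{f(n)} = \frac{1}{1+f(n)} - \frac{1}{n}$ and invoking $f(n) = o(1)$ (requirement (i) on $f$), both $\frac{1}{1+f(n)} \to 1$ and $\frac{1}{n} \to 0$, whence $\lim_{n\to\infty} \frac{g(n)}{f(n)} = 1$. Since $g(n) \le f(n)$ for every $n$ and $\frac{g(n)}{f(n)} \ge \frac{1}{2}$ for all sufficiently large $n$, we also obtain $g(n) = \Theta(f(n))$, which is the remaining claim.

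The main obstacle is precisely the tension in the second paragraph: producing a single value of $\eps$ that is small enough to drive the multiplicative loss $\frac{1}{1+f(n)}$ to $1$ yet large enough that $\log(1/\eps)$ stays polynomial, and the choice $\eps = f(n)/n$ threads this needle only because $\log(1/f(n))$ is assumed polynomial. A minor point to verify en route is that $g(n)$ stays positive, so that the stated ratio is meaningful; this holds for large $n$ since $\frac{1}{1+f(n)} > \frac{1}{n}$ as soon as $n > 1 + f(n)$, which is eventually true because $f(n) = o(1)$.
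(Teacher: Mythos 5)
Your proof is correct and follows essentially the same route as the paper: interpret BWIS as a subset selection problem whose Lagrangian relaxation (and that of each residual problem) is a WIS instance, apply Theorem \ref{thm:enumeration_ratio} with $r=f(n)$, and choose $\eps = f(n)/n$ so that $\log(1/\eps)=\log(1/f(n))+\log n$ stays polynomial while $g(n)=\frac{f(n)}{1+f(n)}-\frac{f(n)}{n}$ satisfies $g(n)=\Theta(f(n))$ and $\lim_{n\rightarrow\infty} g(n)/f(n)=1$. Your write-up is slightly more explicit than the paper's (e.g., justifying $\lambda_{max}=p_{max}$, covering the residual problems $\Gamma_T$, and checking $g(n)>0$ eventually), but the argument is the same.
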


This means that
the approximation ratios of $\mathcal{A}$ and $\mathcal{A}_I$ are
asymptotically the same.
Thus, for example, using the algorithm of \cite{Ha00}, our technique
achieves an $\Omega(\frac{\log^2 n}{n})$-approximation for BWIS.
Note that the above result holds for any constant number of linear
constraints added to an input for WIS, by repeatedly applying
our Lagrangian relaxation technique.

\section{Reoptimization of Subset Selection Problems}
\label{sec:reopt}
In this section we show how
our Lagrangian relaxation technique can be used to obtain $(1,\alpha)$-reapproximation algorithms for subset selection problems, where
$\alpha \in (0,1)$.
To this end, we present the notion of {\em budgeted reoptimization}.
Throughout the discussion, we assume that $R(\Pi)$ is the reoptimization version of a maximization problem $\Pi$.

\subsection {\bf Budgeted Reoptimization}
\label{reopt:budgeted}
The budgeted reoptimization problem $R(\Pi, b)$  is a restricted version of $R(\Pi)$, in which we add the constraint that the
 transition cost is at most $b$, for some budget $b \geq 0$, and the transition function $\delta$.
Formally,

\begin{eqnarray}
\label{reopt_prb:general_constraint}
R(\Pi,b):& {\max}_{s \in U} & p(s) \\
\nonumber
&\mbox{subject to:} & \delta(s) \leq b.
\end{eqnarray}

The optimal profit for $\rpim$ is denoted $p(\co_b)$, where
$\co_b$ is the best solution that can be reached from the initial solution with transition cost at most $b$.

\begin{definition}
An algorithm ${\cal A}$ yields an $r$-approximation for $\rpim$, for $r\in (0,1]$, if
for any reoptimization input $I$, ${\cal A}$
yields a solution $s$ of profit $p(s) \geq r\cdot p(\orpim)$, and transition cost at most $b$.
\end{definition}

 \noindent{\bf Example:}
Assume that $\Pi$ is the $0$-$1$
Knapsack problem. An instance $I$ of $\Pi$ consists of a bin of capacity $B$ and $n$ items with profits $p_i \geq 0$
and weights $w_i \geq 0$, for $1\leq i\leq n$. The formal representation of the problem is $U=\{$all feasible packings of the knapsack$\}$, and $p(s)=\sum_{i\in s}{p_i}$. In the reoptimization version of the problem,
$R(\Pi)$, each instance $I$ contains also the transition cost of item $i$, given by $\delta_i \geq 0$, for $1 \leq i \leq n$.
In the budgeted reoptimization version, $\rpim$,
$U$ is restricted to contain solutions having transition cost at most $b$. Thus, $U=\{s|~ s$ is a feasible packing of
the bin, and $\delta(s)\leq b\} = \{s|~ w(s) \leq B$,
and $ \delta(s) \leq b \}$.

For various problems, $R(\Pi,b)$ satisfies the conditions of
Corollary \ref{cor:lagrang}. In particular, given a problem $\Pi$, let $\Gamma_b = \rpim$, for some $b \geq 0$, and let
$\Gamma_b(\lambda)$ be the Lagrangian relaxation of $\rpim$, i.e.,
$\Gamma_b(\lambda) = \max_{s \in U} p(s) - \lambda \cdot \delta(s)$. If $\Gamma_b(\lambda)$ yields an instance of $\Pi$ then, by Corollary \ref{cor:lagrang}, an
$r$-approximation algorithm ${\cal A}$ for $\Pi$, satisfying for certain value of $\lambda$: $w({\cal A}) \leq b$, yields
 an $(\frac{r}{r+1} - \eps)$-approximation for $\rpim$.
In the following, we show how this can be used to obtain a reaproximation algorithm for $R(\Pi)$.

\subsection {\bf Algorithm}

An instance of our reoptimization problem $R(\Pi)$ consists of a universe $U$ of $n$ items.
Each item $i$ has a non-negative profit $p_i$, and a transition cost $\delta_{i} \in \mathbb{N}$.
We give below Algorithm \ref{alg:reopt}, which uses approximation algorithms for $\Pi$ and $\rpim$ in solving
$R(\Pi)$.

\begin{center}
\fbox{
	\begin{minipage}{0.8\textwidth}
		\begin{alg} \label{alg:reopt} Reapproximating
		$R(\Pi)$ for an instance $I$
			\begin{enumerate}
			\item \label{alg:approximate}
			 	For $r_1, r_2 \in (0,1]$, let $\ca$ be an $r_1$-approximation algorithm for
				$\Pi$, and\\let $\ca_b$ be an $r_2$-approximation algorithm for
				$R(\Pi,b)$.
 			\item \label{alg:approximate}
 				Approximate $\Pi(I)$ using $\ca(I)$ :
 				$$Z \leftarrow p(\ca(I))$$
 			\item \label{alg:calculate}
				Use binary search
                 to find a budget $b > 0$ satisfying:
				\begin{enumerate}
					\item \label{alg:profit}
 						$p(\ca_b(I)) \geq
 						r_2 \cdot Z$
					\item \label{alg:budget}
						$p(\ca_{b-1}(I)) < r_2 \cdot Z$
					\end{enumerate}
 			
 			\item	Return $\ca_{b}(I)$
 			\end{enumerate}
 		\end{alg}
	 \end{minipage}
 }

 \end{center}

\begin{theorem}
\label{thm:reopt}
	Let $I$ be an instance of the reoptimization problem $R(\Pi)$. For $r_1, r_2 \in (0,1]$, given
	an $r_1$-approximation algorithm $\ca$ for $\Pi$,
	and an $r_2$-approximation algorithm $\ca_b$ for $R(\Pi,b)$, for all $b \geq 0$,
	Algorithem \ref{alg:reopt} yields in polynomial time a
	$(1,r_1\cdot r_2)$-reapproximation
	for $R(\Pi)$.
\end{theorem}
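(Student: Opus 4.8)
The plan is to verify directly that the solution $s = \ca_b(I)$ returned by Algorithm \ref{alg:reopt} meets both requirements of Definition \ref{def:weak_reapprox} with the parameters $1$ and $r_1 r_2$. The starting observation, which I would keep central, is that by the definition of $OPT$ we have $p(OPT) = \max_{s \in U} p(s)$, the optimal profit of the base problem $\Pi$; write this common value as $p^*$. This identity is what links the two separate guarantees (of $\ca$ for $\Pi$ and of $\ca_b$ for $R(\Pi,b)$) to the single benchmark $p(OPT)$.

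The profit bound (requirement $(ii)$) is immediate. On one hand, $Z = p(\ca(I)) \geq r_1 p^*$, since $\ca$ is an $r_1$-approximation for $\Pi$. On the other, the termination condition \ref{alg:profit} gives $p(s) = p(\ca_b(I)) \geq r_2 Z$. Combining these yields $p(s) \geq r_2 \cdot r_1 p^* = r_1 r_2 \, p(OPT)$, as required.

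The transition-cost bound (requirement $(i)$, namely $\delta(s) \le \delta(OPT)$) is the heart of the argument. Since $\ca_b$ is an algorithm for $R(\Pi,b)$, its output respects the budget, so $\delta(s) \le b$; it therefore suffices to prove $b \le \delta(OPT)$. Let $b^* = \delta(OPT)$. The crucial point is that at budget $b^*$ the restricted problem $R(\Pi,b^*)$ can already realize the full optimal profit: $OPT$ is feasible for $R(\Pi,b^*)$ (its transition cost is exactly $b^*$) and attains profit $p^*$, whence $p(\co_{b^*}) = p^*$. Consequently $\ca_{b^*}$ returns a solution of profit $p(\ca_{b^*}(I)) \ge r_2\, p(\co_{b^*}) = r_2 p^* \ge r_2 Z$, where the last step uses $Z \le p^*$ (the output of $\ca$ is itself a feasible solution of $\Pi$). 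Thus budget $b^*$ satisfies condition \ref{alg:profit}. Because the binary search returns the \emph{smallest} budget meeting condition \ref{alg:profit} --- which is exactly what the pair \ref{alg:profit}--\ref{alg:budget} encodes through the failing test $p(\ca_{b-1}(I)) < r_2 Z$ --- we conclude $b \le b^*$, and therefore $\delta(s) \le b \le \delta(OPT)$.

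The main obstacle is the soundness of the binary search, which implicitly requires $p(\ca_b(I))$ to be monotone non-decreasing in $b$; without monotonicity the conditions \ref{alg:profit}--\ref{alg:budget} need not pin down a budget $b \le b^*$. I would dispose of this by noting that monotonicity may be assumed without loss of generality: any solution feasible for budget $b'$ is also feasible for every $b \ge b'$, so replacing $\ca_b$ by the variant returning the best solution found over all budgets $b' \le b$ preserves the $r_2$-approximation guarantee while making the profit monotone. For the running time, the transition costs satisfy $\delta_i \in \mathbb{N}$, so $b$ ranges over the integers in $[0, \sum_i \delta_i]$; the search uses $O(\log \sum_i \delta_i)$ iterations, each calling the polynomial-time algorithms $\ca$ and $\ca_b$ once, which is polynomial in the input size.
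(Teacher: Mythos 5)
Your profit bound is exactly the paper's Lemma \ref{lem:profit}, and your ``crucial point'' --- that $OPT$ is feasible for $R(\Pi,b^*)$ with $b^*=\delta(OPT)$, so $p(\ca_{b^*}(I)) \geq r_2 p^* \geq r_2 Z$ --- is precisely the key step of the paper's Lemma \ref{lem:cost}. The gap is in how you get from there to $b \leq \delta(OPT)$. You route the argument through the claim that the search returns the \emph{smallest} budget satisfying condition (\ref{alg:profit}), observe that this needs $b \mapsto p(\ca_b(I))$ to be monotone, and then repair it by monotonizing $\ca_b$ (return the best solution over all budgets $b' \leq b$). That repair cannot be implemented within your own running-time analysis: one evaluation of the monotonized algorithm requires a call to $\ca_{b'}$ for \emph{every} integer $b' \leq b$, and $b$ can be as large as $\sum_i \delta_i$, which is exponential in the input size since the costs are encoded in binary. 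So your correctness argument reasons about the monotonized algorithm while your complexity argument (``each iteration calls $\ca_b$ once'') reasons about the original one; as written, one of the two has a hole.

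The fix is to drop both the minimality claim and the monotonization --- neither is needed, and your worry that conditions (\ref{alg:profit})--(\ref{alg:budget}) ``need not pin down a budget $b \leq b^*$'' without monotonicity is unfounded. Your crucial point generalizes verbatim from $b^*$ to every budget $b' \geq \delta(OPT)$: $OPT$ is feasible for $R(\Pi,b')$, so $p(\ca_{b'}(I)) \geq r_2 p^* \geq r_2 Z$, i.e., condition (\ref{alg:profit}) holds at \emph{every} $b' \geq \delta(OPT)$. Taking the contrapositive, condition (\ref{alg:budget}), namely $p(\ca_{b-1}(I)) < r_2 Z$, directly forces $b-1 < \delta(OPT)$, hence $b \leq \delta(OPT)$ by integrality of the costs. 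This is exactly how the paper argues in Lemma \ref{lem:cost}: it bounds \emph{any} budget satisfying condition (\ref{alg:budget}), whether or not it is the smallest one satisfying condition (\ref{alg:profit}). Monotonicity is likewise unnecessary for the search itself: bisection maintaining the invariant that condition (\ref{alg:profit}) fails at the lower endpoint and holds at the upper endpoint (initialized with $b_{max} = \sum_i \delta_i$, where it holds) terminates at a pair satisfying (\ref{alg:profit})--(\ref{alg:budget}) regardless of whether the predicate is monotone, in $O(\log b_{max})$ calls to the original $\ca_b$, matching the paper's Lemma \ref{lem:time}.
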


Recall that ${\cal O}$ is an optimal solution for $\Pi$, and $OPT$ is a solution for $R(\Pi)$ having the minimum transition cost,
 among the solutions of
profit $p({\cal O})$.
In Section \ref{subsec:Proof}  we prove the theorem,
by showing that the solution, $S_{\ca}$, output by Algorithm \ref{alg:reopt}
has the following properties.

\begin{enumerate}
 \item [(i)]
	The total transition cost of ${\Sa}$ is at most the
	transition cost of $OPT$,
	i.e., $$\delta({\Sa}) \leq \delta(OPT)$$
 \item [(ii)]
	The profit of ${\Sa}$ satisfies
$$p(\Sa) \geq r_1 \cdot r_2 \cdot p(OPT).$$
 \end{enumerate}

Combining Theorem \ref{thm:reopt} and Corollary \ref{cor:lagrang},
we show that a wide class of reoptimization problems can be
approximated using our technique.

\begin{cor}
\label{cor:reoptlag}
Let $R(\Pi)$ be the reoptimization version of a subset selection problem $\Pi$, and let $\Gamma_b= \rpim$, for $b \geq 0$.
Denote by ${\cal A}$ an $r$-approximation algorithm for $\Pi$, for $r \in (0,1)$.
If the lagrangian relaxation of $\Gamma_b$, $\Gamma_b(\lambda)$, yields an instance of the base problem $\Pi$, for all $b \geq 0$,
then for any $\eps >0$, Algorithm \ref{alg:reopt} is a $(1,\frac{r^2}{1+r}-\varepsilon)$-reapproximation
algorithm for $R(\Pi)$.
\end{cor}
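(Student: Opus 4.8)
The plan is to obtain the corollary by chaining the two general tools already proved: Corollary~\ref{cor:lagrang}, which turns an approximation for a Lagrangian relaxation into an approximation for a budgeted subset selection problem, and Theorem~\ref{thm:reopt}, which turns an approximation for $\Pi$ together with an approximation for $\rpim$ into a reapproximation for $R(\Pi)$. Thus the whole argument is a composition, and the only genuine work is verifying that the hypotheses of Corollary~\ref{cor:lagrang} are met by the budgeted reoptimization problem $\Gamma_b=\rpim$ for every budget $b$.

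First I would fix $b\geq 0$ and observe that $\Gamma_b$ is itself a subset selection problem with a single linear constraint: its universe is $U$, its objective is the (linear, non-decreasing) profit $p$, and its weight function is the transition cost $\delta$ with bound $b$; the feasible domain is a lower ideal precisely because $\Pi$ is a subset selection problem. Its Lagrangian relaxation is exactly $\Gamma_b(\lambda)=\max_{s\in U} p(s)-\lambda\,\delta(s)$, which by hypothesis is an instance of the base problem $\Pi$. Hence the given $r$-approximation algorithm $\ca$ for $\Pi$ serves verbatim as an $r$-approximation algorithm for $\Gamma_b(\lambda)$, for every $\lambda\geq 0$.

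To invoke Corollary~\ref{cor:lagrang} I still need a value $\lambda_{\max}$ for which the returned solution is feasible, i.e.\ $\delta(\ca(\lambda_{\max}))\leq b$. Here I would take $\lambda_{\max}=p_{\max}+1$, where $p_{\max}$ is the largest profit of a single element (as in \eqref{eq:def_pmax}). Since the transition costs are positive integers ($\delta_i\in\mathbb{N}$), for this choice of $\lambda$ every element with $\delta_i\geq 1$ has strictly negative Lagrangian profit $p_i-\lambda\delta_i\leq p_{\max}-(p_{\max}+1)<0$, and is therefore discarded when the relaxation is cast as an instance of $\Pi$; the returned solution uses only elements of transition cost $0$, so $\delta(\ca(\lambda_{\max}))=0\leq b$. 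This feasibility check is the one problem-specific step, and is where the integrality (positivity) of the transition costs is used; I expect it to be the only real obstacle, and it is mild. With the hypotheses verified, Corollary~\ref{cor:lagrang} yields, for every fixed $b$ and any $\eps'>0$, a polynomial-time $(\frac{r}{1+r}-\eps')$-approximation algorithm $\ca_b$ for $\rpim$.

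Finally I would apply Theorem~\ref{thm:reopt} with $r_1=r$ (the algorithm $\ca$) and $r_2=\frac{r}{1+r}-\eps'$ (the algorithm $\ca_b$), giving a $(1,r_1 r_2)$-reapproximation for $R(\Pi)$ in polynomial time; the outer binary search of Algorithm~\ref{alg:reopt} ranges over integer budgets in $[0,\sum_i\delta_i]$ and so contributes only a logarithmic factor. It then remains to simplify the profit factor:
\[
r_1 r_2 = r\left(\frac{r}{1+r}-\eps'\right) = \frac{r^2}{1+r}-r\eps' \geq \frac{r^2}{1+r}-\eps',
\]
where the inequality uses $r<1$. Choosing $\eps'=\eps$ yields the claimed $(1,\frac{r^2}{1+r}-\eps)$-reapproximation, completing the proof.
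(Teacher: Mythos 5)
Your proof is correct and follows essentially the same route as the paper's: compose Corollary~\ref{cor:lagrang} (giving the $(\frac{r}{1+r}-\eps')$-approximation algorithm $\ca_b$ for $\rpim$) with Theorem~\ref{thm:reopt} applied with $r_1=r$ and $r_2=\frac{r}{1+r}-\eps'$. The only differences are cosmetic: the paper sets $\eps'=\eps/r$ so that $r_1 r_2=\frac{r^2}{1+r}-\eps$ exactly, whereas you set $\eps'=\eps$ and use $r<1$, and your explicit verification of the $\lambda_{\max}$ feasibility hypothesis is detail the paper leaves implicit.
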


\begin{proof}
By Corollary \ref{cor:lagrang}, given $\eps' >0$, we have an $(\frac{r}{r+1} - \eps')$-approximation algorithm, ${\cal A}_b$, for $\rpim$,
for any $b \geq 0$. Thus, using Theorem \ref{thm:reopt} with algorithms ${\cal A}$ and ${\cal A}_b$, and
taking $\eps' = \frac{\varepsilon}{r}$, we obtain a
$(1,r \cdot (\frac{r}{1+r} - \frac{\varepsilon}{r}))$-reapproximation algorithm
for $R(\Pi)$.
\end{proof}

\subsection {\bf Proof of Theorem \ref{thm:reopt}}
\label{subsec:Proof}
We use in the proof the next lemmas.
\begin{lemma}
\label{lem:cost}
The solution output by Algorithm \ref{alg:reopt}
for an instance $I$
satisfies $\delta({\Sa})\leq {\delta(OPT)}$.
\end{lemma}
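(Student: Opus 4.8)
The plan is to establish Lemma~\ref{lem:cost} by analyzing the budget $b$ that the binary search in Step~3 of Algorithm~\ref{alg:reopt} settles on, and comparing it to the transition cost of the optimal solution $OPT$. Recall that $OPT$ is the solution of minimum transition cost among all solutions achieving profit $p(\mO)$, where $\mO$ is an optimal solution for the base problem $\Pi$. Since $\ca_b$ is an $r_2$-approximation for $R(\Pi,b)$ and $\ca$ is an $r_1$-approximation for $\Pi$, I first observe that the returned solution $\Sa = \ca_b(I)$ is feasible for $R(\Pi,b)$, and hence by the definition of $R(\Pi,b)$ it satisfies $\delta(\Sa) \leq b$. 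The entire task therefore reduces to showing that the budget $b$ chosen by the binary search satisfies $b \leq \delta(OPT)$.

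To prove $b \leq \delta(OPT)$, I would argue by the stopping condition of the binary search. The search selects the \emph{smallest} budget $b$ for which condition~(a), $p(\ca_b(I)) \geq r_2 \cdot Z$, holds, while condition~(b) ensures $p(\ca_{b-1}(I)) < r_2 \cdot Z$. The key step is to show that the budget $\delta(OPT)$ already satisfies condition~(a), i.e.\ that $p(\ca_{\delta(OPT)}(I)) \geq r_2 \cdot Z$; minimality of $b$ then forces $b \leq \delta(OPT)$. To see that $\delta(OPT)$ satisfies~(a), note that $OPT$ is itself a feasible solution for the budgeted problem $R(\Pi, \delta(OPT))$ (its transition cost is exactly $\delta(OPT)$), and its profit is $p(OPT) = p(\mO)$, which is the optimal profit for $\Pi$. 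Hence the optimal profit for $R(\Pi, \delta(OPT))$ is at least $p(\mO)$. Since $\ca_b$ is an $r_2$-approximation, $p(\ca_{\delta(OPT)}(I)) \geq r_2 \cdot p(\mO)$. It remains to compare this with $r_2 \cdot Z$: since $Z = p(\ca(I))$ and $\ca$ is an $r_1 \leq 1$ approximation for $\Pi$, we have $Z \leq p(\mO)$, and therefore $r_2 \cdot p(\mO) \geq r_2 \cdot Z$, establishing condition~(a) at budget $\delta(OPT)$.

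Combining these, the binary search returns $b \leq \delta(OPT)$, and since $\delta(\Sa) \leq b$ by feasibility, we conclude $\delta(\Sa) \leq \delta(OPT)$, which is exactly the claim of the lemma.

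The main obstacle I anticipate is a subtle monotonicity issue: the binary search in Step~3 implicitly assumes that the predicate ``$p(\ca_b(I)) \geq r_2 \cdot Z$'' is monotone in $b$, so that a well-defined smallest feasible budget exists and conditions~(a) and~(b) can hold simultaneously. Since the optimal budgeted profit $p(\co_b)$ is non-decreasing in $b$, and we may assume (or arrange) the approximation algorithm $\ca_b$ to be monotone in the same sense, this should hold, but it is the one place where care is needed to ensure the binary search is well-posed. I would make this monotonicity assumption explicit, noting that $p(\co_b)$ is non-decreasing in $b$ because enlarging the budget only enlarges the feasible set of $R(\Pi,b)$, so that once the threshold $r_2 \cdot Z$ is first crossed at some budget it remains crossed for all larger budgets.
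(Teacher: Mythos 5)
Your core calculation is the same as the paper's: $OPT$ is feasible for the budgeted problem at budget $\delta(OPT)$, so the optimum of $R(\Pi,\delta(OPT))$ equals $p(\mO)$, whence $p(\ca_{\delta(OPT)}(I)) \geq r_2\, p(\mO) \geq r_2 \cdot Z$, and the returned solution has transition cost at most the chosen budget. The gap is in how you get from there to $b \leq \delta(OPT)$. You invoke the \emph{minimality} of $b$ among budgets satisfying condition (\ref{alg:profit}), and you patch the resulting well-posedness problem by ``assuming or arranging'' that $b \mapsto p(\ca_b(I))$ is monotone. That assumption is neither guaranteed nor arrangeable from the hypotheses: $\ca_b$ is only promised to return a value at least $r_2\, p(\orpim)$, and although the optimum $p(\orpim)$ is non-decreasing in $b$, the \emph{approximation's} value can decrease as the budget grows --- at budget $b$ it may return nearly $p(\co_b)$, while at a budget $b' > b$ it may return only $r_2\, p(\co_{b'})$, which can be smaller. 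So your justification conflates the optimum with the algorithm's output, and Step \ref{alg:calculate} is only guaranteed to deliver \emph{some} crossing point, i.e., a $b$ satisfying (\ref{alg:profit}) and (\ref{alg:budget}), not the smallest budget satisfying (\ref{alg:profit}).

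The paper closes exactly this hole using only the ingredients you already have, with no monotonicity assumption on $\ca_b$: run your feasibility argument not just at $\delta(OPT)$ but at every $b^* \geq \delta(OPT)$ ($OPT$ remains feasible when the budget grows), so condition (\ref{alg:profit}) holds at \emph{every} budget $\geq \delta(OPT)$. Contrapositively, any budget at which (\ref{alg:profit}) fails is strictly smaller than $\delta(OPT)$. The algorithm's output satisfies (\ref{alg:budget}), i.e., condition (\ref{alg:profit}) fails at $b-1$, so $b-1 < \delta(OPT)$; since budgets and transition costs are integers, $b \leq \delta(OPT)$, and $\delta(\Sa) \leq b$ finishes the proof. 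This argument is valid for whichever crossing point the binary search returns, which is why the paper needs no claim about the predicate's behavior below $\delta(OPT)$. Your proof becomes correct (and essentially identical to the paper's) once you replace the minimality step by this contrapositive-plus-integrality step.
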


\begin{proof}
Let $OPT$ be a solution of minimum transition cost, among those that yield an optimal profit for $\Pi$,
and let $b^* \geq \delta(OPT)$.
By definition, $OPT$ is a valid solution of
$\Pi(R,b^*)$.
Hence, the optimal profit of $\Pi(R,b^*)$ is
$p(OPT)$, and we have
$$p(\ca_{b^*}(I)) \geq r_2 \cdot p(OPT) \geq r_2 \cdot Z.$$
It follows that, for any budget $b$ satisfying $\ca_b(I) < r_2 \cdot Z$, we have $b <\delta(OPT)$.
By Step (\ref{alg:budget}), the algorithm selects a budget $b$ such that
$p(\ca_{b-1}(I)) < r_2\cdot Z$. Hence, $b-1<\delta(OPT)$. Since
$b$ and $\delta(OPT)$ are integers, we have that $b\leq \delta(OPT)$.
\end{proof}

\begin{lemma}
\label{lem:profit}
The profit of $S_{\ca}$ satisfies
$p(S_{\ca}) \geq r_1 \cdot r_2\cdot{p(OPT)}$.
\end{lemma}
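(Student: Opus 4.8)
The plan is to chain together the two approximation guarantees that feed into Algorithm~\ref{alg:reopt}, since the profit bound falls out almost immediately from the algorithm's acceptance condition. First I would recall that the returned solution is $\Sa = \ca_b(I)$ for the budget $b$ selected in Step~\ref{alg:calculate}, and that by the acceptance condition (\ref{alg:profit}) this budget satisfies $p(\ca_b(I)) \geq r_2 \cdot Z$, where $Z = p(\ca(I))$ is the value computed from the $r_1$-approximation for the base problem.

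Next I would bound $Z$ from below. Since $\ca$ is an $r_1$-approximation algorithm for $\Pi$ and $\co$ is an optimal solution for $\Pi$, we have $Z = p(\ca(I)) \geq r_1 \cdot p(\co)$. The key identification is that $p(OPT) = p(\co)$: by definition $OPT$ is a solution of minimum transition cost among all solutions whose profit equals $p(\co)$, so in particular $OPT$ attains the optimal base-problem profit. Combining the two observations gives $Z \geq r_1 \cdot p(OPT)$.

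Putting the pieces together yields the claim in a single line:
\[
p(\Sa) = p(\ca_b(I)) \geq r_2 \cdot Z \geq r_2 \cdot r_1 \cdot p(OPT) = r_1 r_2 \cdot p(OPT).
\]

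The profit inequality itself is therefore routine; the part I expect to be the genuine obstacle is justifying that Step~\ref{alg:calculate} actually returns a budget $b$ for which condition (\ref{alg:profit}) holds, i.e.\ that the binary search is well defined. This amounts to exhibiting \emph{some} budget whose approximate profit exceeds the threshold $r_2 Z$: taking any $b^* \geq \delta(OPT)$ makes $OPT$ feasible for $\rpim$, so the optimum of $\rpim$ is at least $p(OPT)$ and hence $p(\ca_{b^*}(I)) \geq r_2\, p(OPT) \geq r_2 Z$ — exactly the feasibility fact already invoked in the proof of Lemma~\ref{lem:cost}. Once the search is known to succeed, the acceptance condition guarantees inequality (\ref{alg:profit}) for the returned $b$, and the displayed chain completes the proof.
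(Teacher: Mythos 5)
Your proof is correct and follows essentially the same route as the paper's: use the acceptance condition of Step (\ref{alg:profit}) to get $p(\Sa) \geq r_2 \cdot Z$, then the $r_1$-approximation guarantee (together with the fact that $p(OPT)=p(\co)$) to get $Z \geq r_1 \cdot p(OPT)$, and chain the two inequalities. Your additional verification that the binary search actually terminates with a valid budget is a useful extra observation, but it reuses exactly the feasibility argument the paper places in the proof of Lemma \ref{lem:cost}, so it does not constitute a different approach.
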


\begin{proof}
In Step (\ref{alg:profit}), Algorithm \ref{alg:reopt} selects a solution of profit at least
$r_2\cdot Z$. Also,
$Z$ is an $r_1$-approximation for $\Pi$; therefore, $Z \geq r_1\cdot p(OPT)$.
This yields the statement of the lemma.
\end{proof}

\begin{lemma}
\label{lem:time}
Algorithm \ref{alg:reopt} has polynomial running time.
\end{lemma}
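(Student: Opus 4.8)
The plan is to decompose the running time of Algorithm \ref{alg:reopt} into its three contributions: the single call to $\ca$ in Step 2, the calls to $\ca_b$ made during the binary search in Step 3, and the $O(1)$ arithmetic (comparisons against $r_2 \cdot Z$) performed per iteration. Both $\ca$ and $\ca_b$ are polynomial-time approximation algorithms, as is standard for approximation algorithms and as is implicit in the ``polynomial time'' claim of Theorem \ref{thm:reopt}; hence each individual invocation costs time polynomial in the input size. It therefore suffices to bound the number of iterations of the binary search by a polynomial in the input size.

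First I would bound the range over which the search operates. Since each transition cost satisfies $\delta_i \in \mathbb{N}$, every feasible solution $s \in U$ has transition cost $\delta(s) = \sum_{i \in s} \delta_i \leq B$, where $B := \sum_{i=1}^{n} \delta_i$ is the total transition cost of all items. Consequently, for every budget $b \geq B$ the constraint $\delta(s) \leq b$ in \eqref{reopt_prb:general_constraint} is vacuous, so $\rpim$ coincides with the unconstrained base problem and $p(\ca_b(I))$ stops changing with $b$. The threshold budget satisfying conditions 3(a)--3(b) can thus be sought within the integer interval $\{0, 1, \ldots, B\}$, and the search never needs to probe budgets beyond $B$.

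Next I would count the iterations. A binary search over an integer interval of length $B$ terminates after $O(\log B)$ steps, each performing a single call to $\ca_b$ plus a constant number of comparisons. Because the $\delta_i$ are encoded in binary in the input, we have $\log B = \log\left(\sum_{i=1}^{n} \delta_i\right) \leq \log n + \max_i \log \delta_i$, which is bounded by the input size. Hence the binary search makes only polynomially many calls to $\ca_b$, and the total running time is $O(1)$ calls to $\ca$ plus $O(\log B)$ calls to $\ca_b$ together with polynomial overhead, which is polynomial overall. The only point needing care -- and the main, if modest, obstacle -- is precisely this range bound: one must argue that $B = \sum_i \delta_i$ is an adequate upper endpoint for the search and that it is $\log B$, rather than $B$ itself, that governs the iteration count, which rests on the integrality of the $\delta_i$ and their binary encoding in the input.
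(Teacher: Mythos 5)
Your proposal is correct and follows essentially the same route as the paper's own proof: bound the budget search range by $b_{max} = \sum_i \delta_i$, observe that binary search then requires only $O(\log b_{max})$ calls to $\ca_b$, and note that $\log b_{max}$ is polynomial in the input size because the costs are encoded in binary. Your additional remarks --- that the constraint becomes vacuous for $b \geq b_{max}$ and that integrality of the $\delta_i$ justifies the integer search grid --- merely make explicit details the paper leaves implicit.
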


\begin{proof}
The algorithm proceeds in three steps.
Step \ref{alg:approximate} is polynomial since $\ca$ runs in
polynomial time. In Step \ref{alg:calculate} we search over all budgets
$0 \leq b \leq b_{max} = \sum_{a \in I} \delta(a)$. While $b_{max}$
may be arbitrarily large,
$log(b_{max})$ is polynomial in the input size, and indeed Algorithm \ref{alg:reopt} calls $\ca_b$ $O(log(b_{max}))$ times.
\end{proof}

Combining the above lemmas, we have the statement of the theorem.
\qed

\subsection{A Reapproximation Algorithm for SRAP}
We now show how to use Algorithm \ref{alg:reopt} to
 obtain a $(1,\alpha)$-reapproximation algorithm for SRAP, for some $\alpha \in (0,1)$. Recall, that an input for the {\em real-time scheduling problem}
consists of a set
${\cal A} = \{ A_1, \ldots , A_m \}$
of \emph{activities},
where each activity consists of a set
of \emph{instances};
an instance $\cI \in A_j$ is defined
by a half open time interval $[s(\cI),e(\cI))$ in which the instance
can be scheduled ($s(\cI)$ is the start time, and $e(\cI)$ is
the end time), and a profit $p(\cI) \geq 0$.
A schedule is \emph{feasible} if it contains
at most one instance of each activity, and for any $t \geq 0$, at
most one instance is scheduled at time $t$.
The goal is to find a feasible schedule of a subset of the activities that maximizes the total profit (see, e.g., \cite{BB00}).
Let $\Pi$ be the real-time scheduling problem. Then SRAP can be cast as $R(\Pi)$, the reoptimization version of $\Pi$.

Now, given budgeted SRAP, $\Gamma_b = R(\Pi, b)$, in which the transition cost is bounded by $b$, for some $b \geq 0$,
we can write $\Gamma_b$ in the form

\[
\Gamma_b:~ \max_{S \in X} ~f(S)~~~
\]
\negA
\negA
\begin{equation}
\label{eq:subset_constraint}
\mbox{subject to:} ~ w(S) \leq b,
\end{equation}
where $X=\{\mbox{all feasible operation schedules} \}$, and $w(S)=\delta(S)$. The Lagrangian relaxation of $\Gamma_b$ is
$\Gamma_b (\lambda): ~\max_{S\in X}  {f(S) - \lambda\cdot w(S),}$.
We note that $\Gamma_b (\lambda)$ yields an instance of the real-time scheduling problem, $\Pi$.
Our base problem, $\Pi$, can be approximated within factor $1/2$ \cite{BB00}.
As shown in Section \ref{sec:bba}, budgeted real-time scheduling admits a $(\frac{1}{3} -\eps)$-approximation.
The next result follows from Theorem \ref{thm:reopt}.

\begin{theorem}
\label{thm_srap}
There is a polynomial-time $(1, 1/6 -\eps)$-reapproximation algorithm for SRAP.
\end{theorem}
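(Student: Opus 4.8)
The plan is to obtain the result as a direct instantiation of the machinery already developed in this section, with the real-time scheduling problem playing the role of the base problem $\Pi$. First I would make the reduction explicit: SRAP is exactly $R(\Pi)$ where $\Pi$ is the (unbudgeted) real-time scheduling problem, and the interval-selection structure of $\Pi$ makes it a subset selection problem in the sense of Section~\ref{sec:rounding} (the domain $X$ of feasible schedules is a lower ideal, and the profit $f$ is linear and non-decreasing). The two ingredients needed to feed Corollary~\ref{cor:reoptlag} are then (i) an $r$-approximation for $\Pi$ itself, and (ii) the structural fact that the Lagrangian relaxation of the budgeted problem $\Gamma_b = R(\Pi,b)$ is again an instance of $\Pi$.

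For ingredient (i), the real-time scheduling problem admits a $\frac{1}{2}$-approximation by \cite{BB00}, so I take $r=\frac{1}{2}$. For ingredient (ii), I would observe — as the surrounding text already notes — that writing $\Gamma_b$ with the linear transition-cost constraint $w(S)=\delta(S)\le b$ gives Lagrangian relaxation $\Gamma_b(\lambda)=\max_{S\in X} f(S)-\lambda\,w(S)$, which (after deleting intervals of non-positive relaxed profit) is precisely an instance of real-time scheduling, and hence is $\frac{1}{2}$-approximable. With both conditions verified, Corollary~\ref{cor:reoptlag} applies directly and yields, for every $\eps>0$, a polynomial-time $\bigl(1,\frac{r^2}{1+r}-\eps\bigr)$-reapproximation for $R(\Pi)$. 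Substituting $r=\frac{1}{2}$ gives $\frac{r^2}{1+r}=\frac{1/4}{3/2}=\frac{1}{6}$, which is exactly the claimed bound $\bigl(1,\tfrac{1}{6}-\eps\bigr)$.

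It is worth unpacking why this number comes out as $\frac{1}{6}$, since the argument routes through two nested approximation losses that I would spell out for clarity. The budgeted reoptimization problem $\Gamma_b=R(\Pi,b)$ is approximated within $\frac{r}{r+1}-\eps=\frac{1}{3}-\eps$ via the Lagrangian-relaxation technique of Section~\ref{sec:rounding} (this is the same $\frac{1}{3}-\eps$ bound recorded for budgeted real-time scheduling in Theorem~\ref{thm:bba}, with the transition cost $\delta$ in the role of the instance cost), so I set $r_2=\frac{1}{3}-\eps$; together with $r_1=\frac{1}{2}$ for the base problem, Theorem~\ref{thm:reopt} produces a $\bigl(1,r_1r_2\bigr)$-reapproximation, and $r_1r_2=\frac{1}{2}\bigl(\frac{1}{3}-\eps\bigr)=\frac{1}{6}-\frac{\eps}{2}$. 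A rescaling of the internal accuracy parameter then absorbs the factor and delivers $\frac{1}{6}-\eps$. Polynomial running time is inherited from Theorem~\ref{thm:reopt} (equivalently, Lemma~\ref{lem:time}), since each call to the budgeted approximation routine runs in polynomial time and Algorithm~\ref{alg:reopt} makes only $O(\log b_{\max})$ such calls.

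The step I would treat most carefully is confirming that $\Gamma_b$ actually attains the stronger bound $\frac{1}{3}-\eps$ rather than the weaker arbitrary-weight bound $\frac{r}{2r+1}-\eps=\frac{1}{4}-\eps$ of Theorem~\ref{lemma:general_case}: this requires the enumeration-based Theorem~\ref{thm:enumeration_ratio}, whose running time is polynomial only when the universe has polynomial size. Here the transition costs $\delta(\cI)\in\mathbb{N}$ are arbitrary, so the enumeration route is genuinely needed; but because SRAP is defined with each patient $j$ having a set $\aj$ of \emph{fixed} candidate intervals (as opposed to continuous time windows), the universe $U=\bigcup_j \aj$ is of polynomial size, and the residual problem $\Gamma_T$ and its relaxation remain real-time scheduling instances. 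Hence the enumeration is polynomial, the $\frac{1}{3}-\eps$ bound holds, and the composition goes through; the remainder of the argument is routine assembly.
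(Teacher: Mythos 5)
Your proposal is correct and follows essentially the same route as the paper: cast SRAP as $R(\Pi)$ with $\Pi$ the real-time scheduling problem, take $r_1=\frac{1}{2}$ from \cite{BB00}, note that the Lagrangian relaxation of $\Gamma_b=R(\Pi,b)$ is again a real-time scheduling instance so that the budgeted problem admits a $(\frac{1}{3}-\eps)$-approximation, and compose via Theorem \ref{thm:reopt} (equivalently, Corollary \ref{cor:reoptlag}) to get $(1,\frac{1}{6}-\eps)$. Your additional check that the enumeration-based bound of Theorem \ref{thm:enumeration_ratio} is polynomial here---because SRAP's universe of candidate intervals is discrete and of polynomial size, unlike the continuous-window case---is a detail the paper leaves implicit, and it strengthens rather than changes the argument.
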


\comment{
We derive below a $(1,\alpha)$-reapproximation algorithm for the surgery room allocation problem, using the following steps.
\begin{enumerate}
  \item Find the base problem.
  \item Find an $r_1$-aproximation algorithm for the base problem
  \item Find an $r_2$-aproximation algorithm for the budgeted version of the base problem
\end{enumerate}

We note that the base problem is the discrete version of {\em real-time scheduling}, for which there is a $1/2$-approximation algorithm \cite{BB00}.
a  $(1, \frac{1}{6} - \eps)$-reapproximation algorithm for the surgery room allocation problem.
}

\comment{
\mysubsection {\bf Cloud Provider Problem}
Cloud computing is a term used to refer to a model of network computing where a program or application runs on a connected server or servers rather than on a local computing device.
Nowadays many companies in the industry provides a growing service of cloud computing and cloud storage (see. e.g., Google Cloud Platform [cloud.google.com], Amazon Web Services [aws.amazon.com] , Microsoft Azure [azure.microsoft.com], IBM [www.ibm.com/cloud-computing], HP [www.hpcloud.com], and many more). The provider delivers an infrastructure that enables clients to perform Their own services on remote virtual machines.

In the cloud provider reoptimization problem, there are $n$ hypervisors each with computing power of $C_i$, $m$ virtual machines each requires $r_j$ computing units and have profit of $p_j$, and migration cost of $\delta_{i,j}$ (for the reassignment of machine $j$ to server (hypervisors) $i$).

A feasible solution for the problem is collection of assignment $(i,j)$ denoted by $S$ such that  each virtual machine $j$ appears at most in one assignment in $S$ and
that for each hypervisor $i$ the computing units required for the assignment
 fit into its computing power of the hypervisor- $$\sum_{j| (i,j) \in S } r_j \leq C_i$$

 The universe $U$ of the problem is the collection of all feasible solutions,
 the profit function of the problem is
 $$p(S)=\sum_{ (i,j)\in S }  p_j$$ and
 the transition cost function is $$\delta(S) = \sum_{ (i,j)\in S }  \delta_j$$

\begin{theorem}
\label{thm:cloud}
	There is a  $(1,1/2-\varepsilon)$-reapproximation algorithm for the "Cloud Provider Problem".
\end{theorem}
\label{subsec:cloud}
\mysubsubsection {\bf Proof of Theorem \ref{thm:cloud}}
 In order to construct an algorithm to the "Cloud Provider Problem"
 We first show that the base problem of the problem is the
 Multiple Knapsack Problem [Lemma \ref{lma:mkp}],
 then we show that $\Gamma(R(\Pi,m))$ is an instance of the base
 problem [Lemma \ref{lma:mkplag}], from \cite{MKP05} We get a PTAS for the base problem, with that We can use Corollary \ref{cor:reoptlag} to find a $(1,1/2-\varepsilon)$-reapproximation for the "Cloud Provider Problem".

\begin{lemma}
\label{lma:mkp}
The "Cloud Provider Problem" base problem is an instance of the "Multiple Knapsack Problem".
\end{lemma}
\begin{proof}
In the "Cloud Provider Problem" base problem we need to maximize
the profit of the allocation of the virtual machines on the hypervisors.
There are $n$ hypervisors each with computing power of $C_i$, $m$ virtual machines each requires $r_j$ computing units and have profit of $p_j$. and that is the "Multiple Knapsack Problem" (MKP).

\begin{eqnarray}
\label{prb:general_constraint}
\Pi:& {\max} & \sum_{i}{\sum_{j}{x_{i,j}\cdot p_j}} \\
\nonumber
&\mbox{\bf subject to:}\\
&\forall \mbox{ hypervisors } i & \sum_{j}{x_{i,j}\cdot r_j} \leq C_i\\
&\forall \mbox{ virtual machines } j & \sum_{i}{x_{i,j}} \leq 1\\
\nonumber
& & x_{i,j}\in \{0,1\}
\end{eqnarray}

\end{proof}

\begin{lemma}
\label{lma:mkplag}
The Lagrangian relaxation of $R($MKP$,m)$ a.k.a, $\Gamma(R($MKP$,m))$ is an instance of MKP.
\end{lemma}
\begin{proof}
\begin{eqnarray}
\label{prb:general_constraint}
R(\mbox{MKP},b):& {\max} & \sum_{i}{\sum_{j}{x_{i,j}\cdot p_j}} \\
\nonumber
&\mbox{\bf subject to:}\\
&\forall \mbox{ hypervisors } i & \sum_{j}{x_{i,j}\cdot r_j} \leq C_i\\
&\forall \mbox{ virtual machines } j & \sum_{i}{x_{i,j}} \leq 1\\
& & \sum_{i}{\sum_{j}{\delta_{i,j}}} \leq b\\
\nonumber
& & x_{i,j}\in \{0,1\}
\end{eqnarray}
The Lagrangian relaxation of $R(MKP,m)$ is:

\begin{eqnarray}
\Gamma(R(\mbox{MKP},b)):& {\max} & \sum_{i}{\sum_{j}{(x_{i,j} - \lambda \cdot \delta_{i,j})\cdot p_j}} \\
\nonumber
&\mbox{\bf subject to:}\\
&\forall \mbox{ hypervisors } i & \sum_{j}{x_{i,j}\cdot r_j} \leq C_i\\
&\forall \mbox{ virtual machines } j & \sum_{i}{x_{i,j}} \leq 1\\
\nonumber
& & x_{i,j}\in \{0,1\}
\end{eqnarray}
And the is an instance of MKP.
\end{proof}

\begin{lemma}
\label{lma:cloudfactor}
The Reapproximation factor of the algorithm is $(1,1/2-\varepsilon)$.
\end{lemma}
\begin{proof}
MKP has a PTAS (shown in \cite{MKP05}) thus by Corollary \ref{cor:reoptlag} the
aproximization factor is
$$(1,\frac{(1-\varepsilon)^2}{1+1-\varepsilon})=(1,\frac{1}{2}-\varepsilon')$$
\end{proof}

 \mysubsection {\bf The V.O.D Problem}
 {\em Video on Demand (VoD)} services have become common in
library information retrieval,
entertainment and commercial applications.
In a VoD system, clients are connected through a network
to a set of servers which hold a large library of video programs.
Each client can choose a program he wishes to view and
the time he wishes to view it. The service should be provided
within a small latency and guaranteeing an almost constant
transfer rate of the data.
The transmission of a movie to a client requires the allocation of
unit load capacity (or, a {\em data stream}) on a server which holds a
copy of the movie.

Since video files are typically large, it is impractical to store
copies of all movies on each server. Moreover, as observed in large
VoD systems, the distribution of accesses to movie files is highly skewed. The goal is to store the movie
files on the servers in a way which enables to satisfy as many
client requests as possible, subject to the storage and load
capacity constraints of the servers.

Formally, suppose that the system consists of $M$ video program
files and $N$ servers. Each movie file $j$, $1 \leq j \leq M$, is
associated with a predicted demand of $p_j$ users, size $s_j$, and $\delta_{i, j}$ the migration cost to server $i$. Each server $i$, $1 \leq i \leq N$,
is characterized by $(i)$ its storage capacity, $S_i$, that is the
size of the disc of the server, and $(ii)$
its load capacity, $L_j$, which is the number of data streams that
can be read simultaneously from that server.

In \cite{STT09} they show that by dividing the movies into two groups, Big and Small, numerate the big movies and solving a restricted version of the problem, in witch each movies can be allocated on a single server, for the small movies will solve the problem.
Where we will show how to solve the allocation of the small movies,
called the Reoptimization Allocation Problem (denoted by $R(AP)$).
By doing so we can eliminate the constraints that in \cite{STT09}
(like the assumptions that all the files have the same size and that the configuration has an optimal solution that
 can allocate all users).

\mysubsubsection {\bf Solving the Reoptimization Allocation Problem ($R($AP$)$)}
 In $R($AP$)$ We are required to find an profit-optimal feasible
 allocation that minimize the transition cost.
 The next linear program describe the problem:

\begin{eqnarray}
\label{prb:general_constraint}
R(\mbox{AP}):& {\min} & \sum_{i}{\sum_{j}{\delta_{i,j}\cdot x_{i,j}}} \\
\nonumber
&\mbox{\bf subject to:}\\
& & \sum_{i}{\sum_{j} p_j \cdot {x_{i,j}}} \geq \co. \\
&\forall \mbox{ server } i & \sum_{j}{ s_j \cdot x_{i,j}} \leq S_i\\
&\forall \mbox{ server }  i & \sum_{j}{ p_j \cdot x_{i,j}} \leq L_i\\
&\forall \mbox{ movie } j & \sum_{i}{x_{i,j}} \leq 1\\
\nonumber
& & x_{i,j}\in \{0,1\}
\end{eqnarray}

The base problem (the Allocation Problem (AP)) is to find an
optimal-profit feasible allocation.

\begin{eqnarray}
\label{prb:general_constraint}
\mbox{AP}:& {\max} & \sum_{i}{\sum_{j} p_j \cdot {x_{i,j}}} \\
\nonumber
&\mbox{\bf subject to:}\\
&\forall \mbox{ server } i & \sum_{j}{ s_j\cdot x_{i,j}} \leq S_i\\
&\forall \mbox{ server }  i & \sum_{j}{ p_j\cdot x_{i,j}} \leq L_i\\
&\forall \mbox{ movie } j & \sum_{i}{x_{i,j}} \leq 1\\
\nonumber
& & x_{i,j}\in \{0,1\}
\end{eqnarray}

This problem is the "Multiple bi-dimensional Knapsack Problem",
with $N$ Knapsacks (each server is a knapsacks).

Now all that remain to show is that the Lagrangian relaxation of $R(AP,m)$, a.k.a, $\Gamma(R(AP,m))$ is an instance of the "Multiple bi-dimensional Knapsack Problem".

\begin{eqnarray}
\label{prb:general_constraint}
R(\mbox{AP},b):& {\max} & \sum_{i}{\sum_{j}{ p_j\cdot x_{i,j}}} \\
\nonumber
&\mbox{\bf subject to:}\\
& & \sum_{i}{\sum_{j}{\delta_{i,j}\cdot x_{i,j} }} \leq b\\
&\forall \mbox{ server } i & \sum_{j}{s_j\cdot x_{i,j} } \leq S_i\\
&\forall \mbox{ server }  i & \sum_{j}{ p_j\cdot x_{i,j}} \leq L_i\\
&\forall \mbox{ movie } j & \sum_{i}{x_{i,j}} \leq 1\\
\nonumber
& & x_{i,j}\in \{0,1\}
\end{eqnarray}

\begin{eqnarray}
\label{prb:general_constraint}
\Gamma(R(\mbox{AP},b)):& {\max} & \sum_{i}{\sum_{j} {(p_j-\lambda\cdot\delta_{i,j})\cdot x_{i,j}}}
\\
\nonumber
&\mbox{\bf subject to:}\\
&\forall \mbox{ server } i & \sum_{j}{s_j\cdot x_{i,j} } \leq S_i\\
&\forall \mbox{ server }  i & \sum_{j}{ p_j\cdot x_{i,j}} \leq L_i\\
&\forall \mbox{ movie } j & \sum_{i}{x_{i,j}} \leq 1\\
\nonumber
& & x_{i,j}\in \{0,1\}
\end{eqnarray}

Thus $\Gamma(R(\mbox{AP},m))$ is an instance of the M2DKP. \\

In order the show the reapproximation ratio for $R(AP)$ we need
a good approximation algorithem for the "Multiple bi-dimensional Knapsack Problem"\\

We can get a $(1-\frac{1}{e} -\varepsilon)$-approximation for the
"Multiple bi-dimensional Knapsack Problem" using
the approximation for SAP (in \cite{SAP11} with
\footnote{The single  bi-dimensional Knapsack Problem have a PTAS}
$\beta = (1-\varepsilon)$ ), thus the Reoptimization Allocation Problem
have a $(1,\frac{(1-\frac{1}{e} -\varepsilon)^2}{1+(1-\frac{1}{e} -\varepsilon)})	=
	(1,\frac{(1-\frac{1}{e} )^2}{1+1-\frac{1}{e}}-\varepsilon')=
	(1,\frac{1- \frac{2}{e} + \frac{1}{e^2}}{2-\frac{1}{e}}-\varepsilon')=
	(1,\frac{e^2-2e+1}{2e^2-e}-\varepsilon')=(1,\frac{2.95249}{12.05983}-\varepsilon')=(1,0.24482-\varepsilon')$-reapproximation algorithm.

\begin{theorem}
\label{thm:vod}
	There is a  $(1,0.24-\varepsilon)$-reapproximation algorithm for the
"Multiple bi-dimensional Knapsack Problem".
\end{theorem}
}



\comment{
\appendix
\section{Lagrangian Relaxation -- An Example}
\comment{
In this section we show the limitation of the Lagrangian relaxation technique as
a general technique for maximization subset selection problems with linear constraint. We present a problem and a set of inputs, such that by obtaining
approximate solutions  the Lagrangian relaxation of the problem with approximation ratio of $r$ (for the set of inputs), and trying to combine the solutions from the Lagrangian relaxation en order to obtain a solution to the original problem, no approximation ratio better than $\frac{r}{1+r}$ for the original problem can be achieved.
}

Consider the following problem.
Given is a base set of elements $A$, where each element $a \in A$
has a (non-negative integral) profit $p(a)$; also, given are three subsets of elements $A_1,A_2,A_3 \subseteq A$,
and a bound $k >1$. We need to select a subset $S\subseteq A$ of size at most $k$, such that $S \subseteq A_1$, or $S \subseteq A_2$, or $S \subseteq A_3$, and the total profit ($\sum_{a\in S} p(a)$) from elements in $S$
is maximized.
The problem can be easily interpreted as a subset selection problem,
by taking the universe
to be $U= A$, the domain $X$ consists of all the subsets $S$ of
$U$, such that
$S \subseteq A_1$ or $S \subseteq A_2$, or $S \subseteq A_3$. The
weight function is $w(S) = |S|$,
with the weight bound
$L=k$, and the profit of a subset $S$ is given by $f(S)= \sum_{a\in S} p(a)$.

The Lagrangian relaxation of the problem with parameter $\lambda$ is
$\max_{S\in X} f(S)-\lambda w(S)$.
Assume that we have an algorithm $\mathcal{A}$ which
returns a $r$-approximation for the Lagrangian relaxation of the problem.

For any $\frac{1}{2}>\delta >0$ and an integer $k>\frac{1}{r}+4$,
consider the following input:
\begin{itemize}
\item
$A_1 = \{a_1,\ldots,a_{k-1},b\}$, where $p(a_i) = \frac{1}{r}$ for
$1\leq i \leq k-1$, and $p(b) = k-1$.
\item
$A_2 =\{c\}$ where $p(c) = k +\delta$.
\item
$A_3 = \{d_1,\ldots,d_\ell \}$ where
$\ell = \ceil{\frac{(1+r)(k-1)}{\deltar}}$, and $p(d_i) = 1 +\delta$
for $1 \leq  i \leq \ell$.
\item
$U= A= A_1 \cup A_2 \cup A_3$, and the set $S$ to be chosen is of size at most $k$.
\end{itemize}

Denote the profit from a subset $S \subseteq U$ by $p(S)$, and the
profit in the Lagrangian
relaxation with parameter $\lambda$ by $p_\lambda(S)$.
Clearly, the subset $S = A_1$ is an optimal solution for the problem,
with the profit $p(A_1)=(k -1)\frac{1+r}{r}$.
Consider the possible solutions the algorithm $\mathcal{A}$ returns
for different values of $\lambda$:
\begin{itemize}
\item
For $\lambda <1$: the profit from any subset of $A_1$ is bounded by the
original profit of
$A_1$, given by $p(A_1)=(k -1)\frac{1+r}{r}$;
the profit from the set $S=A_3$ is equal to
$p(A_3)=(1+\delta- \lambda)\ell \geq
\delta \ell \geq (k-1)\frac{(1+r)}{r}$, i.e.,
$A_3$ has higher profit than $A_1$.

\item
For $1 \leq  \lambda \leq \frac{1}{r}$: the profit from any
subset of $A_1$ is bounded by the total profit of $A_1$ (all
elements are of non-negative profit in the relaxation).
Consider the difference:
\begin{eqnarray*}
r  \cdot p_\lambda(A_1) - p_\lambda(A_2)  &=&
r \left( k-1-\lambda +(k-1)(\frac{1}{r} - \lambda) \right) - (k -\lambda) =\\
&=& r k -r -r \lambda +k - - r \lambda k +r k -k + \lambda = (1-\lambda)(r k -1) -r \leq 0
\end{eqnarray*}
This implies that in case the optimal set is $A_1$ (or a subset of $A_1$),
 the algorithm $\mathcal{A}$ may choose the set $A_2$.
\item
For $\lambda > \frac{1}{r}$: the maximal profit from any subset of $A_1$
is bounded in this case by $\max\{k-1-\lambda,0\}$, whereas the profit
from $A_2$ is $\max\{k-\lambda,0\}$.
\end{itemize}

From the above, we get that $\mathcal{A}$ may return a subset of $A_2$
or $A_3$ for any value of $\lambda$.
However, no combination of elements of $A_2$ and $A_3$ yields a
solution for the original problem
with profit greater than $k(1+\delta)$.
This means that, by combining the solutions returned by
the Lagrangian relaxation, one cannot achieve approximation ratio better than $\frac{k(1+\delta)}{(k-1)(1+\frac{1}{r})}= \frac{r}{1+r}\cdot \frac{k(1+\delta)}{k-1}$.
Since $\frac{r}{1+r} \cdot \frac{k(1+\delta)}{k-1} \rightarrow \frac{r}{1+r}$  for $(k,\delta) \rightarrow (\infty,0)$,
one cannot achieve approximation ratio better than $\frac{r}{1+r}$.

\comment{
\appendix
\section{Lagrangian Relaxation -- An Example}
Consider the following subset selection problem, which generalizes
maximum coverage with packing constraint.
For some $k > 1$ and $\delta > 0$, let
$\ell = \lceil  \frac{2k}{\delta} \rceil$. Given is the universe
$$V= \Cup_{i=1}^k\{a_i^1, \ldots , a_i^{k-1}, b_i, c_i, d_1^\ell, \ldots , d_i^\ell \},
$$
where
$p(a_i^j)=1$ for all $1 \leq i \leq k$ and $1 \leq j  \leq k-1$;
$p(b_i)=k$, $p(c_i)= k+\delta$
for all $1 \leq i \leq k$, and $p(d_i^h)=1+\delta$, for all
$1 \leq i \leq k$, $1 \leq h  \leq \ell$.
All of the elements in $V$ have unit weights, i.e.,
$w(a_i^j)=w(b_i)=w(c_i)=w(d_i^h)$, for all $i,j$ and $h$.

Also, given is the following collection of subsets of the elements:
\begin{itemize}
\item
$D_1,...,D_k$, where $D_i= \{b_i,a_i^1,...,a_i^{k-1} \}$.
\item
$T_1,...,T_k$ where $T_i= \{c_i\}$.
\item
$R_1,...,R_k$ where $R_i= \{d_i^1,...,d_i^\ell\}$.
\end{itemize}

Let $G=(V,E)$ be a graph formed by the elements in
$V$, with the set of edges
$$E= \{(u,v): u \in D_i, v \in T_j\} \cup \{ (x,y): x \in T_i, y \in R_j
\}  \cup \{ (z,q): z \in D_i, q \in R_j,~~ 1 \leq i,j \leq k.$$

Given is a knapsack of capacity $B=k^2$. The problem $\Gamma$ is
to choose up to $k$ sets $U_1,...,U_k$ and an {\em independent set}
of elements $V' \subseteq \bigcup_{i} U_i$ such that
$\sum_{v\in V'} w(v) \leq k^2$, and the total value
$ \sum_{v \in V'} p(v)$ is maximized.

We note that it is possible to choose the sets $D_1,...,D_k$
and all of their elements. The value of this solution is
$k(2k-1)$. Thus, the value of an optimal
solution for $\Gamma$ is at least $k(2k-1)$.

We now show that for any $\lambda > 0$, using an algorithm which
solves $\Gamma(\lambda)$ for the above instance,
the two solutions returned are
$\{T_1,...,T_k\}$ with all their elements
(as a feasible solution for $\Gamma$)
and $\{R_1,...,R_k\}$ with all their elements (as the solution which
exceeds the capacity of the knapsack).

Consider the case where $\lambda  \in (0,1]$. In this case, the
value obtained in the Lagrangian relaxation
from $D_i$ is $(2k-1)(1-\lambda) < 2k $ and the value from
$R_i$ is $(1+\delta-\lambda)\cdot \ell \geq \ell \delta \geq 2k$, $1 \leq i \leq
k$. This implies
that any subset $D_i$ in the solution can be replaced by
$R_j$ to get higher  value.
It follows that an optimal solution for $\Gamma(\lambda)$ has no sets
$D_i$; due to the fact that any two vertices $v \in T_i$ and $u \in R_j$ are
adjacent in $G$, we get that an optimal solution would be
either $T_1,...,T_k$ or $R_1,...,R_k$.

Now, consider the case where $\lambda \geq 1$. Then the value obtained
from $D_i$ in $\Gamma(\lambda)$ is $k-\lambda$ (the elements $a_i^j$ have
negative values in this case), while the value obtained from $T_i$ is
$k+\delta-\lambda$, $1 \leq i \leq k$. This implies that
no optimal solution contains any set $S_i$ and, as before,
an optimal solution would be either $T_1,...,T_k$ or $R_1,...,R_k$.

Now, we note that the total value of $T_1,...,T_k$ is
$k^2 - \delta k$, and the ratio
$\frac{k^2 - \delta k }{2k^2 -k}  \rightarrow \frac{1}{2}$ for
$(k,\delta)\rightarrow (\infty,0^+)$.  For any $V' \subseteq \bigcup_i R_i$,
$~p(V') \leq k^2$, therefore the total value from $V'$
is at most $k^2 +\delta k^2$; thus, the ratio is
$\frac{k^2 +\delta k^2}{2k^2+1} \rightarrow \frac{1}{2}$
for $(k,\delta)\rightarrow (\infty,0^+)$. It follows that
either by choosing a subset of elements in the solution that
exceeds the capacity of the knapsack, or by taking the solution which
satisfies the capacity constraint, we cannot obtain
an approximation ratio better than $\frac{1}{2}$.
}

\section{Proof of Lemma \ref{lemma:card_greedy}}
We use in the proof the next result, due to Nemhauser et al. \cite{NWF78}.
\begin{lemma}
\label{lemma:greedy_increments}
A set function $f$ over a ground set $X$ is non-decreasing submodular iff
for all $S,T \subseteq X$ $f(T) \leq f(S) + \sum_{x \in T \setminus S}
\Delta_x f$,
where  $\Delta_xf= f(S \cup \{x \})- f(S)$.
\end{lemma}

{\noindent \bf Proof of Lemma \ref{lemma:card_greedy}:}
It is easy to show (e.g., by using induction) that for every element
$s \in S_2 \backslash S'$,
\begin{equation}
\label{eq:card_relative_density}
f(S'\cup \{x\})- f(S') < \frac{f(S')}{L},
\end{equation}
since $f$ is a
non-decreasing submodular set function.

Using \eqref{eq:card_relative_density} and the fact that $f$ is submodular,
we get that
\comment{
$$
f(S_2) \leq
f(S') + \sum_{s \in S_2\backslash S'} {f(S' \cup \{s\}) - f(S')}
\leq
f(S')+ (|S_2| - L ) \frac{f(S')}{L}
= f(S') \frac{|S_2|}{L}
$$
}
\begin{eqnarray*}
f(S_2) &\leq&
f(S') + \sum_{s \in S_2\backslash S'} {f(S' \cup \{s\}) - f(S')} \\
&\leq&
f(S')+ (|S_2| - L ) \frac{f(S')}{L}
= f(S') \frac{|S_2|}{L}
\end{eqnarray*}
Thus, using \eqref{eq:beta_appx}, we have that
\[
f(S') \geq f(S_2) \frac{L}{w(S_2)} \geq \beta \cdot \mO.
\]
\hspace*{\fill} $\Box$ \vskip \belowdisplayskip
}

\appendix
\section{Solving Multi-budgeted Subset Selection Problems}
\label{sec:multi_budgeted}
In the following we extend our technique as given in Section
\ref{sec:rounding} to handle subset selection problems with $d$ linear
constraints, for some $d >1$.
More formally, consider the problem:
\begin{eqnarray}
\label{prb:multi}
 \max_{S \in X} & f(S)~~~ \mbox{subject to:}  \\
\forall_{1\leq i \leq d}: &  w_i(S) \leq L_i, \nonumber
\end{eqnarray}
where $X$ is a lower ideal, and the functions $f$ and $w_i$ for $1 \leq i \leq d$ are
non-decreasing linear set function,
such that $f(\emptyset)=w_i(\emptyset)=0$.
This problem can be interpreted as the following subset selection problem with
a single linear constraint. Let
$X' = \left\{S \in X | \forall_{1\leq i \leq d-1}:~ w_i(S) \leq L_i \right\}$;
the linear constraint is $w_d(S) \leq L_d$, and the function $f$ remains as
defined above.
The Lagrangian relaxation of \eqref{prb:multi} has  the same form (after
removing in the relaxation elements with negative profits),
but the number of constraints is now $d-1$.
This implies that, by repeatedly applying the technique in
Section \ref{sec:enumeration},
we can obtain an approximation algorithm for \eqref{prb:multi}
from an approximation algorithm for the
non-constrained problem (in which we want to find $\max_{S\in X} f'(S)$, where
$f'$ is some linear function).
Thus, given an $r$-approximation algorithm for
the problem after ``relaxing" $d$ constraints, we derive
an $(\frac{r}{1+dr} - \eps)$-approximation algorithm
for \eqref{prb:multi}.
Note that there is a simple reduction\footnote{
Assume w.l.o.g that $L_i=1$ for every $1\leq i\leq d$, and
set the weight of an element $e$ to be $w_e = \max_{1\leq i \leq d} w_i(\{e\})$}
  of the problem in \eqref{prb:multi}
to the same problem with $d=1$, which yields
a $\frac{\rho}{d}$-approximation for \eqref{prb:multi},
given a $\rho$-approximation algorithm $\mathcal{A}$
for the problem with single constraint.
For sufficiently small $\eps > 0$, the ratio of $\frac{\rho}{1+(d-1)\rho} -
\eps$ obtained by repeatedly applying Lagrangian
relaxation and using the approximation algorithm $\mathcal{A}$ is better,
for any $\rho \in (0,1)$.

\end{document}